\newtheorem{theorem}{Theorem}
\newtheorem{lemma}{Lemma}
\newtheorem{remark}{Remark}
\newcommand{\ve}{\vspace{1ex}}
\begin{document}

\title[Periodic travelling waves and compactons
in granular chains]{Periodic travelling waves and compactons in granular chains}
\author{Guillaume James}
\address{Laboratoire Jean Kuntzmann, Universit\'e de Grenoble and CNRS,
BP 53 \\ 
38041 Grenoble Cedex 9, France.}
\email{Guillaume.James@imag.fr}
\date{\today}
\keywords{Granular chain, Hertzian contact, Hamiltonian lattice, periodic travelling wave, compacton, fully nonlinear dispersion.}
\subjclass[2000]{37K60, 70F45, 70K50, 70K75, 74J30}

\maketitle

\begin{center}
Laboratoire Jean Kuntzmann,\\
Universit\'e de Grenoble and CNRS,\\
BP 53, 38041 Grenoble Cedex 9, France.
\end{center}

\begin{abstract}
We study the propagation of an unusual type of periodic travelling waves in
chains of identical beads interacting via Hertz's contact forces.
Each bead periodically undergoes a compression phase followed by a free flight,
due to special properties of Hertzian interactions
(fully nonlinear under compression and vanishing in the absence of contact).
We prove the existence of such waves close to binary oscillations, and
numerically continue these solutions when their
wavelength is increased. In the long wave limit,
we observe their convergence towards shock profiles
consisting of small compression regions close to solitary waves,
alternating with large domains of free flight
where bead velocities are small. 
We give formal arguments to justify this
asymptotic behaviour, using a matching
technique and previous results concerning
solitary wave solutions. 
The numerical finding of such waves
implies the existence of compactons, i.e. compactly supported
compression waves propagating at a constant velocity, depending on the 
amplitude and width of the wave. The beads are stationary and
separated by equal gaps outside the wave, and each bead reached
by the wave is shifted by a finite distance during a finite time interval.
Below a critical wavenumber,
we observe fast instabilities of the periodic travelling waves
leading to a disordered regime.
\end{abstract}


\section{\label{intro}Introduction and main results}

Understanding wave propagation in granular media is
a fundamental issue in many contexts, e.g. 
to design shock absorbers \cite{sen,frater}, 
derive multiple impact laws \cite{hinch,acary,ma,liu1,liu2},
detect buried objects by acoustic sensing \cite{senm},
or understand possible dynamical mechanisms of earthquake triggering \cite{johnson}.
One of the important factors that influence wave propagation 
is the nature of elastic interactions between grains. 
According to Hertz's theory, the repulsive force $f$
between two identical initially tangent spherical beads 
compressed with a small relative displacement $\delta$
is $f(\delta )=k\, \delta^{\alpha}$ at leading order in $\delta$,
where $k$ depends on the ball radius and material properties and
$\alpha = 3/2$ (see figure \ref{boules}). 
This result remains valid for much more general geometries
(smooth non-conforming surfaces), and $\alpha$ can be even larger in the
presence of irregular contacts \cite{johnsonbook,fu}. 
The Hertz contact force has several properties that make the analytical study
of wave propagation more difficult than in classical systems of interacting particles :
the dependency of $f(\delta )$ on $\delta \approx 0$
is fully nonlinear for $\alpha >1$, $f^{\prime\prime}(0)$ is not defined
for $\alpha <2$, and no force is present when beads are not in contact. 
This makes the use of perturbative methods rather delicate,
since the latter often rely on nonlinear modulation of linear waves 
which are not present in such systems, and
usually require higher regularity of nonlinear terms.

\begin{figure}[h]
\begin{center}
\includegraphics[scale=0.1]{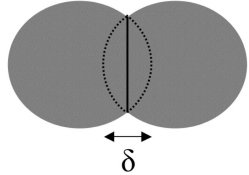}
\end{center}
\caption{\label{boules} 
Schematic representation of 
two identical and initially tangent spherical beads 
that are compressed and slightly flatten, 
the distance between their centers decreasing by $\delta \approx 0$.
}
\end{figure}

\ve

The simplest model in which these difficulties show up consists of a line
of identical spherical beads, in contact with their neighbours at a single point
when the chain is at rest. 
For an infinite chain of beads, the dynamical equations 
read in dimensionless form
\begin{equation}
\label{nc}
\frac{d^2 x_{n}}{dt^2} = 
V^\prime(x_{n+1}-x_n)-V^\prime(x_{n}-x_{n-1}),
\ \ \
n\in \mathbb{Z},
\end{equation}
where $x_{n}(t)\in \mathbb{R}$ is
the displacement of the $n$th bead from a reference position and
the interaction potential $V$ corresponds to Hertz contact forces.
It takes the form
\begin{equation}
\label{vhertz}
V(x)=\frac{1}{1+\alpha}\, |x|^{1+\alpha}\, H(-x) ,
\end{equation}
where $H$ denotes the Heaviside function vanishing on $\mathbb{R}^-$ and
equal to unity on $\mathbb{R}^+$ and $\alpha >1$ is a fixed constant.
System (\ref{vhertz}) is Hamiltonian with total energy
\begin{equation}
\label{ham}
{\mathcal H}=
\sum_{n\in \mathbb{Z}}{\frac{1}{2}\, (\frac{dx_{n}}{dt})^2+V(x_{n+1}-x_n)}
.
\end{equation}
For $\alpha = 3/2$, Nesterenko analyzed the propagation of compression
pulses in this system using a formal continuum limit and found approximate
soliton solutions with compact support \cite{neste1,neste2}
(see also \cite{ap,porter,sen} for more recent results and references).
As shown by MacKay \cite{mackay}
(see also \cite{ji}), exact solitary wave solutions of (\ref{nc}) exist since an existence 
theorem of Friesecke and Wattis \cite{friesecke} can be
applied to the chain of beads with Hertz contact forces
(see also \cite{stef} for an alternate proof). Moreover
these solitary waves have in fact a doubly-exponential decay \cite{chat,english,stef}
that was approximated by a compact support in Nesterenko's analysis.

\ve

Much more analytical results are available
on nonlinear waves in the granular chain 
when an external load $f_0$ is applied at both ends and
all beads undergo a small compression $\delta_0$ when
the system is at rest.
Around this new equilibrium state, the dynamical equations
correspond to a Fermi-Pasta-Ulam (FPU) lattice \cite{cam,gal,pankovbook},
that sustains in particular stable solitary waves with exponential decay
(see e.g. \cite{friesecke,pego,iooss,pego4,hoffman3}) 
and periodic travelling waves 
also known as ``nonlinear phonons"
\cite{filip,iooss,dreyer,pankovbook,herrmann}.
The existence and qualitative properties of
periodic travelling waves in granular chains are important elements in
the understanding of energy propagation and dispersive shocks 
in these systems, as shown in reference \cite{dreyer2} in the similar
context of FPU lattices. However,
the persistence of periodic travelling waves 
when $f_0 \rightarrow 0$ is not obvious because the sound velocity
(i.e. the maximal velocity of linear waves) vanishes
as $f_0^{1/6}$, and the uncompressed chain of beads is commonly
denoted as a ``sonic vacuum"  \cite{neste2}.

\ve

In this paper we show that periodic travelling waves with unusual properties exist
in granular chains under Hertz contacts without precompression. These waves
consist of packets of compressed beads alternating with 
packets of uninteracting ones, so that any given bead 
periodically switches between a free flight regime
and a contact regime where it interacts nonlinearly with
two or one neighbours.
Waves of this type have been numerically computed in reference \cite{sv}, for
periodic granular chains consisting of three or four beads.
Here we provide an existence theorem valid at wavenumbers close
to $\pi$ and proceed by numerical continuation when the wavenumber
goes from $\pi$ to the long wave limit $q\rightarrow 0$.
The limit $q=\pi$ corresponds to binary oscillations, i.e.
travelling waves with spatial period $2$ where nearest neighbours oscillate out of phase.
In the long wave limit, the periodic waves display small compression regions close to solitary waves,
alternating with large domains of free flight where bead velocities are small. 

\ve

Our existence result for periodic travelling waves close to binary oscillations
is described in the following theorem. 
In what follows, 
$C^k_{\rm{per}}(0,2\pi )$ denotes the classical
Banach space of $2\pi$-periodic and $C^k$ functions 
$u\, : \mathbb{R}\rightarrow \mathbb{R}$, endowed with 
the usual supremum norm taking into account
all derivatives of $u$ up to order $k$.

\begin{theorem}
\label{existence}
System (\ref{nc}) with interaction potential (\ref{vhertz})
admits two-parameter families of periodic travelling wave solutions
\begin{equation}
\label{twthm}
x_n^{\pm} (t; a,\mu )=a\,  u_\mu [(\pi + \mu )\, n \pm a^{\frac{\alpha -1}{2}} t ],
\end{equation}
parametrized by $a>0$ and $\mu \in {\mathcal V}$, where
${\mathcal V}$ denotes an open interval containing $0$. 
The function $u_\mu$ is odd, $2\pi$-periodic and satisfies the advance-delay differential equation
\begin{equation}
\label{adthm}
u_\mu^{\prime\prime} (\xi ) =
V^\prime (u_\mu (\xi +\pi + \mu)-u_\mu(\xi ))-V^\prime (u_\mu(\xi )-u_\mu(\xi -\pi - \mu)).
\end{equation}
The map $\mu \mapsto u_\mu$ belongs to $C^1 ({\mathcal V} , C^2_{\rm{per}}(0,2\pi ))$
and satisfies the symmetry property $u_{\mu}(\xi )=-u_{-\mu}(\xi +\pi)$.
The function $u_0$ is determined by the initial value problem
\begin{equation}
\label{int}
u_0^{\prime\prime} + W^\prime (u_0 )=0,
\end{equation}
\begin{equation}
\label{ci}
u_0 (0)=0, \ \ \
u_0^{\prime}(0)=p_0, \ \ \
p_0= 
(1+\alpha)^\frac{\alpha}{1-\alpha}
\Big[ \frac{\sqrt{\pi}\,  \Gamma{(\frac{1}{1+\alpha}+\frac{1}{2})}  }{ \Gamma{(\frac{1}{1+\alpha})}  } \Big]^\frac{1+\alpha}{1-\alpha} ,
\end{equation}
where 
$W(x)= \frac{2^\alpha}{1+\alpha}|x|^{1+\alpha}$
is a symmetrized Hertz potential and
$\Gamma{(x)}=\int_{0}^{+\infty}{e^{-t}\, t^{x-1}\, dt}$ denotes Euler's Gamma function.
Moreover, for $\mu <0$, $u_\mu$ is a linear function of $\xi$ on
an interval $[-\xi_1 (\mu ) , \xi_1 (\mu )]$ with
$\xi_1 (\mu )= |\mu |/2 + o(\mu )$. It takes the form
\begin{equation}
\label{linthm}
u_\mu (\xi ) = p_\mu \, \xi \mbox{ for all } \xi \in [-\xi_1 (\mu ) , \xi_1 (\mu )],
\end{equation}
with $p_\mu = p_0 +O(|\mu |)$.
When $\xi = (\pi + \mu )\, n \pm a^{\frac{\alpha -1}{2}} t 
\in (-\xi_1(\mu) + 2 k \pi,\xi_1 (\mu )+ 2 k \pi)$ ($k\in \mathbb{Z}$), 
the $n$th particle performs a free flight characterized by 
$$
x_{n+1}^{\pm} (t; a,\mu )>x_n^{\pm} (t; a,\mu ),\ \ \ x_{n}^{\pm} (t; a,\mu )>x_{n-1}^{\pm} (t; a,\mu ), \ \ \ 
\dot{x}_n^{\pm} (t; a,\mu )=\pm a^{\frac{\alpha +1}{2}}\, p_\mu. 
$$
\end{theorem}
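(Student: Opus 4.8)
The plan is to reduce the lattice system (\ref{nc}) to the scalar advance-delay equation (\ref{adthm}) by exploiting the positive homogeneity of $V'$, and then to solve (\ref{adthm}) by bifurcation from the binary-oscillation profile $u_0$ using the implicit function theorem in a space of odd $2\pi$-periodic functions. First I would substitute the ansatz (\ref{twthm}) into (\ref{nc}). Because $V'(y)=-|y|^\alpha H(-y)$ is positively homogeneous of degree $\alpha$, the factor $a^\alpha$ produced by $d^2/dt^2$ (which contributes $a\cdot a^{\alpha-1}$) and by each term $V'(a(\cdots))$ cancels, so $a$ drops out and $u_\mu$ must satisfy (\ref{adthm}); this is exactly why $a>0$ remains a free parameter. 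At $\mu=0$ the shift equals $\pi$ and, imposing the antiperiodicity $u_0(\xi+\pi)=-u_0(\xi)$ characteristic of binary oscillations, the two Hertzian terms collapse: with $w=u_0(\xi+\pi)-u_0(\xi)=-2u_0(\xi)$ one gets $V'(w)-V'(-w)=-W'(u_0)$ where $W'(x)=2^\alpha|x|^\alpha\,\mathrm{sgn}(x)$, so (\ref{adthm}) becomes the autonomous oscillator (\ref{int}). Its odd $2\pi$-periodic solution is fixed by requiring the period to equal $2\pi$; the energy quadrature $T/4=\int_0^A du/\sqrt{2(E-W(u))}$ with $u=As$ turns the period condition into a Beta integral, and solving for $u_0'(0)$ reproduces the explicit constant $p_0$ in (\ref{ci}).

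Next I would set up the fixed point. Define $F(\mu,u)=u''-V'(u(\cdot+\pi+\mu)-u)+V'(u-u(\cdot-\pi-\mu))$ as a map from $\mathcal V\times C^2_{\rm per}(0,2\pi)$ into $C^0_{\rm per}(0,2\pi)$; the equation preserves oddness, so $F$ restricts to the odd subspace, and $F(0,u_0)=0$. The crucial regularity point is that, although $V''$ blows up at the contact threshold when $\alpha<2$, the force $V'$ is nonetheless of class $C^1$ with $V''(0)=0$, so the associated Nemytskii operators are $C^1$ on spaces of continuous functions and $F\in C^1$. Differentiating, and rewriting every shift by $\pm\pi$ via $2\pi$-periodicity, the linearization takes the compact form $D_uF(0,u_0)h=h''-c(\xi)\,[\,h(\cdot+\pi)-h\,]$ with $c(\xi)=\tfrac12 W''(u_0(\xi))$ a $\pi$-periodic coefficient. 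Splitting $h=h_++h_-$ into its $\pi$-antiperiodic and $\pi$-periodic parts block-diagonalizes this operator into $L_+=\partial_{\xi\xi}+W''(u_0)$ on the antiperiodic sector and $L_-=\partial_{\xi\xi}$ on the $\pi$-periodic sector.

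The heart of the argument, and the step I expect to be the main obstacle, is invertibility of $D_uF(0,u_0)$ on the odd subspace. On the odd $\pi$-periodic sector $L_-=\partial_{\xi\xi}$ is diagonal in the sine basis with nonzero eigenvalues, hence an isomorphism. On the antiperiodic sector $L_+$ is the variational (Hill) operator along the periodic orbit $u_0$, which always contains $u_0'$ in its kernel by translation invariance of (\ref{int}). I would show this kernel is exactly one-dimensional by proving the orbit is nondegenerate: homogeneity of $W$ makes the oscillator anisochronous, the period depending on energy like $T(E)\propto E^{(1-\alpha)/(2(1+\alpha))}$ with $T'(E)\neq0$ precisely because $\alpha\neq1$, so the second Floquet solution carries a secular term and is not antiperiodic. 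Since $u_0'$ is even while we work with odd functions, the kernel meets the odd subspace trivially and $L_+$ is invertible there by Fredholm theory. The implicit function theorem then yields a $C^1$ branch $\mu\mapsto u_\mu\in C^2_{\rm per}(0,2\pi)$ of odd solutions.

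Finally I would extract the remaining statements. The symmetry $u_\mu(\xi)=-u_{-\mu}(\xi+\pi)$ follows from a reversibility of (\ref{adthm}): the map $u\mapsto-u(\cdot+\pi)$ sends a solution for parameter $\mu$ to an odd solution for parameter $-\mu$ lying near $u_0$, so uniqueness in the implicit function theorem forces the identity. For $\mu<0$ I would analyze the gap functions $h(\xi)=u_\mu(\xi+\pi+\mu)-u_\mu(\xi)$ and $g(\xi)=u_\mu(\xi)-u_\mu(\xi-\pi-\mu)$; using $u_\mu\approx u_0$ with $u_0(\pi)=0$ and $u_0'(\pi)=-p_0$ one finds $h(0)=g(0)=p_0|\mu|+o(\mu)>0$ with slopes $\mp2p_0$, so both gaps stay positive on a symmetric interval whose endpoints solve $p_0|\mu|-2p_0\xi_1\approx0$, giving $\xi_1(\mu)=|\mu|/2+o(\mu)$. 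On this interval both Hertzian terms vanish, hence $u_\mu''=0$ and $u_\mu(\xi)=p_\mu\,\xi$ with $p_\mu=u_\mu'(0)=p_0+O(|\mu|)$, which is (\ref{linthm}); differentiating (\ref{twthm}) there yields $\dot x_n^\pm=\pm a^{(\alpha+1)/2}p_\mu$, and the positivity of the gaps is precisely the free-flight inequalities.
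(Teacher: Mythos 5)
Your proposal is correct and follows essentially the same route as the paper: the scaling reduction to the advance-delay equation, the Beta-function quadrature fixing $p_0$, a $C^1$ implicit-function argument for odd functions in $C^2_{\rm per}(0,2\pi)$, the splitting into $\pi$-periodic and $\pi$-antiperiodic sectors with invertibility of the Hill operator $\partial_{\xi\xi}+W''(u_0)$ secured by anisochronicity ($T'(p_0)\neq 0$, so the second variational solution is secular) together with the evenness of $u_0'$, and local uniqueness yielding the symmetry $u_\mu(\xi)=-u_{-\mu}(\xi+\pi)$. The only (harmless) divergence is your local analysis of the gap functions $h(\xi)=u_\mu(\xi+\pi+\mu)-u_\mu(\xi)$ and $g(\xi)=h(-\xi)$ near $\xi=0$, where the paper instead establishes global monotonicity of $u_\mu$ and solves the chord equation $d(\xi_1)=\pi+\mu$; your local version suffices for the theorem as stated, provided you note that $u_\mu(\pi)=0$ holds exactly (oddness plus $2\pi$-periodicity), which is what makes the expansion $h(0)=p_0|\mu|+o(\mu)$ rigorous despite the $O(|\mu|)$ size of $u_\mu-u_0$.
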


To prove theorem \ref{existence}, we locally solve equation (\ref{adthm}) 
using the implicit function theorem, where we have to pay a particular attention
to regularity issues due to the limited smoothness of the interaction potential $V$.
This approach is reminiscent of 
the previous work \cite{jamesc}, where we proved the existence of 
periodic travelling waves in the Newton's cradle, a mechanical system in which the granular chain
is modified by attaching each bead to a local linear pendulum. 
However, in that case the
periodic travelling waves were obtained by nonlinear modulation 
of linear oscillations in the local potentials, a limit which does not exist in the present situation.

\ve

A complementary approach to the analysis of exact periodic travelling waves
consists of obtaining approximate solutions described by continuum models. 
Approximate periodic travelling waves have 
been previously studied through a formal continuum limit in the granular chain
(see \cite{neste2}, sections 1.3-5, \cite{falcon}, section 4.6 and references therein),
but in a case where all beads were interacting under compression.
Closer to our case, Whitham's modulation equation may be used to 
approximate the solutions of theorem \ref{existence} and study
their stability properties, following the lines of reference \cite{dreyer} 
where the example of a granular chain
with linear contact interactions was analyzed. 

\ve

A limitation of theorem \ref{existence} stems from the fact that
it provides an existence result for wavenumbers $q=\pi + \mu$
close to $\pi$. To analyze the existence of periodic travelling
waves on a full range of wavenumbers, 
we numerically continue (for $\alpha=3/2$) the solutions of
theorem \ref{existence} by decreasing the wavenumber $q$
down to values close to $0$. 
We compute the nonlinear dispersion relation
associated to periodic travelling waves, and analyze their asymptotic
form in the long wave limit $q \approx 0$. In this limit 
and when the wave velocity
is normalized to unity, one finds large ensembles of beads performing a free flight
at small velocity, separated by smaller regions consisting of $4$-$5$ 
balls under compression with relative displacements close to a solitary wave.
Using a matching technique, we perform a formal asymptotic study of the limit $q\rightarrow 0$
which provides a limiting wave profile in good agreement with our numerical computations.

\ve

These results are completed by a stability analysis. We
compute the Floquet spectrum modulo shifts of the periodic travelling waves,
and find fast linear instabilities below a critical wavenumber $q_c \approx 0.9$.
This threshold corresponds to the case when the average number of interacting beads
in the compression regions becomes larger or equal to three, 
or equivalently to the maximal number of adjacent interacting beads becoming $\geq 4$.
In this regime, 
the initial numerical errors made on the travelling wave profiles are amplified
during dynamical simulations, leading to the disappearance of the travelling
waves after some transcient time. A disordered regime is established shortly after the instability, 
but interestingly some partial order can be still observed in the form of intermittent large-scale 
organized structures. 
In contrast to the above situation, long-time dynamical simulations of periodic travelling
waves with wavenumbers $q>q_c$ yield wave profiles that
remain practically unchanged when propagating along the lattice.
The existence of very slow instabilities in this regime is a more delicate
question which will be addressed in a future work.

\ve

Besides periodic waves, our numerical results also demonstrate the existence of
compactons (solitary waves with compact supports)
in chains of beads separated by equal gaps outside the compression wave. 
For $q\leq q_k \approx 1.8$, the periodic solutions of (\ref{adthm}) obtained numerically
behave linearly on intervals of length larger than the delay $q=\pi + \mu$ involved in (\ref{adthm}), and these solutions
can be linearly extended in order to get new solutions of (\ref{adthm}). 
Using this property and the Galilean invariance of
(\ref{nc}), we obtain compacton solutions for which 
the beads are stationary and
separated by equal gaps outside the wave, and each bead reached
by the wave is shifted by a finite distance during a finite time interval.

Our situation is quite different to what occurs in the absence of gaps between beads, 
because in that case compactons exist in continuum models of granular chains \cite{neste1,neste2,ap},
but a transition from compactons to noncompact (super-exponentially localized) solutions occurs when
passing from the continuum model to the discrete lattice \cite{ap}. In our case the existence of
compactons is linked with the occurence of free flight, made possible
by the unilateral character of Hertzian interactions which vanish when beads are not in contact.

\ve

The outline of the paper is as follows. In section \ref{localcont} we prove
theorem \ref{existence} and study some qualitative properties of the
travelling wave profiles. The numerical results are presented in section \ref{num}, which contains in addition a formal
asymptotic study of the long wave limit and the discussion of the existence of compactons. 
Section \ref{disc} gives a summary of the main results and mentions implications for other works. 
Lastly, some useful properties of solitary wave solutions
are recalled in an appendix.

\section{\label{localcont}Local continuation of periodic travelling waves}

Periodic travelling wave
solutions of (\ref{nc}) take the form
\begin{equation}
\label{tw}
x_n (t)= u(\xi ), \ \ \ \xi =q\, n -\omega\, t ,
\end{equation}
where $u$ is $2\pi$-periodic, $q \in [0 , 2\pi )$ denotes the wavenumber,
$\omega \in \mathbb{R}\setminus \{0\}$ the wave frequency and 
$\xi \in \mathbb{R}$ the spatial coordinate in a frame moving with the wave.
Thanks to a scale invariance of (\ref{nc}),
the full set of periodic travelling waves can be deduced from
its restriction to $\omega =1$. Indeed, due to the special form
of the Hertz potential (\ref{vhertz}), any solution $x_n$ of (\ref{nc})
generates two families of solutions 
$x_n^{(a)} (t)= a \, x_n (\pm\, a^{\frac{\alpha -1}{2}}\, t)$ parametrized by $a >0$.
Consequently, any solution of the form (\ref{tw}) with $\omega =1$
corresponds to two families of periodic travelling waves propagating in opposite directions
\begin{equation}
\label{twfam}
x_n^{(a)} (t)=a\, u(q\, n \mp a^{\frac{\alpha -1}{2}} t ),
\end{equation}
with frequency $\omega = \pm\, a^{\frac{\alpha -1}{2}}$.

Fixing $\omega =1$, equation (\ref{nc}) yields the advance-delay differential equation
\begin{equation}
\label{ad}
u^{\prime\prime} (\xi ) =
V^\prime (u(\xi +q)-u(\xi ))-V^\prime (u(\xi )-u(\xi -q)), \ \ \
\xi \in \mathbb{R},
\end{equation}
with periodic boundary conditions
\begin{equation}
\label{pbc}
u(\xi + 2\pi )=u(\xi ).
\end{equation}
Equation (\ref{ad}) possesses the symmetry $u(\xi ) \rightarrow -u({-\xi})$
originating from the reflectional and time-reversal symmetries of (\ref{nc}).
In the sequel we restrict our attention to
solutions of (\ref{ad}) invariant under this symmetry (i.e. odd in $\xi$).
This assumption simplifies our continuation procedure, because
it eliminates a degeneracy of periodic travelling waves linked with 
the invariance of (\ref{ad}) under translations and phase-shift.

\subsection{\label{bin}Binary oscillations}

The following lemma exhibits a particular solution of (\ref{ad})-(\ref{pbc})
obtained for $q=\pi$ and corresponding to binary oscillations in system (\ref{nc}).

\begin{lemma}
\label{binary}
For $q=\pi$, there exists an odd solution $u_0$ of (\ref{ad})-(\ref{pbc}) determined by
the initial value problem (\ref{int})-(\ref{ci}), and satisfying
\begin{equation}
\label{sym}
u_0(\xi + \pi )=-u_0(\xi ).
\end{equation}
In addition, there exists a one-parameter family of
solutions of (\ref{nc}) taking the form
\begin{equation}
\label{binfam}
x_n (t)=a\, (-1)^{n+1} \, u_0( a^{\frac{\alpha -1}{2}} t  ), \ \ \ a\in (0,+\infty ).
\end{equation}
\end{lemma}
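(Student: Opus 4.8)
The plan is to construct $u_0$ directly from the initial value problem (\ref{int})-(\ref{ci}) and then verify a posteriori that it solves the advance-delay equation (\ref{ad}) with $q=\pi$. The starting observation is that the antiperiodicity (\ref{sym}) collapses the nonlocal equation into the local oscillator (\ref{int}). Indeed, if $u_0(\xi+\pi)=-u_0(\xi)$ then also $u_0(\xi-\pi)=-u_0(\xi)$, so the two arguments in (\ref{ad}) reduce to $u_0(\xi+\pi)-u_0(\xi)=-2u_0(\xi)$ and $u_0(\xi)-u_0(\xi-\pi)=2u_0(\xi)$. A short case distinction on the sign of $v:=u_0(\xi)$, using $V'(x)=-|x|^\alpha H(-x)$, gives in all cases $V'(-2v)-V'(2v)=-2^\alpha |v|^\alpha \operatorname{sign}(v)=-W'(v)$, so that (\ref{ad}) becomes exactly $u_0''+W'(u_0)=0$. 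Thus any odd, $\pi$-antiperiodic solution of (\ref{int}) automatically solves (\ref{ad})-(\ref{pbc}), and it remains to produce such a solution and to check that the required symmetries really hold.

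I would then analyse the conservative oscillator (\ref{int}). Since $\alpha>1$, the restoring term $W'(x)=2^\alpha |x|^\alpha \operatorname{sign}(x)$ is $C^1$ (with $W''(0)=0$), so the IVP (\ref{int})-(\ref{ci}) has a unique solution. Oddness follows from uniqueness applied to $\xi\mapsto -u_0(-\xi)$, which satisfies the same equation and initial data because $W$ is even. Energy conservation $\tfrac12 (u_0')^2+W(u_0)=\tfrac12 p_0^2$ shows the orbit is a closed, bounded curve symmetric under $(u,u')\mapsto(\pm u,\mp u')$; hence the motion is periodic with some period $T$, the solution reaches its turning point $(A,0)$ at $T/4$ and returns to the origin with reversed velocity, $u_0(T/2)=0$ and $u_0'(T/2)=-p_0$. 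Comparing $u_0$ with $-u_0(\cdot+T/2)$ via uniqueness yields $u_0(\xi+T/2)=-u_0(\xi)$, i.e. the antiperiodicity holds with antiperiod $T/2$ (and hence $2\pi$-periodicity). The only point where the special constant $p_0$ enters is in forcing $T/2=\pi$: separating variables in the quarter-period integral and substituting $u=A s$ turns it into a Beta integral, $\int_0^1 (1-s^{1+\alpha})^{-1/2}\,ds=\tfrac{1}{1+\alpha}B(\tfrac{1}{1+\alpha},\tfrac12)$, and imposing $T=2\pi$ (i.e. $T/4=\pi/2$) then solves for $p_0$ exactly in the closed form (\ref{ci}) after expressing $B$ through $\Gamma$ and $\Gamma(\tfrac12)=\sqrt\pi$.

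Finally, for the travelling-wave family (\ref{binfam}), I would feed the constructed $u_0$ back into the travelling-wave ansatz $x_n(t)=u_0(\pi n-t)$. Iterating (\ref{sym}) gives $u_0(\xi+\pi n)=(-1)^n u_0(\xi)$, and combined with oddness this yields $x_n(t)=(-1)^{n+1}u_0(t)$; applying the scale invariance of (\ref{nc}) recorded in (\ref{twfam}) produces (\ref{binfam}). As a direct check one may substitute (\ref{binfam}) into (\ref{nc}): the positive homogeneity $V'(\lambda x)=\lambda^\alpha V'(x)$ for $\lambda>0$ and the identity $V'(2v)-V'(-2v)=W'(v)$ reduce the right-hand side to $a^\alpha(-1)^{n+1}W'(u_0)=-a^\alpha(-1)^{n+1}u_0''$, which matches the left-hand side by (\ref{int}).

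I would expect the main obstacle to be the bookkeeping in the period computation — recognising the quarter-period integral as a Beta function and matching the result to the exact expression (\ref{ci}) — together with the care needed in the sign analysis of the one-sided force $V'$, so that the reduction to (\ref{int}) is valid on the whole period, including the instants where $u_0$ vanishes.
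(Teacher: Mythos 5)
Your proposal is correct and follows essentially the same route as the paper: reduce (\ref{ad}) to the local oscillator (\ref{int}) via the antiperiodicity (\ref{sym}), compute the period by quadrature as a Beta integral to pin down $p_0$ in (\ref{ci}), verify (\ref{sym}) through uniqueness of the Cauchy problem, and obtain (\ref{binfam}) from the scale invariance (\ref{twfam}). Your explicit treatment of the $C^1$ regularity of $W'$ and the direct substitution check of (\ref{binfam}) are small additions, but the argument is the same.
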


\begin{proof}
We look for solutions of (\ref{ad}) satisfying (\ref{sym}), so that
the advance-delay differential equation reduces
to the ordinary differential equation (\ref{int})
with $W(x)=\frac{1}{2} V(-2|x|)= \frac{2^\alpha}{1+\alpha}|x|^{1+\alpha}$.
Now we must find a $2\pi$-periodic solution $u_0$ of (\ref{int})
satisfying (\ref{sym}).
Equation (\ref{int}) is integrable since 
$I=\frac{1}{2}(u^\prime )^2 + W(u)$ is constant along any solution $u$,
and its phase-space is filled by periodic orbits. 
Any solution $u$ of (\ref{int}) with initial condition
$u (0)=0$, $u^\prime (0)=p >0$ is odd in $\xi$ 
(due to the evenness of $W$) and its period $T(p)$ is given by 
$$
T(p)=4 \int_0^{u_{\rm{max}}}{(p^2 - 2 W(v))^{-1/2}\, dv}
$$
where $u_{\rm{max}}=\frac{1}{2}[p^2 (1+\alpha ) ]^{\frac{1}{1+\alpha}}$.
Our aim is to find $p_0   >0$ such that $T(p_0 )=2\pi$.
Using the change of variable $v=u_{\rm{max}}\, t^{ \frac{1}{1+\alpha} }$, one finds
$T(p)=2 (1+\alpha )^{\frac{-\alpha}{1+\alpha}}\, 
\beta \, p^{\frac{1-\alpha}{1+\alpha}}$, where 
$$
\beta = 
\int_0^1{t^{ \frac{1}{1+\alpha}-1}\, (1 - t)^{-1/2}\, dt} = 
B(\frac{1}{1+\alpha},\frac{1}{2}),
$$
and $B(z,w)=\frac{\Gamma{(z)}\, \Gamma{(w)}}{\Gamma{(z+w)}}$ 
denotes Euler's Beta function 
(see \cite{abra}, formula 6.2.1 p. 258). Since $\Gamma{(1/2)}=\sqrt{\pi}$,
one obtains finally 
\begin{equation}
\label{period}
T(p)=c_\alpha \, p^{\frac{1-\alpha}{1+\alpha}} , \ \ \
c_\alpha =2\sqrt{\pi} (1+\alpha )^{\frac{-\alpha}{1+\alpha}}\, 
\frac{\Gamma{(\frac{1}{1+\alpha})}}{\Gamma{(\frac{1}{1+\alpha}+\frac{1}{2})}}.
\end{equation}
Consequently, for $p$ varying in $(0,+\infty )$ the period
$T(p)$ is strictly decreasing from $+\infty$ to $0$,
and there exists a unique $p_0   >0$ such that $T(p_0 )=2\pi$.
For the corresponding solution $u_0$ of (\ref{int}), $u_0^\prime (0)=p_0$ 
is given by (\ref{ci}). 
Lastly, there remains to check property (\ref{sym}) in order to ensure that
$u_0$ satisfies (\ref{ad}). Using the fact that
$u_0(\pi)=0$ (since $u_0$ is odd and $2\pi$-periodic), we have
$u_0^\prime (\pi )=-p_0$ by conservation of $I$.
Consequently, $u_0(\xi + \pi )$ and $u_0(-\xi )$ are solutions of
the same Cauchy problem at $\xi =0$, which implies that
(\ref{sym}) is satisfied. Solutions (\ref{binfam}) are obtained using (\ref{twfam}).
\end{proof}

\begin{remark}
The solution $u_0$ of lemma \ref{binary} is non-unique because
$-u_0$ is also a solution, but this second choice
corresponds to shifting solution (\ref{binfam}) by one lattice site
(or equivalently performing a half-period time-shift). In addition,
the solutions of (\ref{int}) with period
$2\pi /(2k+1)$ ($k\in \mathbb{N}$) also satisfy (\ref{sym}), and thus they are
solutions of (\ref{ad})-(\ref{pbc}) for $q=\pi$.
However, these additional solutions do not yield any new
travelling wave because they can be recovered from
$u_0$ by tuning $a$ in (\ref{binfam}). Moreover,
the solutions of (\ref{int})-(\ref{pbc}) with period $\pi /k$ 
do not satisfy (\ref{sym}),
hence they do not satisfy (\ref{ad})-(\ref{pbc}) for $q=\pi$. 
\end{remark}

\subsection{\label{localconttw}Periodic travelling waves close to binary oscillations}

In order to locally continue the solution
$u_0$ of (\ref{ad})-(\ref{pbc}) for $q \approx \pi$ we need to
define a suitable functional setting. 
In the sequel we note $\mathbb{X}_k = C^k_{\rm{per}}(0,2\pi )$
and consider the closed subspace of $\mathbb{X}_k$
$$
X_k = \{\, 
u\in C^k_{\rm{per}}(0,2\pi ), \
u(-\xi )=-u(\xi )
\, \} .
$$
We denote by $\tau_q \in {\mathcal L}(\mathbb{X}_k)$
the shift operator $(\tau_q u )(\xi)=u(\xi + q)$
(note that $\tau_\pi$ maps $X_k$ into itself).
Problem (\ref{ad})-(\ref{pbc}) restricted to odd solutions can be rewritten
\begin{equation}
\label{pbnl}
u^{\prime\prime}+N(u,q)=0 \mbox{ in } X_0, \ \ \ u\in X_2,
\end{equation}
where the map $N(.,q)$ defined by
$$
N(u,q) =V^\prime ( (I-\tau_{-q}) u)-V^\prime ( (\tau_q-I) u )
$$
maps $X_0$ into itself.
The smoothness of $N$ restricted to $X_2 \times \mathbb{R}$
is proved in the following lemma.

\begin{lemma}
\label{c1}
The map $N$ belongs to $C^1 (X_2 \times \mathbb{R} , X_0)$ and
\begin{equation}
\label{diffu0}
D_u N(u_0 , \pi)=\frac{1}{2}\, W^{\prime\prime}(u_0)\, (I-\tau_\pi ).
\end{equation}
\end{lemma}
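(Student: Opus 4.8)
The plan is to regard $N(\cdot,\cdot)$ as a composition of elementary pieces and to verify the $C^1$ regularity of each separately, the only genuinely delicate point being the dependence on the delay $q$ through the shift operator. I would write $N(u,q)=\mathcal{N}(A(u,q))-\mathcal{N}(B(u,q))$, where $A(u,q)=(I-\tau_{-q})u$ and $B(u,q)=(\tau_q-I)u$ are the two relative-displacement maps and $\mathcal{N}(w)=V'\circ w$ is the superposition (Nemytskii) operator associated with $V'$. First I would record the elementary smoothness of the potential: since $\alpha>1$, a direct computation gives $V'(x)=-|x|^{\alpha}H(-x)$ and $V''(x)=\alpha\,|x|^{\alpha-1}H(-x)$, so that $V\in C^2(\mathbb{R})$ with $V'\in C^1(\mathbb{R})$. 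Note that $V''$ is merely H\"older when $\alpha<2$, which already signals that one should expect no more than $C^1$ regularity for $N$, and is why only first derivatives are used.

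The superposition operator is handled by a standard argument on spaces of continuous functions: for $g\in C^1(\mathbb{R})$ the map $\mathcal{N}_g\colon w\mapsto g\circ w$ is $C^1$ from $\mathbb{X}_0$ into itself, with $D\mathcal{N}_g(w)\,h=(g'\circ w)\,h$. Differentiability follows from the pointwise Taylor estimate $g(w+h)(\xi)-g(w)(\xi)-g'(w(\xi))h(\xi)=[g'(w(\xi)+\theta_\xi h(\xi))-g'(w(\xi))]h(\xi)$ combined with the uniform continuity of $g'$ on the compact range of $w$, and continuity of the derivative reduces to continuity of $w\mapsto g'\circ w$ on $\mathbb{X}_0$. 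Applying this with $g=V'$ (so $g'=V''$) gives that $\mathcal{N}=\mathcal{N}_{V'}$ is $C^1$ from $\mathbb{X}_0$ into itself.

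The heart of the proof, and the main obstacle, is the claim that $(u,q)\mapsto\tau_q u$ is $C^1$ from $X_2\times\mathbb{R}$ into $X_0$, with partial derivatives $\partial_u(\tau_q u)=\tau_q$ and $\partial_q(\tau_q u)=\tau_q u'$. This is where the restriction to $X_2$ matters: the translation group is \emph{not} norm continuous on $X_0$, but on the space of $C^1$ functions one has $\|\tau_q-\tau_{q'}\|_{\mathcal{L}(X_1,X_0)}\le|q-q'|$, which restores continuous dependence on $q$. I would prove differentiability directly by estimating $\tau_q u-\tau_{q_0}u_0-[\tau_{q_0}h+s\,\tau_{q_0}u_0']$ with $h=u-u_0$ and $s=q-q_0$, bounding the $q$-increment of $u_0$ by its second-order Taylor remainder (finite since $u_0\in C^2$) and the cross term $\tau_q h-\tau_{q_0}h$ by $\|h'\|_\infty\,|s|$, both being $o(\|(h,s)\|)$; continuity of the partials then follows from the Lipschitz estimate above and the strong continuity of the shift. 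Since $A$ and $B$ are built from $u\mapsto u$ and $(u,q)\mapsto\tau_{\pm q}u$ by bounded linear operations, they are $C^1$ from $X_2\times\mathbb{R}$ into $X_0$; composing with the $C^1$ operator $\mathcal{N}$ and subtracting yields $N\in C^1$. A short parity check, using that $u$ odd forces $N(u,q)$ odd, confirms that $N$ takes values in the closed subspace $X_0$, so $N\in C^1(X_2\times\mathbb{R},X_0)$.

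It remains to evaluate the partial derivative at the binary-oscillation point. By the chain rule,
\[
D_uN(u,q)\,v=V''\big((I-\tau_{-q})u\big)\,(I-\tau_{-q})v-V''\big((\tau_q-I)u\big)\,(\tau_q-I)v .
\]
At $q=\pi$ one has $\tau_{-\pi}=\tau_\pi$ and $\tau_\pi^2=I$, while the symmetry (\ref{sym}) of Lemma \ref{binary} gives $\tau_\pi u_0=-u_0$, hence $(I-\tau_\pi)u_0=2u_0$ and $(\tau_\pi-I)u_0=-2u_0$. Substituting and factoring out the common operator $(I-\tau_\pi)$, I would obtain $D_uN(u_0,\pi)=[V''(2u_0)+V''(-2u_0)]\,(I-\tau_\pi)$. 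The Heaviside structure then collapses the bracket: $V''(2u_0)+V''(-2u_0)=\alpha\,2^{\alpha-1}|u_0|^{\alpha-1}\big(H(-u_0)+H(u_0)\big)=\alpha\,2^{\alpha-1}|u_0|^{\alpha-1}=\tfrac12 W''(u_0)$, where the value at the isolated zeros of $u_0$ is irrelevant because the prefactor $|u_0|^{\alpha-1}$ vanishes there. This gives exactly (\ref{diffu0}). Throughout, the one subtle step is the $q$-dependence of the delay terms, tamed precisely by working on $X_2$, while the limited smoothness of $V''$ is fully compatible with a $C^1$ conclusion.
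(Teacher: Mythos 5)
Your proposal is correct and follows essentially the same route as the paper: the same decomposition of $N$ into a Nemytskii operator (handled via uniform continuity of $V''$ on compact sets, valid since $V'\in C^1(\mathbb{R})$ for $\alpha>1$) composed with the shift maps, the same use of the $X_2$ setting with second-order Taylor and mean-value estimates to control the $q$-dependence, and the same algebra ($\tau_{-\pi}=\tau_\pi$, $\tau_\pi u_0=-u_0$, $W''(x)=2\big(V''(2x)+V''(-2x)\big)$) to obtain (\ref{diffu0}). The only cosmetic difference is that you prove joint Fr\'echet differentiability of $(u,q)\mapsto\tau_q u$ by a direct remainder estimate, whereas the paper shows $q\mapsto\tau_q$ belongs to $C^1\big(\mathbb{R},\mathcal{L}(\mathbb{X}_2,\mathbb{X}_0)\big)$ and composes with the smooth bilinear evaluation map $(u,A)\mapsto A\,u$; the underlying estimates are identical.
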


\begin{proof}
Since $V^\prime \in C^1 (\mathbb{R})$, the map
$u \mapsto V^\prime (u)$ belongs to $C^1 (\mathbb{X}_0)$
(this classical result follows from the uniform continuity of $V^{\prime\prime}$
on compact intervals). Consequently, to prove that $N$ is $C^1$ it suffices
to show that the map $G\, : \, \mathbb{X}_2 \times \mathbb{R} \rightarrow \mathbb{X}_0$ defined by
$G(u,q)=\tau_q u$ is $C^1$. Since the continuous bilinear mapping 
$\Pi \, : \, \mathbb{X}_2 \times \mathcal{L}(\mathbb{X}_2,\mathbb{X}_0) 
\rightarrow \mathbb{X}_0$, $(u,A)\mapsto A\, u$ is $C^\infty$,
it suffices to check that the map $q \mapsto \tau_q$ belongs to 
$C^1 (\mathbb{R} , \mathcal{L}(\mathbb{X}_2,\mathbb{X}_0))$. Taylor's formula yields
$$
\| \frac{\tau_{q+h}u - \tau_{q}u}{h} - \tau_q u^\prime  \|_{L^\infty}
\leq 
\frac{|h|}{2}\,  \| u^{\prime\prime} \| _{L^\infty}
$$
for all $u\in \mathbb{X}_2$ and $q,h\in \mathbb{R}$ with $h\neq 0$.
Consequently, $\lim\limits_{h\rightarrow 0}{\frac{1}{h}\, (\tau_{q+h} - \tau_{q})}=\tau_q \frac{d}{d\xi}$
in $\mathcal{L}(\mathbb{X}_2,\mathbb{X}_0)$, i.e. $\frac{d\tau_q}{dq}=\tau_q \frac{d}{d\xi}$
in $\mathcal{L}(\mathbb{X}_2,\mathbb{X}_0)$. Moreover, by the mean value inequality
$$
\| \tau_{q+h}\, u^\prime - \tau_{q}\, u^\prime  \|_{L^\infty}
\leq 
|h|\,  \| u^{\prime\prime} \| _{L^\infty},
$$
hence ${\| \tau_{q+h}\, \frac{d}{d\xi} - \tau_{q}\, \frac{d}{d\xi}  \|}_{\mathcal{L}(\mathbb{X}_2,\mathbb{X}_0)}
\leq |h|$ and $\frac{d\tau_q}{dq}$ is continuous in the operator norm. 
This completes the proof that $N$ is $C^1$.
Formula (\ref{diffu0}) follows from elementary computations,
using the chain rule, 
the fact that $\tau_{\pi}=\tau_{-\pi}$ on $\mathbb{X}_k$ and 
$\tau_{\pi}u_0=-u_0$, and the equality
$W^{\prime\prime}(x)=2(V^{\prime\prime}(2x)+V^{\prime\prime}(-2x))$.
\end{proof}

In order to apply the implicit function theorem to equation (\ref{pbnl})
in a neighbourhood of the solution $(u,q)=(u_0,\pi )$, 
we now prove the invertibility of the operator
$L=\frac{d^2}{d\xi^2}+D_u N(u_0 , \pi)$.

\begin{lemma}
\label{invert}
The linear operator $L \in {\mathcal L}(X_2, X_0) $ is invertible.
\end{lemma}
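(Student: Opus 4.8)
The plan is to exploit the explicit formula $D_u N(u_0,\pi)=\frac{1}{2}\,W^{\prime\prime}(u_0)\,(I-\tau_\pi)$ from Lemma \ref{c1}, so that
$$
L=\frac{d^2}{d\xi^2}+\frac{1}{2}\,W^{\prime\prime}(u_0)\,(I-\tau_\pi).
$$
Since $\tau_\pi^2=\tau_{2\pi}=I$ on $2\pi$-periodic functions, $\tau_\pi$ has eigenvalues $\pm1$ and the bounded projections $\frac{1}{2}(I\pm\tau_\pi)$ induce a splitting $X_k=X_k^+\oplus X_k^-$ into its eigenspaces, where $X_k^+$ (resp. $X_k^-$) consists of the odd functions satisfying $u(\xi+\pi)=u(\xi)$ (resp. $u(\xi+\pi)=-u(\xi)$); in Fourier terms these are the odd sine series carried by the even (resp. odd) harmonics. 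First I would check that $L$ respects this splitting. The operators $d^2/d\xi^2$ and $\tau_\pi$ are Fourier multipliers and preserve each summand; moreover $W^{\prime\prime}(x)=2^\alpha\alpha\,|x|^{\alpha-1}$ is even, and since $\tau_\pi u_0=-u_0$ the coefficient $W^{\prime\prime}(u_0)$ is $\pi$-periodic, so multiplication by it maps $X^-$ into $X^-$. As $(I-\tau_\pi)$ vanishes on $X^+$ and equals $2I$ on $X^-$, this yields the block decomposition $L|_{X^+}=d^2/d\xi^2$ and $L|_{X^-}=d^2/d\xi^2+W^{\prime\prime}(u_0)=:M$, and it suffices to invert each block separately.

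On the even-harmonic block the operator is simply $d^2/d\xi^2$, acting on $\sin(2m\xi)$ with eigenvalue $-4m^2$ ($m\geq1$); since these are bounded away from zero, $d^2/d\xi^2:X_2^+\to X_0^+$ is an isomorphism. For the block $M$ on $X^-$ I would first establish the Fredholm property. The operator $d^2/d\xi^2:X_2^-\to X_0^-$ is invertible (its eigenvalues $-k^2$, $k$ odd, are bounded away from $0$), and multiplication by the bounded continuous function $W^{\prime\prime}(u_0)$ composed with the compact embedding $X_2^-\hookrightarrow X_0^-$ shows that $(d^2/d\xi^2)^{-1}M=I+K$ with $K$ compact on $X_2^-$. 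Hence $M$ is Fredholm of index $0$ and its invertibility reduces to injectivity.

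The crux, and the main obstacle, is to prove that $M$ has no nonzero kernel on $X^-$, i.e.\ that the variational equation $v^{\prime\prime}+W^{\prime\prime}(u_0)\,v=0$ admits no nonzero odd $2\pi$-periodic solution. The coefficient $W^{\prime\prime}(u_0(\xi))$ is even in $\xi$, so the two-dimensional solution space splits into an even and an odd line; differentiating $u_0^{\prime\prime}+W^\prime(u_0)=0$ shows that the even solution is $u_0^\prime$ (indeed $u_0^\prime(0)=p_0$, $u_0^{\prime\prime}(0)=-W^\prime(0)=0$). It then remains to rule out an odd periodic solution, and here I would invoke the classical link between the second (secular) solution of the variational equation and the derivative of the period. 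Writing $u(\cdot\,;E)$ for the odd solution of $u^{\prime\prime}+W^\prime(u)=0$ with energy $E=\frac{1}{2}u^\prime(0)^2$, the function $\psi=\partial_E u$ is the odd solution of the variational equation ($\psi(0)=0$, $\psi^\prime(0)\neq0$); differentiating the identity $u(\xi+T(E);E)=u(\xi;E)$ with respect to $E$ and using periodicity of $u^\prime$ gives
$$
\psi(\xi+2\pi)-\psi(\xi)=-\,T^\prime(E_0)\,u_0^\prime(\xi)
$$
at the energy $E_0$ for which $T(E_0)=2\pi$. The explicit period formula (\ref{period}) gives $T(p)=c_\alpha\,p^{(1-\alpha)/(1+\alpha)}$ with a negative exponent because $\alpha>1$, so $T$ is strictly monotone in $p$, hence in $E$, and $T^\prime(E_0)\neq0$. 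Consequently $\psi$ is not $2\pi$-periodic, no nonzero odd periodic solution exists, and $M$ is injective, hence invertible. Combining the two blocks, $L\in\mathcal{L}(X_2,X_0)$ is invertible. The delicate points to watch are the limited smoothness of $W^{\prime\prime}$ near the zeros of $u_0$ (enough only to guarantee $C^2$ solutions of the variational equation and differentiability of the flow in $E$), and the verification that this block reduction remains compatible with the odd-function constraint throughout.
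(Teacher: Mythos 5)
Your proposal is correct and follows essentially the same route as the paper's proof: the splitting $X_k=X_k^+\oplus X_k^-$ via the projectors $\frac{1}{2}(I\pm\tau_\pi)$, the block reduction $L|_{X^+}=\frac{d^2}{d\xi^2}$ and $L|_{X^-}=\frac{d^2}{d\xi^2}+W^{\prime\prime}(u_0)$, the compact-perturbation/Fredholm index-zero argument on $X^-$, and injectivity from the nondegeneracy $T^\prime\neq 0$ forcing the odd solution of the variational equation to grow secularly (you parametrize by energy and differentiate the periodicity identity, whereas the paper parametrizes by $p=u^\prime(0)$ and uses the rescaled profile $w(\tau,p)$, but these are the same classical argument). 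The one step to tighten is the inversion of $\frac{d^2}{d\xi^2}$ on $X_2^\pm$: on $C^k$ spaces the remark that the Fourier eigenvalues $-k^2$ are bounded away from zero does not by itself bound the inverse, so it is cleanest to exhibit the explicit integral inverse as in formula (\ref{invds}) of the paper.
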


\begin{proof}
For a given $f \in X_0$, we look for $y \in X_2$ satisfying 
\begin{equation}
\label{aff}
L\, y=f.
\end{equation}
We use the splitting $X_k = X_k^+ \oplus X_k^-$, where
$$
X_k^\pm = \{\, 
u\in X_k, \
u(\xi +\pi )=\pm u(\xi )
\, \} 
$$
and denote by $P^\pm = \frac{1}{2}(I\pm \tau_\pi )$ the corresponding projectors on $X_k^\pm$.
Since $u_0 \in X_2^-$ and $W^{\prime\prime}$ is even, it follows that
$W^{\prime\prime} (u_0) \in X_0^+$ and 
$D_u N(u_0 , \pi) = W^{\prime\prime}(u_0)\,P^- \in {\mathcal L}(X_0, X_0^-)$.
Setting $f_\pm = P^\pm f$, $y_\pm = P^\pm y$, problem (\ref{aff}) can be rewritten
\begin{eqnarray}
\label{eq1}
y_+^{\prime\prime}&=&f_+ ,\\
\label{eq2}
L^-\, y_- &=&f_- ,
\end{eqnarray}
where $L^-\, y_- = y_-^{\prime\prime} + W^{\prime\prime}(u_0)\, y_-$.
We first observe that 
the operator $\frac{d^2}{d\xi^2} \in {\mathcal L}(X_2^{\pm}, X_0^{\pm})$
is invertible, with inverse $A \in {\mathcal L}(X_0^{\pm}, X_2^{\pm})$ given by
\begin{equation}
\label{invds}
(A\, f )(\xi) = \frac{\xi}{\pi}\int_0^\pi{(s-\pi )\, f(s)\, ds}+\int_0^\xi{(\xi -s )\, f(s)\, ds}
\end{equation}
(injectivity comes from the fact that functions in $X_k^{\pm}$ are odd, and 
it is lengthy but straightforward to check that (\ref{invds}) satisfies the required
periodicity and symmetry properties when $f\in X_0^{\pm}$). Consequently,
equation (\ref{eq1}) admits a unique solution $y_+ \in X_2^+$.
We now consider the case of equation (\ref{eq2}).
The operator $T \in {\mathcal L}(X_2^{-}, X_0^{-})$ defined by
$T\, y_- = W^{\prime\prime}(u_0)\, y_-$ is compact 
(since $X_2^{-}$ is compactly embedded in $X_0^{-}$),
hence $L^- \in {\mathcal L}(X_2^-, X_0^-)$ is a compact perturbation
of an invertible operator. As a consequence, $L^- $ is a Fredholm operator with index $0$
and its invertibility will follow from the fact that $\mbox{Ker}\, L^- = \{0\}$.
The proof of the injectivity of $L^-$ is classical (see e.g. \cite{MacSep} p. 693) but
we include it for completeness.
In the sequel we denote by $v_p$ the solution of (\ref{int}) with initial
condition $v_p (0)=0$, $v_p^\prime (0)=p$, so that $u_0 = v_{p_0}$,
$v_p$ is odd and periodic with period $T(p)$ given by (\ref{period}).
In addition we rescale $v_p$ in a $2\pi$-periodic function
$w (\tau , p) = v_p(\frac{T(p)}{2\pi}\, \tau )$.
The solutions of the homogeneous equation 
$y^{\prime\prime} + W^{\prime\prime}(u_0)\, y=0$
are spanned by two linearly independent solutions 
$u_0^\prime$ and $z = \frac{\partial v_p}{\partial p}_{|p=p_0}$,
where $u_0^\prime$ is even, $2\pi$-periodic and $z$ is odd. Since we have
$v_p( \xi ) = w (\frac{2\pi}{T(p)}\, \xi, p)  $ and $T(p_0 )=2\pi$, it follows that 
\begin{equation}
\label{split}
z (\xi)= - \frac{1}{2\pi}\, T^\prime(p_0)\, \xi\, u_0^\prime(\xi) 
+ \frac{\partial w}{\partial p} (\xi ,p_0)
\end{equation}
with $T^\prime(p_0)=\frac{2\pi(1-\alpha )}{p_0(1+\alpha )}$.
Since the nondegeneracy
condition $T^\prime (p_0) \neq 0$ is satisfied,
$z$ is the sum of an unbounded function of $\xi$ 
and a $2\pi$-periodic one. As a conclusion,
$\mbox{Ker}\, L^- = \{0\}$ since $z$ is non-periodic and
$u_0^\prime$ is even, hence $L$ is invertible. 
\end{proof}

Lemma \ref{invert} allows one to solve equation (\ref{pbnl})
locally in the neighbourhood of $(u,q)=(u_0, \pi )$ using the implicit function
theorem. This yields the following result. 

\begin{lemma}
\label{ift}
There exists an open neighbourhood $\Omega$ of $u_0$ in $X_2$,
an open interval ${\mathcal I}$ containing $q=\pi$ and 
a map $\Phi \in C^1 ({\mathcal I} , X_2)$ (with $\Phi (\pi )=u_0$)
such that for all $q \in {\mathcal I}$, equation (\ref{pbnl})
admits a unique solution $u$ in $\Omega$ given by $u = \Phi (q)$.
\end{lemma}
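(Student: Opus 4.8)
The plan is to recast (\ref{pbnl}) as the zero set of a $C^1$ map between Banach spaces and then invoke the implicit function theorem, the substantive work having already been carried out in Lemmas \ref{c1} and \ref{invert}. Concretely, I would introduce
$$
F\, :\, X_2 \times \mathbb{R} \rightarrow X_0, \qquad F(u,q)=u^{\prime\prime}+N(u,q),
$$
so that for a given $q$ a function $u\in X_2$ solves (\ref{pbnl}) precisely when $F(u,q)=0$.

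First I would verify that $F$ is of class $C^1$ on $X_2 \times \mathbb{R}$. The linear map $u\mapsto u^{\prime\prime}$ belongs to $\mathcal{L}(X_2, X_0)$ and is therefore $C^\infty$, while $N\in C^1 (X_2\times \mathbb{R}, X_0)$ by Lemma \ref{c1}; hence their sum $F$ is $C^1$ jointly in $(u,q)$. Next, Lemma \ref{binary} shows that $u_0\in X_2$ solves (\ref{ad})-(\ref{pbc}) at $q=\pi$, which is exactly the statement that $F(u_0,\pi)=0$, so $(u_0,\pi)$ provides the base point for the continuation.

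The decisive hypothesis is the invertibility of the partial differential with respect to $u$. Differentiating $F$ in $u$ at the base point gives $D_u F(u_0,\pi)=\frac{d^2}{d\xi^2}+D_u N(u_0,\pi)=L$, which is precisely the operator shown to be an isomorphism of $X_2$ onto $X_0$ in Lemma \ref{invert}. Applying the implicit function theorem in these spaces then yields an open interval $\mathcal{I}\ni \pi$, an open neighbourhood $\Omega$ of $u_0$ in $X_2$, and a unique map $\Phi\in C^1(\mathcal{I}, X_2)$ with $\Phi(\pi)=u_0$ such that the only solution of $F(u,q)=0$ in $\Omega$ is $u=\Phi(q)$, which is the asserted conclusion.

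I do not anticipate a genuine obstacle at this stage: the delicate points — the limited smoothness of $V$, which forces the careful $C^1$ analysis of $N$, and the spectral nondegeneracy argument establishing $\mathrm{Ker}\, L^-=\{0\}$ — have already been settled in Lemmas \ref{c1} and \ref{invert}. The one formal point worth confirming is that the whole construction remains confined to the odd subspaces $X_k$ rather than escaping into the larger $\mathbb{X}_k$: since $N(\cdot,q)$ maps $X_0$ into itself and $L\in\mathcal{L}(X_2,X_0)$, the implicit function theorem may be applied directly within this invariant framework, so that $\Phi$ automatically takes values in $X_2$ and the restriction to odd solutions is preserved under continuation.
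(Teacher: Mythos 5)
Your proposal is correct and follows exactly the paper's route: the paper proves this lemma by a direct application of the implicit function theorem to $u^{\prime\prime}+N(u,q)=0$ at $(u_0,\pi)$, with the $C^1$ smoothness of $N$ supplied by Lemma \ref{c1} and the invertibility of $L=\frac{d^2}{d\xi^2}+D_u N(u_0,\pi)$ supplied by Lemma \ref{invert}. Your closing observation that the construction stays within the odd subspaces $X_k$ is also consistent with the paper's setup, which defines $N(\cdot,q)$ as a map of $X_0$ into itself precisely so that the implicit function theorem can be applied in that invariant framework.
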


The function $u_\mu$ describing the family of periodic travelling waves of theorem \ref{existence}
is defined as 
\begin{equation}
\label{defumu}
u_\mu = \Phi (\pi + \mu). 
\end{equation}
Now there remains to analyze in more details the qualitative properties of $u_\mu$.

\subsection{\label{quali}Qualitative properties}

We start by examining a symmetry property of (\ref{ad}).
One can check that $u$ is an odd solution of (\ref{ad})-(\ref{pbc}) with $q=\pi - \mu$
if and only if $-\tau_\pi\,  u$ is an odd solution for $q=\pi + \mu$. Consequently,
by lemma \ref{ift} (which ensures the local uniqueness of the solution)
one has $\Phi (\pi + \mu)=-\tau_\pi \Phi (\pi - \mu)$, i.e. 
$u_{\mu}(\xi )=-u_{-\mu}(\xi +\pi)$.

\vspace{1ex}

The next lemma describes the monotonicity properties of $u_{\mu}$ in different intervals.

\begin{lemma}
\label{mono}
For all $\mu$ small enough, there exists $\theta (\mu ) \in (0,\pi )$
such that $u_\mu^\prime (\xi )>0$ for $\xi \in [0, \theta(\mu ) )$,
$u_\mu^\prime (\xi )<0$ for $\xi \in (\theta(\mu ) ,\pi ]$.
Moreover $\lim\limits_{\mu\rightarrow 0}{\theta(\mu )}=\pi/2$. 
\end{lemma}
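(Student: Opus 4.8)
The plan is to establish the monotonicity for the unperturbed profile $u_0$ and then transfer it to $u_\mu$ using the $C^2$ convergence $u_\mu \to u_0$ as $\mu \to 0$, which is provided by Lemma \ref{ift} (since $\Phi \in C^1({\mathcal I}, X_2)$, the functions $u_\mu, u_\mu', u_\mu''$ converge uniformly to $u_0, u_0', u_0''$). First I would analyze $u_0$. Being odd, $2\pi$-periodic and satisfying (\ref{sym}), it enjoys the reflection symmetry $u_0(\pi - \xi) = u_0(\xi)$, so that $\xi = \pi/2$ is a critical point. The phase-plane analysis of the conservative equation (\ref{int}), whose even potential $W$ is strictly convex for $\alpha > 1$, shows that the orbit issued from $(0, p_0)$ lets $u_0$ increase monotonically from $0$ to its maximum $u_{\max} > 0$ and then decrease back to $0$ over the half-period $[0, \pi]$, the single turning point occurring at $\xi = \pi/2$. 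Hence $u_0' > 0$ on $[0, \pi/2)$ and $u_0' < 0$ on $(\pi/2, \pi]$, with $u_0'(0) = p_0 > 0$, $u_0'(\pi) = -p_0 < 0$; moreover the maximum is nondegenerate, $u_0''(\pi/2) = -W'(u_{\max}) = -2^\alpha u_{\max}^\alpha < 0$. This gives $\theta(0) = \pi/2$.

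Next I would fix $\epsilon \in (0, \pi/2)$ small enough that $u_0'' \le -c < 0$ on $[\pi/2 - \epsilon, \pi/2 + \epsilon]$, while $u_0'$ stays bounded away from $0$ on the two complementary intervals (positive on $[0, \pi/2 - \epsilon]$, negative on $[\pi/2 + \epsilon, \pi]$). By the uniform convergence above, for all $|\mu|$ small enough one then has $u_\mu' > 0$ on $[0, \pi/2 - \epsilon]$, $u_\mu' < 0$ on $[\pi/2 + \epsilon, \pi]$, and $u_\mu'' \le -c/2 < 0$ on $[\pi/2 - \epsilon, \pi/2 + \epsilon]$. On this middle interval the strict negativity of $u_\mu''$ makes $u_\mu'$ strictly decreasing, and since $u_\mu'(\pi/2 - \epsilon) > 0 > u_\mu'(\pi/2 + \epsilon)$, the intermediate value theorem yields a unique zero $\theta(\mu)$ there, with $u_\mu'$ positive to its left and negative to its right. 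Concatenating the three intervals gives $u_\mu' > 0$ on $[0, \theta(\mu))$ and $u_\mu' < 0$ on $(\theta(\mu), \pi]$, as claimed. Because $\theta(\mu)$ belongs to $(\pi/2 - \epsilon, \pi/2 + \epsilon)$ for every admissible $\epsilon$ once $\mu$ is taken small, a squeeze argument gives $\lim_{\mu \to 0} \theta(\mu) = \pi/2$; alternatively, the implicit function theorem applied to $u_\mu'(\theta) = 0$ at $(\mu, \theta) = (0, \pi/2)$ --- licit because the $\theta$-derivative of $u_\mu'(\theta)$ there equals $u_0''(\pi/2) \neq 0$ --- produces a $C^1$ branch $\theta(\mu)$ with $\theta(0) = \pi/2$ directly.

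I expect the main obstacle to be the first step, namely verifying that $u_0$ has a single, nondegenerate maximum on $[0, \pi]$ located at $\pi/2$: the reflection symmetry and the phase-plane picture must be combined carefully to rule out additional critical points, and the nondegeneracy $u_0''(\pi/2) \neq 0$ --- which follows from $u_{\max} > 0$ and $W'(u_{\max}) = 2^\alpha u_{\max}^\alpha \neq 0$ --- is precisely what makes the sign change of $u_\mu'$ robust under perturbation. Once this structure of $u_0$ is secured, the remainder is a soft perturbation argument relying only on the $C^2$ continuity of $\mu \mapsto u_\mu$.
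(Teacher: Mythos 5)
Your proposal is correct and follows essentially the same route as the paper: establish the profile of $u_0$ (single turning point at $\pi/2$ with $u_0''<0$ there, since $u_0''=-W'(u_0)<0$ on $(0,\pi)$), then use the $C^2$-closeness of $u_\mu$ to $u_0$ from Lemma \ref{ift} to get strict signs of $u_\mu'$ off a small interval $[\pi/2-\epsilon,\pi/2+\epsilon]$, strict monotonicity of $u_\mu'$ on that interval, and the intermediate value theorem to locate the unique zero $\theta(\mu)$, with $\theta(\mu)\to\pi/2$ by letting $\epsilon\to 0$. The only differences are cosmetic (you localize the concavity near $\pi/2$ and sketch an implicit-function-theorem alternative for the limit, while the paper notes concavity on all of $(0,\pi)$), so there is nothing substantive to flag.
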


\begin{proof}
From the construction of $u_0$ in lemma \ref{binary}, the results
holds true for $\mu=0$ with $\theta(0 )=\pi/2$, and we have
$u_0^{\prime\prime}=-W^\prime (u_0 )<0$ on $(0,\pi )$.
Since $u_\mu$ is $C^2$-close to $u_0$ for $\mu \approx 0$,
for all $\epsilon \in (0,\pi /2 )$ one has 
$u_\mu^{\prime\prime}<0$ on $[\frac{\pi}{2}-\epsilon , \frac{\pi}{2}+\epsilon]$ for $\mu$ small enough,
$u_\mu^{\prime}>0$ on $[0 , \frac{\pi}{2}-\epsilon]$ and
$u_\mu^{\prime}<0$ on $[\frac{\pi}{2}+\epsilon ,\pi]$.
Then the sign properties of $u_\mu^\prime$ stated in lemma \ref{mono}
follow from the monotonicity of $u_\mu^{\prime}$ on $[\frac{\pi}{2}-\epsilon , \frac{\pi}{2}+\epsilon]$
and the intermediate value theorem, and the asymptotic behaviour of $\theta(\mu )$
is obtained by letting $\epsilon \rightarrow 0$.
\end{proof}

Since $u_\mu$ is odd and $2\pi$-periodic, it satisfies
$u_\mu (0)=u_\mu (\pi )=0$ and $u_\mu (\pi+h)=-u_\mu (\pi -h)$.
By lemma \ref{mono} it follows that
$u_\mu$ is positive on $[0,\pi ]$, maximal at $\xi=\theta(\mu )$ and negative on $[\pi , 2\pi]$.
Figure \ref{graphu} illustrates the profile of $u_\mu$
for $\mu \approx -0.63$. 

Using lemma \ref{mono}, we now prove the existence of
an interval where the function $u_\mu$ is linear.

\begin{figure}[h]
\begin{center}
\includegraphics[scale=0.4]{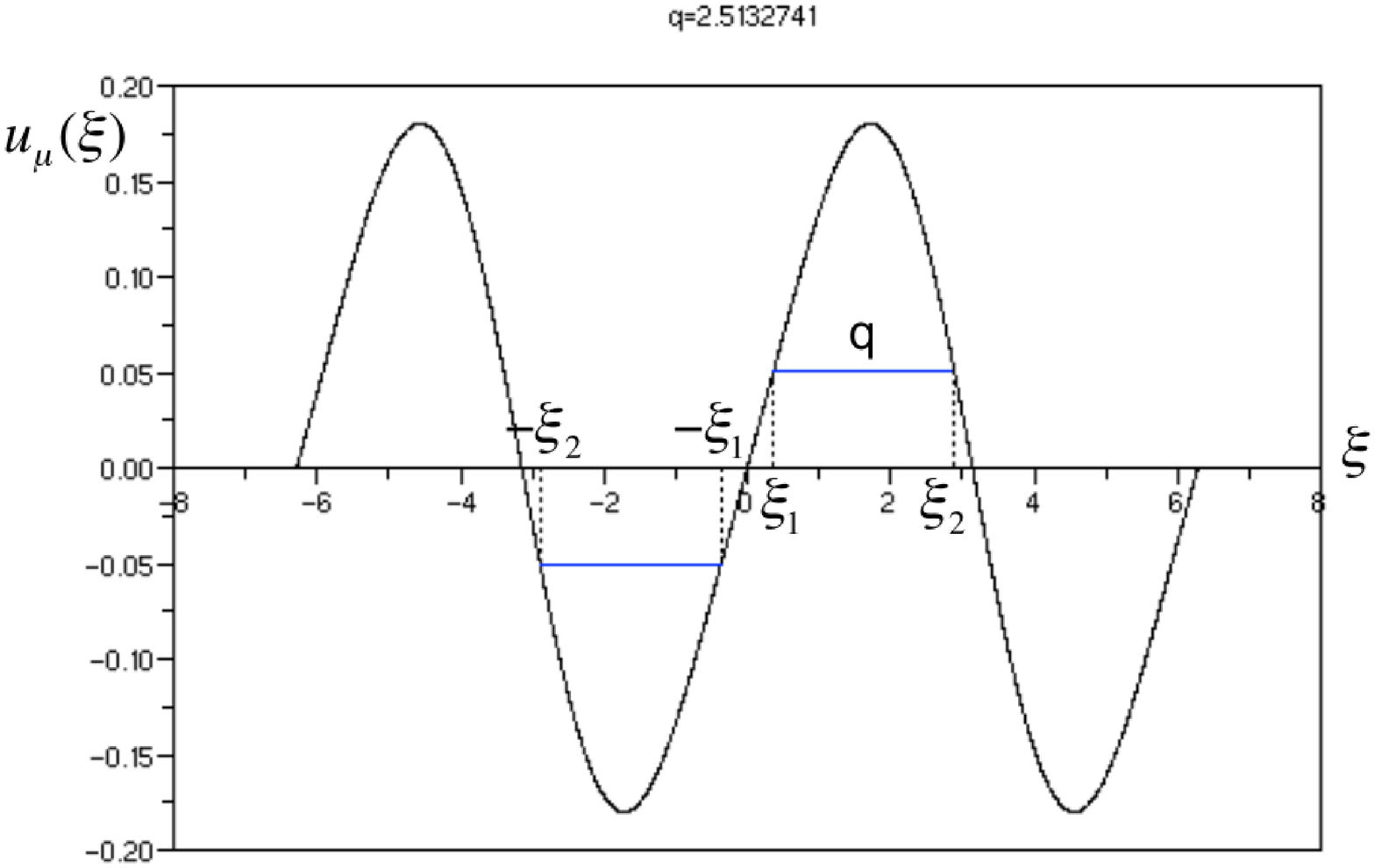}
\end{center}
\caption{\label{graphu} 
Graph of $u_\mu$ for $q=\pi+\mu \approx 2.51$, numerically
computed from equation (\ref{ad}) using the method described in section \ref{num}.
The profile of $u_\mu$ is linear on the interval $[-\xi_1 , \xi_1]$ (see lemma \ref{linumu}),
and the values of $u_\mu (\xi)$ at $\xi_1$ and $\xi_2=\xi_1+q$ coincide.}
\end{figure}

\begin{lemma}
\label{linumu}
For all $\mu <0$ small enough, there exist 
$\xi_1 (\mu )\in (0, \theta (\mu ))$ 
such that
\begin{eqnarray}
\label{ineq1}
u_\mu (\xi + \pi + \mu )-u_\mu (\xi ) &>0& \mbox{ for all }\xi \in (-\pi -\mu -\xi_1 , \xi_1),\\
\label{ineq2}
u_\mu (\xi + \pi + \mu )-u_\mu (\xi ) &<0& \mbox{ for all }\xi \in (\xi_1 , \pi-\mu - \xi_1),
\end{eqnarray}
\begin{equation}
\label{devxi1}
\xi_1 (\mu )= -\frac{\mu}{2} + o(\mu ).
\end{equation}
In addition, one has 
\begin{equation}
\label{linb}
u_\mu (\xi ) = p_\mu \, \xi \mbox{ for all } \xi \in [-\xi_1  , \xi_1 ],
\end{equation}
with $p_\mu = p_0 +O(|\mu |)$ and $p_0$ defined by (\ref{ci}).
\end{lemma}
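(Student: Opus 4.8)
The plan is to control the sign of the relative displacement $g(\xi,\mu):=u_\mu(\xi+\pi+\mu)-u_\mu(\xi)$, exploiting the crucial fact that $V'(x)=-|x|^{\alpha}H(-x)$ vanishes on $[0,+\infty)$. Indeed, wherever both $g(\xi,\mu)\ge 0$ and $u_\mu(\xi)-u_\mu(\xi-\pi-\mu)=g(\xi-\pi-\mu,\mu)\ge 0$, the right-hand side of (\ref{ad}) vanishes and $u_\mu$ is affine. Since $\mu\mapsto u_\mu\in C^1(\mathcal V,X_2)$, the map $(\xi,\mu)\mapsto g(\xi,\mu)$ is $C^1$; at $\mu=0$ one has $g(0,0)=u_0(\pi)-u_0(0)=0$ and $\partial_\xi g(0,0)=u_0'(\pi)-u_0'(0)=-2p_0\ne 0$, using $u_0'(\pi)=-p_0$ from Lemma \ref{binary}. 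The implicit function theorem then produces a $C^1$ map $\mu\mapsto\xi_1(\mu)$ with $\xi_1(0)=0$ and $g(\xi_1(\mu),\mu)=0$, unique in a fixed neighbourhood of $0$. For the expansion (\ref{devxi1}) I differentiate the identities $u_\mu(0)=u_\mu(\pi)=0$ (valid for all $\mu$ by oddness and $2\pi$-periodicity), obtaining $\partial_\mu u_\mu(0)=\partial_\mu u_\mu(\pi)=0$ at $\mu=0$; hence $\partial_\mu g(0,0)=u_0'(\pi)=-p_0$ and $\xi_1'(0)=-\partial_\mu g/\partial_\xi g=-1/2$. Since $\theta(\mu)\to\pi/2$ by Lemma \ref{mono}, this gives $\xi_1(\mu)=-\mu/2+o(\mu)\in(0,\theta(\mu))$ for $\mu<0$ small.

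Next I would establish the full sign statements (\ref{ineq1})--(\ref{ineq2}). The workhorse is the reflection symmetry $g(\xi,\mu)=g(-\pi-\mu-\xi,\mu)$, which follows from $u_\mu(\pi+h)=-u_\mu(\pi-h)$ (itself a consequence of oddness and periodicity); it makes the interval in (\ref{ineq1}) symmetric about $-\tfrac{\pi+\mu}{2}$ and exhibits $-\pi-\mu-\xi_1(\mu)$ as a second zero of $g$. At $\mu=0$ one has $g_0=-2u_0$, which by Lemma \ref{mono} and the ensuing sign discussion is strictly positive on $(-\pi,0)$ and strictly negative on $(0,\pi)$, with zeros only at multiples of $\pi$. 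On any compact subinterval of $(-\pi,0)$, $g_0$ is bounded away from $0$; since $g(\cdot,\mu)\to g_0$ uniformly as $\mu\to 0$ (from $u_\mu\to u_0$ in $X_0$ and $\tau_{\pi+\mu}\to\tau_\pi$ in $\mathcal L(X_2,X_0)$, cf. Lemma \ref{c1}), $g$ remains positive there for $\mu$ small. It then remains to patch this interior estimate to the behaviour near the two endpoints, where $g$ itself is close to $0$: there the IFT picture (a unique zero through which $g$ is strictly monotone) together with the reflection symmetry localises the single sign change and excludes spurious zeros. This endpoint matching --- pinning down the sign exactly where $g$ degenerates --- is the step I expect to be the main obstacle. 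The negativity (\ref{ineq2}) follows symmetrically, the second bounding zero $\pi-\mu-\xi_1(\mu)$ arising from the symmetry and $2\pi$-periodicity (and $\partial_\xi g(\pi,0)=2p_0\ne 0$ giving the clean local picture near $\pi$).

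Finally I would deduce the linearity (\ref{linb}). For $\xi\in[-\xi_1,\xi_1]$ the inclusion $[-\xi_1,\xi_1]\subset(-\pi-\mu-\xi_1,\xi_1)$ and (\ref{ineq1}) give $g(\xi,\mu)\ge 0$, so $V'(g(\xi,\mu))=0$. For the delayed term I use the identity $g(\xi-\pi-\mu,\mu)=g(-\xi,\mu)$, immediate from the reflection symmetry with $\eta=\xi-\pi-\mu$; since $-\xi\in[-\xi_1,\xi_1]$ this also gives $g(-\xi,\mu)\ge 0$ and thus the second copy of $V'$ vanishes. Hence the right-hand side of (\ref{ad}) is zero on $[-\xi_1,\xi_1]$, so $u_\mu''\equiv 0$ there and, using $u_\mu(0)=0$, one gets $u_\mu(\xi)=p_\mu\,\xi$ with $p_\mu:=u_\mu'(0)$. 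Because $u\mapsto u'(0)$ is a bounded linear functional on $X_2$ and $\mu\mapsto u_\mu\in C^1(\mathcal V,X_2)$, the map $\mu\mapsto p_\mu$ is $C^1$, which yields $p_\mu=p_0+O(|\mu|)$ and completes the proof.
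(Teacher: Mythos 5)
Your proof is correct, but it takes a genuinely different route from the paper's. For the existence of $\xi_1(\mu)$ and the expansion (\ref{devxi1}), the paper does not work with $g(\xi,\mu)=u_\mu(\xi+\pi+\mu)-u_\mu(\xi)$ directly: it introduces the chord function $d(\xi_1)=\xi_2-\xi_1$, where $\xi_2\in[\theta,\pi]$ is the unique solution of $u_\mu(\xi)=u_\mu(\xi_1)$ furnished by Lemma \ref{mono}, shows $d$ decreases strictly from $\pi$ to $0$, and solves $d(\xi_1)=\pi+\mu$, obtaining $d^\prime(0)=p_\mu/u_\mu^\prime(\pi)-1=-2+O(|\mu|)$; your direct computation $\partial_\mu g(0,0)=-p_0$ (via differentiating $u_\mu(\pi)=0$) and $\partial_\xi g(0,0)=-2p_0$ gives the same $\xi_1^\prime(0)=-1/2$, and the joint $C^1$ smoothness of $g$ you need does follow from $\mu\mapsto u_\mu\in C^1(\mathcal{V},X_2)$. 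The larger divergence is in (\ref{ineq1})--(\ref{ineq2}): the paper proves these by a global, non-perturbative case analysis (inf/sup comparisons of $u_\mu$ over four subintervals, using only the monotonicity of Lemma \ref{mono}, oddness and periodicity), whereas you perturb from $g_0=-2u_0$, combining uniform convergence on compact subsets of $(-\pi,0)$ and $(0,\pi)$ with a local analysis at the degenerate endpoints. The endpoint matching you flag as the main obstacle does in fact go through: since $\partial_\xi g$ is jointly continuous with $\partial_\xi g(0,0)=-2p_0$ and $\partial_\xi g(\pi,0)=2p_0$, the map $g(\cdot,\mu)$ is strictly monotone on a fixed, $\mu$-independent neighbourhood of each of these points for all small $\mu$, so its unique zero there ($\xi_1(\mu)$, respectively $\pi-\mu-\xi_1(\mu)$ via your reflection identity $g(\xi,\mu)=g(-\pi-\mu-\xi,\mu)$ together with $2\pi$-periodicity) pins down the sign on either side, and the overlap with the interior compactness estimate closes the argument. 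Your reflection symmetry, which the paper never invokes (for (\ref{linb}) it instead substitutes $\xi\mapsto\xi-q$ in (\ref{ineq1}) to handle the delayed term), also yields a slick derivation of the linearity via $g(\xi-\pi-\mu,\mu)=g(-\xi,\mu)$. As for what each approach buys: the paper's chord construction gives existence and uniqueness of $\xi_1(\mu)$ on the whole range $\mu\in(-\pi,0)$ (consistent with the numerical continuation far from $q=\pi$) using nothing beyond Lemma \ref{mono}, while your perturbative route is more systematic and shorter on the expansion (\ref{devxi1}), but is intrinsically confined to $\mu\approx 0$.
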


\begin{proof}
We first show that equation 
\begin{equation}
\label{chord}
u_\mu (\xi_1 + \pi + \mu ) = u_\mu (\xi_1)
\end{equation}
admits a solution
$\xi_1 (\mu )\in (0, \theta (\mu ))$, 
a property clearly illustrated by figure \ref{graphu}.
From lemma \ref{mono} it follows that 
for all $\xi_1 \in [0 , \theta  ]$, the equation
\begin{equation}
\label{eqxi}
u_\mu (\xi )=u_\mu (\xi_1 )
\end{equation}
admits a unique solution $\xi=\xi_2$ in
$[\theta , \pi ]$, the latter 
depending smoothly on $\xi_1 \in [0,\theta )$ by the implicit function theorem. Moreover,
$d(\xi_1)=\xi_{2} - \xi_{1}$ is a strictly decreasing positive function of $\xi_1$,
with $d(0)=\pi$ and $d(\theta)=0$ (see figure \ref{graphu}).
Consequently, for $\mu \in (-\pi ,0)$, the equation 
\begin{equation}
\label{chordeq}
d(\xi_1 )=\pi+\mu 
\end{equation}
admits a unique solution $\xi_1 (\mu )$ in $(0, \theta )$, which
depends smoothly on $\mu$ by the implicit function theorem.
Moreover, expansion (\ref{devxi1}) is obtained by solving (\ref{chordeq})
for $(\xi_1 ,\mu ) \approx (0,0)$ with the implicit function theorem, and using the identity
\begin{equation}
\label{eqder}
d^\prime (0)=\frac{p_\mu}{u_\mu^\prime(\pi)}-1=-2+O(|\mu |),
\end{equation}
where we note $p_\mu = u_\mu^\prime(0)$.
The first equality in (\ref{eqder}) is obtained by
considering the case $\xi_1\approx 0$ of equation (\ref{eqxi}),
and the second one holds true because $u_0^\prime(\pi)=-u_0^\prime(0)$. 
By definition of $d$, $\xi_1(\mu )$ defines a solution of (\ref{chord})
and $\xi_1 (\mu ) + \pi +\mu \in (\theta , \pi )$.

\vspace{1ex}

In the sequel we prove inequalities (\ref{ineq1}) and (\ref{ineq2}).
The proof is based on the known monotonicity properties of $u_\mu$, 
and can be followed more easily with the help of figure \ref{graphu}.

Let us first prove (\ref{ineq1}) by treating the cases $\xi \in (-q -\xi_1  , -\xi_1 ]$
and $\xi \in (-\xi_1 , \xi_1 )$ separately (we recall that $q=\pi +\mu \in (0, \pi )$).

Choosing $\xi \in (-q -\xi_1  , -\xi_1 ]$ yields $-\xi_1 < \xi + q < 2\pi -q -\xi_1 $, therefore we have
$u_\mu (\xi+q ) > \inf_{[-\xi_1  , 2\pi -q -\xi_1 ]}u_\mu=u_\mu (-\xi_1 )$.
Since $u_\mu (-\xi_1 )\geq u_\mu (\xi)$ for all $\xi \in (-q -\xi_1  , -\xi_1 ]$, we have
$u_\mu (\xi+q )>u_\mu (\xi ) $.

In the second case $\xi \in (-\xi_1 , \xi_1 )$, we have $\xi_1 < \xi+q < \xi_1 +q$ for $\mu$ small enough,
hence $u_\mu (\xi+q ) > \inf_{[\xi_1  , \xi_1 +q ]}u_\mu=u_\mu (\xi_1 )$. It follows that
$u_\mu (\xi+q )-u_\mu (\xi ) >u_\mu (\xi_1 ) - u_\mu (\xi )>0$ for all $\xi \in (-\xi_1 , \xi_1 )$.

\vspace{1ex}

Consequently, we have proved inequality (\ref{ineq1}). 
Now let us prove (\ref{ineq2}),
treating the cases $\xi \in (\xi_1  , \xi_1 +q ]$
and $\xi \in (\xi_1 +q  , 2\pi - q -\xi_1 )$ separately. 

For $\xi \in (\xi_1  , \xi_1 +q ]$ we have $\xi_1 + q < \xi+ q < 2\pi$ (since $\xi_1 + q < \pi$ and $q<\pi$),
hence $u(\xi +q )<\sup_{[\xi_1 +q  , 2\pi ]}u_\mu = u_\mu (\xi_1 +q)$.
Consequently we have $u_\mu (\xi+q )-u_\mu (\xi )<u_\mu (\xi_1 +q)-u_\mu (\xi )\leq 0$ for $\xi \in (\xi_1  , \xi_1 +q ]$, hence
$u_\mu (\xi+q )-u_\mu (\xi )<0$.

In the case $\xi \in (\xi_1 +q  , 2\pi - q -\xi_1 )$, we have $2\pi - \xi_1 - q <\xi +q <2\pi - \xi_1$ for $\mu$ small enough,
hence $u_\mu (\xi +q)<\sup_{[2\pi - \xi_1 - q  , 2\pi - \xi_1 ]}u_\mu=u_\mu (2\pi - \xi_1 -q )$. 
Since $u_\mu (2\pi - \xi_1 -q )<u_\mu (\xi )$ for $\xi \in (\xi_1 +q  , 2\pi - q -\xi_1 )$, we have $u_\mu (\xi +q)<u_\mu (\xi )$, which
completes the proof of inequality (\ref{ineq2}). 

\vspace{1ex}

Now we can deduce (\ref{linb}) from
the advance-delay equation (\ref{ad}) and inequality (\ref{ineq1}).
The latter implies $u_\mu (\xi )-u_\mu (\xi -q )>0$ for all $\xi \in (-\xi_1 , \xi_1 + q)$.
Consequently, for all $\xi \in (-\xi_1  , \xi_1 )$ we have both 
$u_\mu (\xi +q)-u_\mu (\xi )>0$ and $u_\mu (\xi )-u_\mu (\xi -q)>0$.
Equation (\ref{ad}) yields in that case $u_\mu^{\prime\prime}(\xi )=0$, from which
(\ref{linb}) follows easily.
\end{proof}

\vspace{1ex}

As a conclusion, from lemmas \ref{ift}, \ref{linumu} and expression (\ref{twfam}),
we have obtained the 
two-parameter families of periodic travelling wave solutions $x_n^{\pm} (t; a,\mu )$
of theorem \ref{existence}, close to the binary oscillations 
described in lemma \ref{binary}. 

\vspace{1ex}

Now let us discuss in more detail some implications of lemma \ref{linumu}.
In what follows we shall write $x_n^{\pm} (t; a,\mu )=x_n(t)$ for notational simplicity.
Since $x_n(t)=a\,  u_\mu (\xi )$
with $\xi = (\pi + \mu )\, n \pm a^{\frac{\alpha -1}{2}} t $,
it follows that as soon as
$\xi \in (-\xi_1(\mu) + 2 k \pi,\xi_1 (\mu )+ 2 k \pi)$ ($k\in \mathbb{Z}$) we have
\begin{equation}
\label{freef1}
x_{n+1}(t)>x_n(t),\ \ \ x_{n}(t)>x_{n-1}(t), \ \ \ \dot{x}_n(t)=\pm a^{\frac{\alpha +1}{2}}\, p_\mu. 
\end{equation}
In other words, each bead periodically switches between a free flight regime
corresponding to (\ref{freef1}) and a contact regime where it interacts with
two or one neighbours. From inequalities (\ref{ineq1}) and (\ref{ineq2}), interaction with
two neighbours takes place for $\xi \in [\xi_1 + \pi + \mu + 2 k \pi, \pi - \mu - \xi_1 + 2 k \pi]$.
This completes the proof of theorem \ref{existence}.

\vspace{1ex}

\begin{remark}
By Galilean invariance of (\ref{nc}), we deduce another family of solutions
\begin{equation}
\label{statbead}
\tilde{x}_n (t)= x_n (t) + v\, t, \ \ \ v=\mp a^{\frac{\alpha +1}{2}}\, p_\mu.
\end{equation}
In that case, each bead periodically switches between a pinning regime where it remains
stationary, and a contact regime where it interacts with
two or one neighbours. Each bead is shifted by $\mp 2\pi a p_\mu$
after one cycle of period $T=2\pi a^{\frac{1-\alpha}{2}}$, therefore bead displacements are unbounded in time.
Such wave profiles are reminiscent of stick-slip oscillations occuring in the Burridge-Knopoff model \cite{schm}, 
albeit this model corresponds to completely different
mathematical and physical settings involving differential inclusions and nonlinear friction. 
\end{remark}

\section{\label{num}Numerical results}

The analysis of section \ref{localcont} has proved the existence
of periodic travelling wave solutions of (\ref{nc}) with wavenumbers
$q\approx \pi$, close to the binary oscillations (\ref{binfam})
corresponding to $q=\pi$. In this section 
we numerically continue this solution branch by decreasing $q$
up to values close to $0$ and analyze the qualitative properties of the waves. 
Our computations are performed for $\alpha =3/2$.
In section \ref{compact}, we deduce the existence of compactons from
the numerical results obtained for periodic waves.

\subsection{\label{numc}Numerical continuation}

We discretize problem (\ref{ad})-(\ref{pbc}) using a second-order finite
difference scheme with step $h= \pi\, .10^{-3}$, and consider the discrete
set of wavenumbers $q=m\, \frac{\pi}{M}$ obtained with all integers $m$ in the interval $[1,M]$,
where we fix $M=50$. With this choice, the corresponding
delays $q$ occuring in (\ref{ad}) are multiples of the step $h$, hence
the advance-delay terms of (\ref{ad}) can be computed by
a discrete shift of the numerical solution. In addition
it is important to keep $h \ll q$ due to a boundary layer effect that occurs
at small wavenumbers (see below). The resulting nonlinear equation
is solved iteratively using a Broyden method \cite{dennis} and path-following,
starting the continuation at $q=\pi$ and $u=u_0$.
Note that higher-order finite difference schemes are useless for $\alpha=3/2$,
because $V^\prime \in C^{1, 1/2}(\mathbb{R})$ has only a limited smoothness,
which guarantees that $u$ is $C^3$ but not $C^4$.

\ve

Figure \ref{profiles} displays the solution $u$ computed numerically
for several values of $q$. As shown by figure \ref{normes},
the supremum norm of $u$ diverges as $q\rightarrow 0$ with
\begin{equation}
\label{normscaling}
\| u \|_{\infty} \sim k_\alpha\, q^{\frac{2}{1-\alpha}}
\end{equation}
and $k_{3/2}\approx 1.36$.

The solution is found exactly linear 
(more precisely, both terms at the right side of (\ref{ad}) vanish)
for $\xi \in [-\ell(q) + 2 k \pi,\ell(q)+ 2 k \pi]$ ($k\in \mathbb{Z}$), with 
$\lim\limits_{q\rightarrow 0}{\ell(q)}=\pi$ and $\lim\limits_{q\rightarrow \pi}{\ell(q)}=0$.
For $q \approx 0$, the linear part of $u$ 
behaves like $u(\xi ) \approx \frac{k_{\alpha}}{\pi}\, q^{\frac{2}{1-\alpha}}\, \xi$
for $\xi \in [-\ell(q) ,\ell(q)]$ (see figure \ref{pentes}).
More generally, the monotonicity properties of $u$ proved in lemmas
\ref{mono} and \ref{linumu} for $q=\pi + \mu \approx \pi$ 
are still valid for the numerical solution with $q \in (0,\pi ]$
(with the correspondence $\ell(q)=\xi_1 (q-\pi )$ between the notations of lemma  
\ref{linumu} and the present ones). 

\ve

Since $x_n (t)= u(q\, n - t)$,
the linear behaviour of $u$ corresponds to a free flight of some beads,
separated by equal gaps of size $u^\prime(0)\, q$.
Free flight occurs when $q\, n - t \in (-\ell(q) + 2 k \pi,\ell(q)+ 2 k \pi)$ ($k\in \mathbb{Z}$),
i.e. we have
\begin{equation}
\label{freef}
q\, n - t \in (-\ell(q) + 2 k \pi,\ell(q)+ 2 k \pi) \, \Leftrightarrow \,
x_{n+1}(t)>x_n(t), \ \ \ 
x_{n}(t)>x_{n-1}(t), \ \ \
\ddot{x}_n(t)=0. 
\end{equation}
For $\xi \in (0,2\pi )$, the two linear parts of $u(\xi )$ are connected by an inner solution
that becomes steeper as $q\rightarrow 0$, and extends over an interval of length $2 (\pi - \ell(q))$.
This part of $u(\xi )$ corresponds to the nonlinear interaction of some packets of beads.

\vspace{1ex}

It is interesting to study
the size of the packets of interacting beads and beads in free flight
as a function of the wavenumber $q$.
Following (\ref{freef}), and recalling that $q=m\, \frac{\pi}{M}$, we define
the $k$th packet of beads in free flight
as the set of beads with index $n$ in the interval 
\begin{equation}
\label{bff}
\mathbb{I}_{2k}(t)=(q^{-1}{(-\ell (q) +t)}+\frac{2Mk}{m},q^{-1}{(\ell (q) +t)}+\frac{2Mk}{m}),
\end{equation}
and the $k$th packet of interacting beads as the set of
beads with index $n$ in the interval 
\begin{equation}
\label{intb}
\mathbb{I}_{2k+1}(t)=[q^{-1}{(\ell (q) +t)}+\frac{2Mk}{m},q^{-1}{(-\ell (q) +t)}+\frac{2M(k+1)}{m}].
\end{equation}
Note that $\mathbb{I}_k (t)$ moves with the wave, so that 
any given bead periodically switches between the free flight and contact regimes. 

Let us define $a_k(t)=q^{-1}t+\frac{2M}{m}\, k+q^{-1}\ell (q)$ and
\begin{equation}
\label{defnq}
N(q)=2 q^{-1}(\pi - \ell(q) ),
\end{equation} 
so that $\mathbb{I}_{2k+1}(t)=[a_k(t),a_k(t)+N(q)]$.
We denote by $\mathcal{N}(a,L)$ the number of integers in the interval $[a,a+L]$,
equal to $\lfloor L \rfloor$ or $\lfloor L \rfloor+1$ depending on the values of $a$ and $L$
($\lfloor L \rfloor$ denotes the integer part of $L$). The function $\mathcal{N}$ satisfies
$$
\mathcal{N}(a+1,L)=\mathcal{N}(a,L), \ \ \
\int_0^1{\mathcal{N}(a,L)\, da}=L.
$$
The number of beads with index $n$ in $\mathbb{I}_{2k+1}(t)$ is then equal to
$\mathcal{N}(a_k(t),N(q))$, and switches between the two values
$ \lfloor N(q) \rfloor$ and $ \lfloor N(q) \rfloor+1$ during the motion.
This number being $q$-periodic in $t$ and $m$-periodic in $k$,
the average number of adjacent interacting beads reads
\begin{equation}
\label{avcomp}
\frac{1}{qm}\sum_{k=1}^m\int_0^q{\mathcal{N}(a_k(t),N(q))\, dt}=\int_0^1{\mathcal{N}(a,N(q))\, da}=N(q)
\end{equation}
(we have used the changes of variables $a=a_k(t)$ in above computation). 
Figure \ref{number} depicts the evolution of 
$N(q)$ when $q$ is varying in $(0,\pi )$. One observes that $N(\pi)=2$
(as expected for binary oscillations, since beads are paired for almost all times), and the value of $N$ at our minimal $q$ is
$N(\pi /50) \approx 4.8$, which is
close to the spatial extent of Nesterenko's solitary wave \cite{neste2}.

In addition, using similar arguments as above
(just considering $\mathbb{I}_{2k}(t)$ instead of $\mathbb{I}_{2k+1}(t)$), 
the average number of adjacent
beads in free flight is $F(q)=\frac{2\ell(q)}{q}$, i.e. it is equal to the
length of the interval of linear increase of $u$ divided by the wavenumber $q$.
The number of beads with index $n$ in the open interval $\mathbb{I}_{2k}(t)$ switches between 
the two values $ \lceil F(q) \rceil -1$ and $\lceil F(q) \rceil $ during the motion, where
$\lceil F \rceil $ denotes the smallest integer not less than $F$
(note that $ \lceil F \rceil -1$ and $\lfloor F \rfloor$ coincide when $F$ is not an integer).
Figure \ref{lsq} provides the graph of $F(q)$. One can notice that
$F(q)> 1$ when $q< q_k \approx 1.8$.
In that case, two packets of interacting beads are always separated by
some beads in free flight.

\begin{figure}[!h]
\psfrag{x}[0.9]{ $\xi$}
\psfrag{u}[1][Bl]{ $u(\xi )$}
\begin{center}
\includegraphics[scale=0.4]{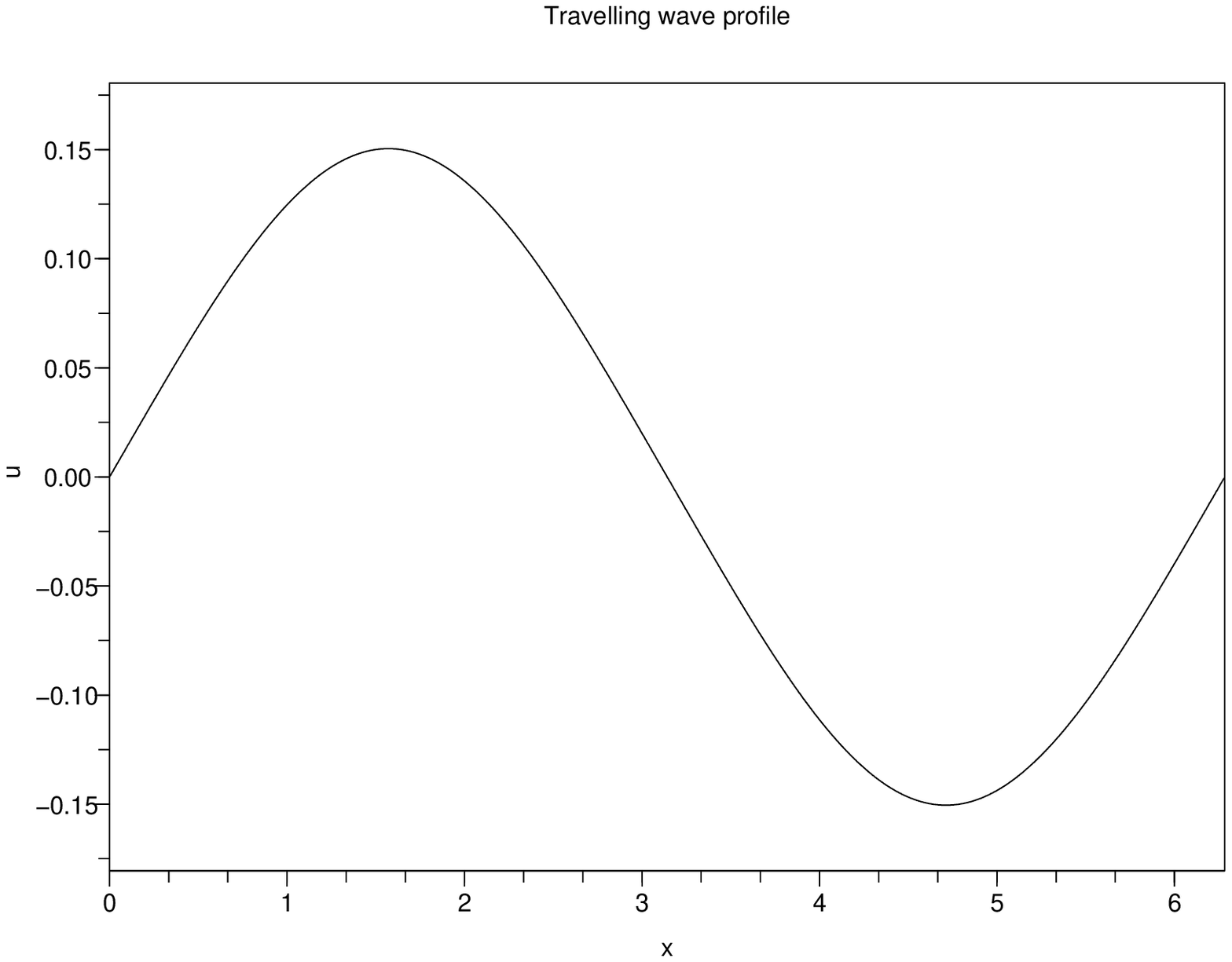}
\includegraphics[scale=0.4]{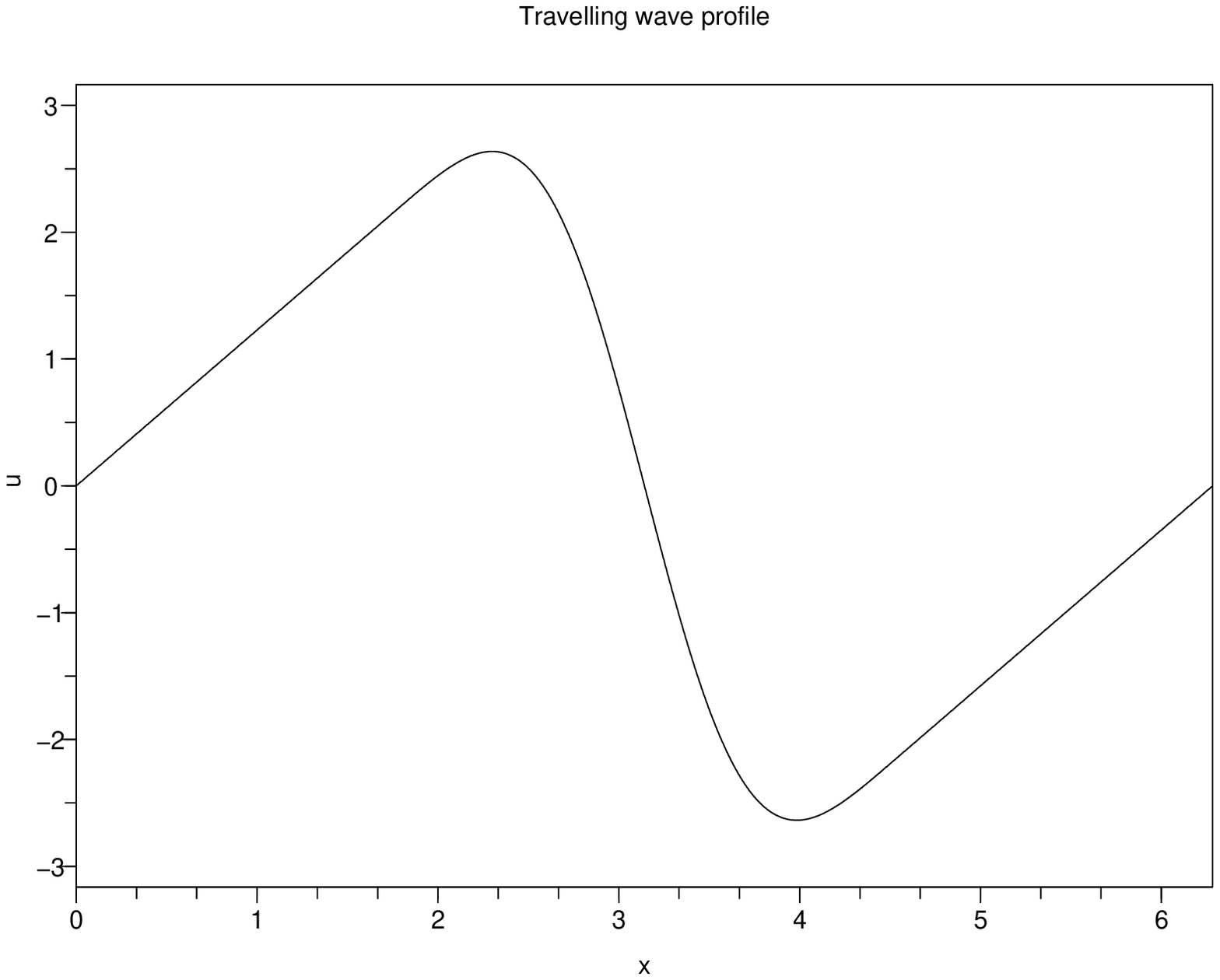}
\includegraphics[scale=0.4]{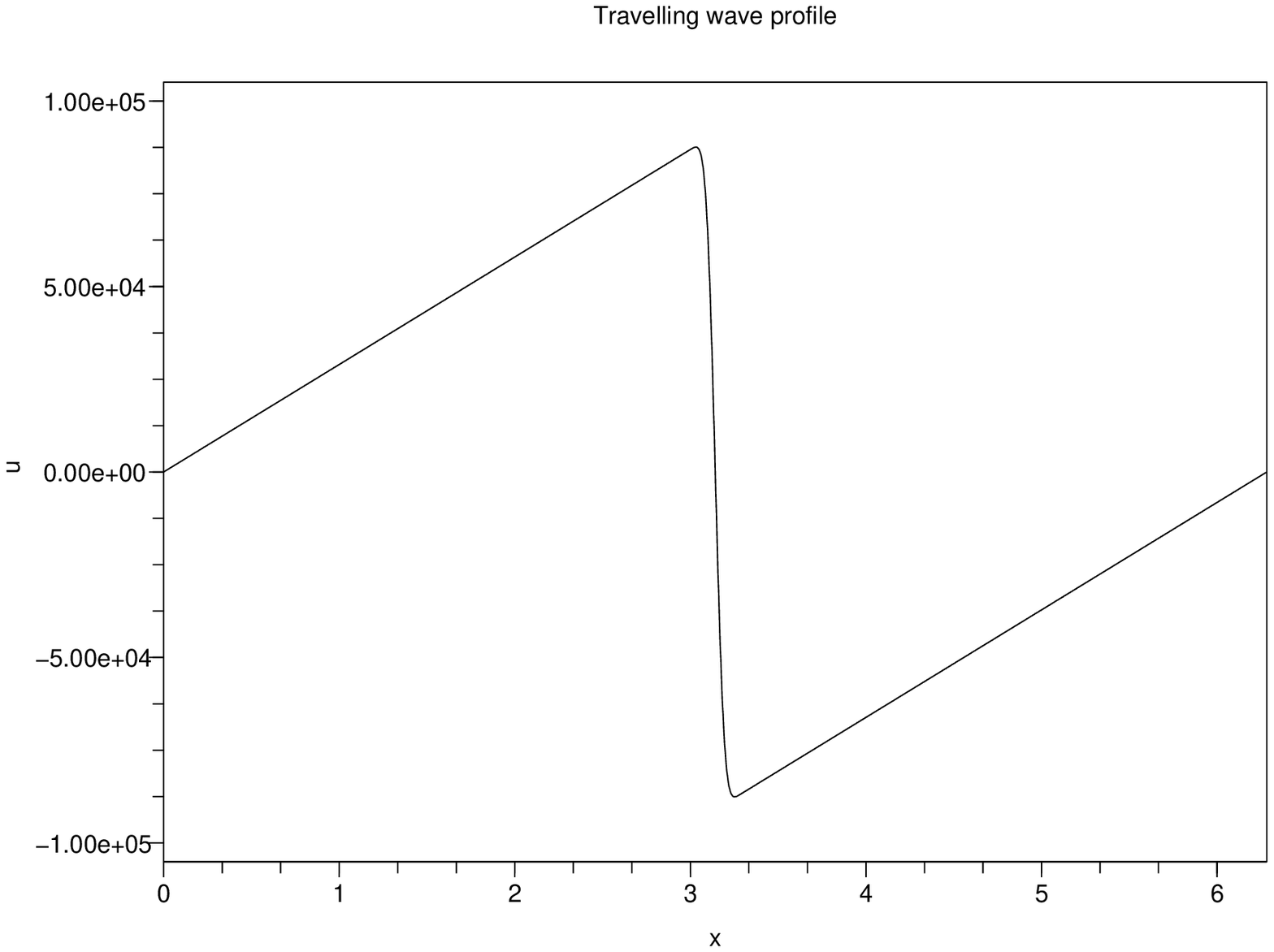}
\end{center}
\caption{\label{profiles} 
Solution of (\ref{ad})-(\ref{pbc}) computed numerically for
$q=\pi$ (top), $q=3\pi / 10$ (middle) and $q=\pi / 50$ (bottom).
}
\end{figure}

\begin{figure}[!h]
\psfrag{q}[0.9]{ $q$}
\psfrag{norm}[1][Bl]{ $\| u \|_{\infty}$}
\begin{center}
\includegraphics[scale=0.4]{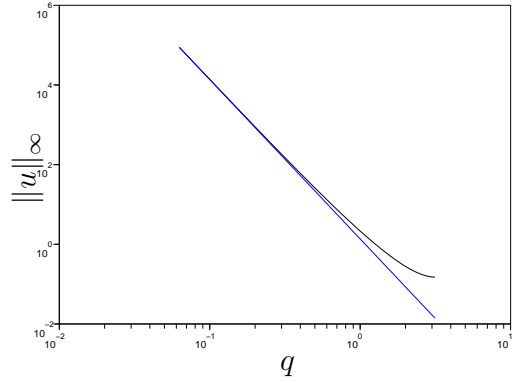}
\end{center}
\caption{\label{normes} 
Graph of the
supremum norm of $u$ (in logarithmic scale)
when the wavenumber $q$ varies.
The line corresponds to the approximation
$\| u \|_{\infty}\approx k_{\alpha}\, q^{\frac{2}{1-\alpha}}$ for $\alpha =3/2$
and $k_{3/2}\approx 1.36$.
}
\end{figure}

\begin{figure}[!h]
\psfrag{q}[0.9]{ $q$}
\psfrag{slope}[1][Bl]{ $u^\prime $}
\begin{center}
\includegraphics[scale=0.4]{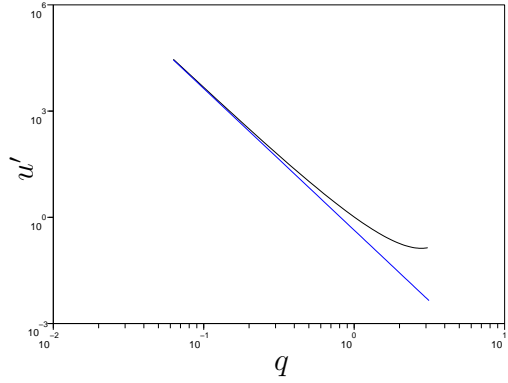}
\end{center}
\caption{\label{pentes}
The curve gives for the different wavenumbers $q$
the constant value of $u^\prime $ in the free flight interval
$\xi \in [-\ell(q) ,\ell(q)]$ (plot in logarithmic scale).
The line corresponds to the approximation
$u^\prime \approx \frac{k_{\alpha}}{\pi}\, q^{\frac{2}{1-\alpha}}$ for $\alpha =3/2$ and $k_{3/2}\approx 1.36$.
}
\end{figure}

\begin{figure}[!h]
\psfrag{q}[0.9]{ $q$}
\psfrag{N}[1][Bl]{ $N(q )$}
\begin{center}
\includegraphics[scale=0.4]{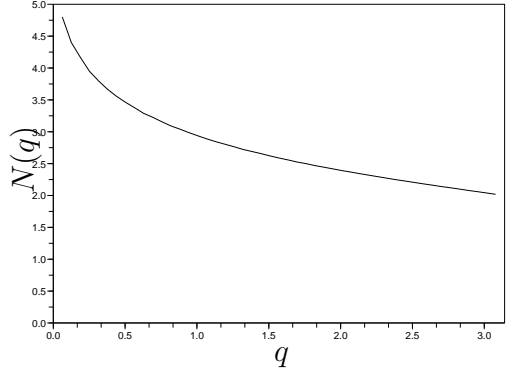}
\end{center}
\caption{\label{number} 
Average number $N(q)$ of consecutive interacting beads when the wavenumber $q$ varies.
}
\end{figure}

\begin{figure}[!h]
\psfrag{q}[0.9]{ $q$}
\psfrag{2l/q}[1][Bl]{ $\frac{2\ell(q)}{q}$}
\begin{center}
\includegraphics[scale=0.4]{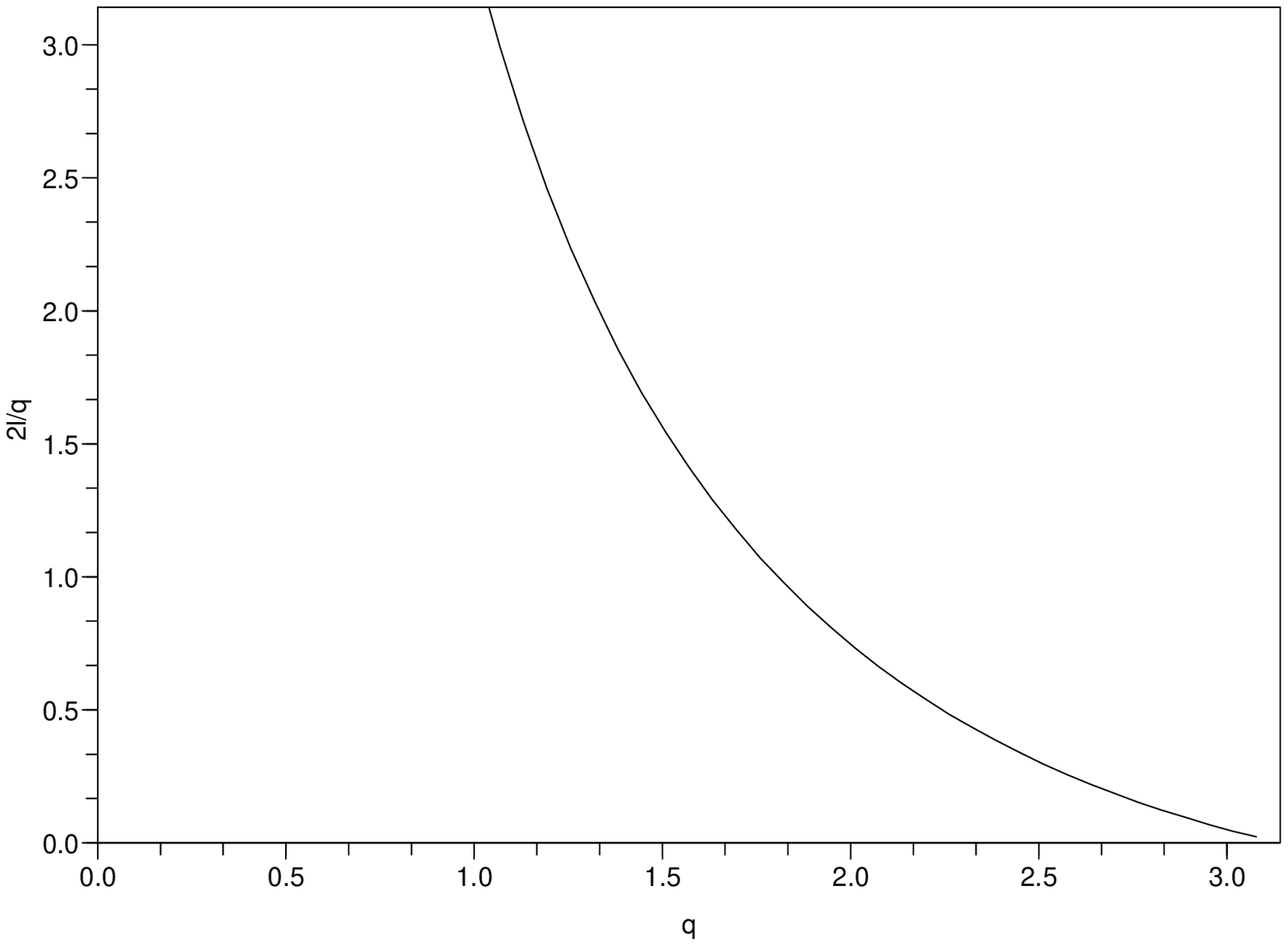}
\end{center}
\caption{\label{lsq} 
Graph of $F(q)=\frac{2\ell(q)}{q}$, the average number of adjacent beads in free flight.}
\end{figure}

\ve

Now let us describe the nonlinear dispersion relation satisfied by the periodic travelling waves. 
Let us denote by $u(\xi;q)$ the family of $2\pi$-periodic wave profiles
parametrized by $q\in (0,\pi ]$, computed by numerical continuation from
$u(.; \pi)=u_0$. We recall that these solutions correspond to travelling wave
solutions of (\ref{nc}) given by
\begin{equation}
\label{twper}
x_n (t)=a\, u(q\, n - a^{\frac{\alpha -1}{2}} t  + \phi ; q ),
\end{equation}
where $a>0$ is a parameter. Instead of $a$ we now choose the wave amplitude
$$
A=\| \{ x_n  \} \|_{L^\infty (\mathbb{Z} \times \mathbb{R})}=a\, {\|  u(.;q)\|_\infty}
$$ 
as a new parameter.
The frequency of solution (\ref{twper}) is consequently given by the dispersion relation
\begin{equation}
\label{dispers}
\omega (q,A)=A^{\frac{\alpha -1}{2}}\, {\|  u(.;q)\|}_\infty^{\frac{1-\alpha}{2}}. 
\end{equation}
From the above numerical results, we deduce
\begin{equation}
\label{disperseq}
\omega (q,A) \sim k_\alpha^{\frac{1-\alpha}{2}}\, A^{\frac{\alpha -1}{2}}\, q\ \ \
\mbox{ as }q\rightarrow 0.
\end{equation}
Consequently, the dispersion relation (\ref{dispers}) behaves linearly in $q$
at fixed $A\neq 0$ when $q\rightarrow 0$,
and one has $\frac{\partial \omega}{\partial q}(0,0)=0$,
which corresponds to a vanishing ``sound velocity" in the limit of small amplitudes.
This property is consistent with the fact that, in the granular chain with 
a precompression $f_0$, the sound velocity
vanishes when $f_0 \rightarrow 0$. 
The graph of $\omega (.,A)$ is shown in figure \ref{dispersplot} for $A=A_0={\|  u_0  \|}_\infty$
(so that $\omega (\pi ,A_0)=1$).

In what follows we formally analyze why the scaling (\ref{normscaling}) of ${\|  u(.;q)\|}_\infty$ occurs 
when $q\rightarrow 0$ and
heuristically compute the constant $k_\alpha$ present in the dispersion relation 
(\ref{disperseq}).

\begin{figure}[!h]
\psfrag{q}[0.9]{ $q$}
\psfrag{om}[1][Bl]{ $\omega$}
\begin{center}
\includegraphics[scale=0.4]{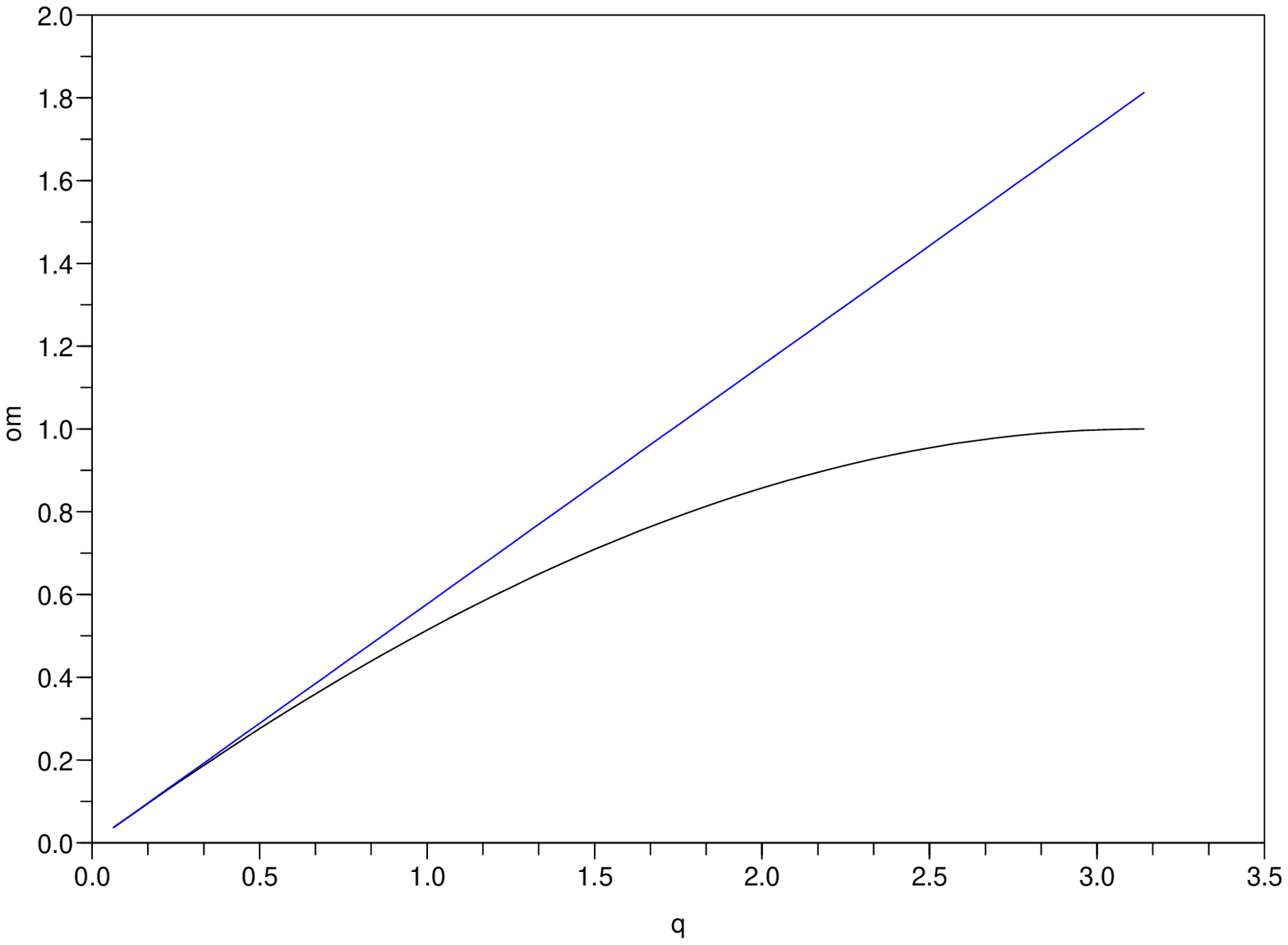}
\end{center}
\caption{\label{dispersplot} 
Plot of the dispersion relation (\ref{dispers}) for $\alpha=3/2$ and $A={\|  u_0 \|}_\infty$, 
and its linear approximation at $q=0$ given by (\ref{disperseq}).}
\end{figure}

\subsection{\label{longw}Long wave limit}

Let us examine the limit $q\rightarrow 0$ more closely.
We first normalize the solution of (\ref{ad})-(\ref{pbc}) computed in section \ref{numc}
by setting $u= q^{\frac{2}{1-\alpha}}\, \tilde{u}$,
which yields
\begin{equation}
\label{adresc0}
q^2\, \tilde{u}^{\prime\prime} (\xi ) =
V^\prime (\tilde{u}(\xi +q)-\tilde{u}(\xi ))-V^\prime (\tilde{u}(\xi )-\tilde{u}(\xi -q)),
\end{equation} 
or equivalently
\begin{equation}
\label{adresc1}
\tilde{u}^{\prime\prime} (\xi ) =q^{\alpha -1}\, \frac{1}{q}\, \big(
V^\prime [\,  \frac{\tilde{u}(\xi +q)-\tilde{u}(\xi )}{q} \, ]-V^\prime [\, \frac{\tilde{u}(\xi )-\tilde{u}(\xi -q)}{q}   \, ]
\big),
\end{equation} 
where we have used the fact that $V^\prime (x)= -|x|^{\alpha}\, H(-x)$.
For $q\approx 0$, we replace the right side of (\ref{adresc1}) by its continuum limit
\begin{equation}
\label{adresc2}
\tilde{u}^{\prime\prime} (\xi ) =q^{\alpha -1}\, 
\frac{d}{d\xi}\, 
V^\prime (\tilde{u}^\prime ).
\end{equation} 
Setting $q=0$ in (\ref{adresc2}) and using the fact that $\tilde{u}(0)=0$, 
we get the outer approximate solution 
\begin{equation}
\label{outer}
\tilde{u}_o(\xi ) = \frac{\lambda}{\pi}\, \xi, \ \ \ \xi \in [0,\pi ),
\end{equation}
where $\lambda$ will be subsequently determined by a matching condition.
When $q$ is small, equation (\ref{adresc2}) with $q=0$ does not 
describe the travelling wave profile in a boundary layer around
$\xi=\pi$ where $\tilde{u}^\prime$ becomes large, and
a rescaling of $\tilde{u}$ is necessary to capture the structure of the inner solution.
Setting $s=q^{-1}\, (\xi - \pi )$ and $\tilde{u}(\xi )= y(s)$, equation
(\ref{adresc0}) becomes
\begin{equation}
\label{adresc}
y^{\prime\prime} (s ) =
V^\prime (y(s +1)-y(s ))-V^\prime (y(s )-y(s -1)), \ \ \
s \in \mathbb{R}.
\end{equation} 
There exists an odd solution $y_{\rm{sol}}$ of (\ref{adresc}) satisfying
\begin{equation}
\label{bcresc}
\lim_{s\rightarrow - \infty}{y_{\rm{sol}}(s)}= k_\alpha, \ \ \
\lim_{s\rightarrow + \infty}{y_{\rm{sol}}(s)}=- k_\alpha,
\end{equation} 
where the convergence towards $k_\alpha $ is super-exponential
and 
\begin{equation}
\label{approxkal}
k_{3/2}\approx 1.3567
\end{equation}
(see appendix \ref{comprec}).
This solution corresponds to a solitary wave solution of (\ref{nc}) \cite{neste2,mackay,english,stef}
with velocity equal to unity. It yields the inner solution
\begin{equation}
\label{inner}
\tilde{u}_i(\xi ) = y_{\rm{sol}}(\frac{\xi - \pi}{q})
\end{equation}
which agrees very well with the numerical solution in the vicinity of $\xi=\pi$
(see figure \ref{approxz}).
Now let us match the outer and inner solutions at some point $\xi = \pi - q\, A(q)$ when $q\rightarrow 0$,
where $A(q)>0$, $\lim_{q\rightarrow 0}{(q\, A(q))}=0$ and $\lim_{q\rightarrow 0}{A(q)}=+\infty$.
The matching condition reads
$$
\lim_{q\rightarrow 0}{\lambda\, (1-\frac{q}{\pi}\, A(q))}=\lim_{q\rightarrow 0}{y_{\rm{sol}}(-A(q))},
$$
which gives consequently
$\lambda = k_\alpha$. This approximation is in reasonable agreement with the numerical results since
the value of $k_{3/2}$ deduced from figure \ref{normes} differs from the value (\ref{approxkal}) by
$8.10^{-3}$ ($q$ should be further decreased to get a better agreement).
From the previous analysis, we deduce the following approximate solution of (\ref{adresc0})
obtained by summing the inner and outer solutions and substracting their common
value at the matching point
\begin{equation}
\label{solapprox}
\tilde{u}_{\rm{app}}(\xi) \approx 
\left\{
\begin{array}{ll}
{\displaystyle \frac{k_\alpha}{\pi}\, \xi + y_{\rm{sol}}(\frac{\xi - \pi}{q}) - k_\alpha,}
&
\xi \in [0,\pi ], \\
{\displaystyle
\frac{k_\alpha}{\pi}\, (\xi -2\pi ) + y_{\rm{sol}}(\frac{\xi - \pi}{q}) + k_\alpha,}
&
\xi \in [\pi, 2\pi ].
\end{array}
\right.
\end{equation}
The numerical and approximate solutions are compared in figure \ref{approx},
which shows a good agreement between both profiles.
The mathematical justification of this formal asymptotic analysis is left
as an interesting open problem.

\ve

With approximation (\ref{solapprox}) at hand, let us now examine the wave structure
in more details. Setting $a=q^{\frac{2}{\alpha -1}}$, $\phi=\pi$ in (\ref{twper}) and using
(\ref{solapprox}) yields approximate travelling wave solutions
$x_n^{\rm{app}}(t)=\tilde{u}_{\rm{app}}(q(n-t)+\pi )$ with velocity
equal to unity and $O(1)$ amplitude when $q\rightarrow 0$.
These solutions take the form
\begin{equation}
\label{twapprox}
x_n^{\rm{app}}(t)= \frac{k_\alpha}{\pi}\, q(n-t)+y_{\rm{sol}}(n-t),\ \ \
\mbox{ for } |n-t | < \frac{\pi}{q} 
\end{equation}
and are $\frac{2\pi}{q} $-periodic with respect to
the moving frame coordinate $s=n-t$.
These waves consist of a succession of
compression solitary waves, separated by large regions of free flight (of size $O(q^{-1})$) 
where particles move at a constant $O(q)$ velocity and equal gaps of size $O(q)$
are present between adjacent beads.

\begin{figure}[!h]
\psfrag{x}[0.9]{ $\xi$}
\psfrag{u}[1][Bl]{ $\tilde{u}(\xi )$}
\begin{center}
\includegraphics[scale=0.4]{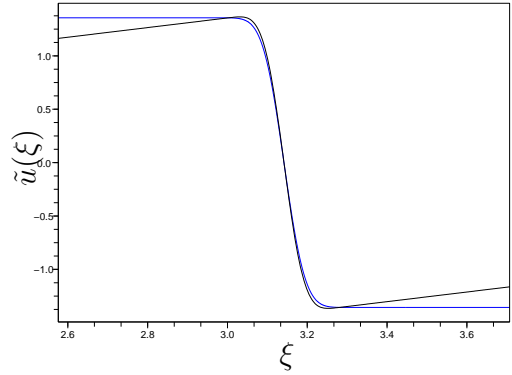}
\end{center}
\caption{\label{approxz} 
Solution of (\ref{adresc0}) with period $2\pi$ computed numerically for
$q=\pi / 50$, with a zoom in the vicinity of $\xi=\pi$ (black curve).
This solution is compared with the inner solution (\ref{inner})
corresponding to the solitary wave studied in the appendix (blue curve).
Both profiles agree very well in
a domain of width $\delta \approx 0.2 $ around $\xi=\pi$, $\delta$ being
approximately four times larger than $1/q$.}
\end{figure}

\begin{figure}[!h]
\psfrag{x}[0.9]{ $\xi$}
\psfrag{u}[1][Bl]{ $\tilde{u}(\xi )$}
\begin{center}
\includegraphics[scale=0.4]{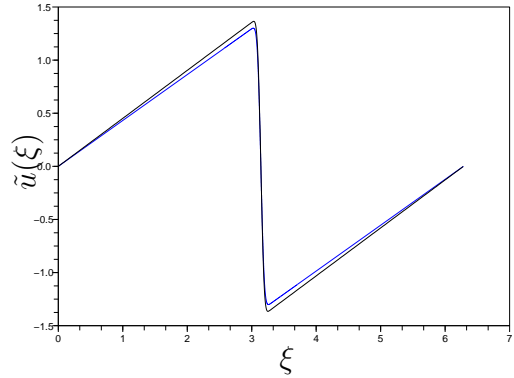}
\end{center}
\caption{\label{approx} 
Solution of (\ref{adresc0}) with period $2\pi$
computed numerically for $q=\pi /50$ (black curve) and its approximation (\ref{solapprox})
(blue curve).}
\end{figure}

\subsection{\label{compact}Compacton solutions}

Several approximations of compression solitary waves
with compact supports have been derived in the literature
\cite{neste1,neste2,ap}, in order to approximate exact
solitary wave solutions of (\ref{nc}) decaying super-exponentially \cite{english,stef}. 
However, the existence of exact solitary wave solutions of (\ref{nc}) with compact
support remained an open question. 
Such solutions are supported by different classes of Hamiltonian PDE with 
fully-nonlinear dispersion, and have been known
as {\em compactons} after the work of Rosenau and Hyman \cite{rh}.
Nonlinear PDE supporting compactons 
can be introduced to analyze 
lattices with fully nonlinear interactions near a continuum limit. 
In this context, a common scenario is the transition from a compacton
to a noncompact (super-exponentially localized) solution when
passing from the continuum model to the discrete lattice \cite{ap}
(see also \cite{ros,jamesc2} for similar results in the context of discrete breather solutions).

\vspace{1ex}

In contrast to this situation, we
show in this section that the numerical results of sections
\ref{numc} imply the existence of compactons for the full lattice (\ref{nc}).
This result is due to the occurence of free flight analyzed prevously, made possible
by the unilateral character of Hertzian interactions, which vanish when beads are not in contact.

\vspace{1ex}

Let us consider the solutions $u(\xi;q)$ of (\ref{ad})-(\ref{pbc}) obtained numerically.
These solutions behave linearly on the intervals
$I_0=[-\ell(q) , \ell(q)]$ and $I_1=[2\pi -\ell(q) , 2\pi + \ell(q)]$. Denoting by
$p(q)=u^\prime (0;q)$ their slope in these intervals, we have $u(\xi;q)=p(q)\, \xi$ on $I_1$ and
$u(\xi;q)=p(q)\, (\xi - 2\pi )$ on $I_2$. 

In what follows we assume $F(q)=2 \ell (q)/q  \geq 1$, which corresponds to fixing 
$q\leq q_k \approx 1.8$ (see figure \ref{lsq}). This condition can be interpreted as follows.
In the case of a strict inequality
$F(q)>1$, in a chain of beads where the wave $u(\xi;q)$ propagates, 
we have seen that two packets of interacting beads defined by (\ref{intb})
are always separated by some beads in free flight, hence
$u(\xi;q)$
can be interpreted as a periodic train of independent pulses of finite width (i.e. compactons). 
Moreover, under the assumption $F(q) \geq 1$, the length of the intervals $I_1$ and $I_2$ is 
larger than the delay $q$ involved in (\ref{ad}). 

If $F(q) \geq 1$,
the solution $u(\xi;q)$ can be linearly extended in order to get a new solution of (\ref{ad}), defined by
\begin{equation}
\label{extens}
U_q(\xi )=
\left\{
\begin{array}{ccc}
p(q)\, \xi & \mbox{ for } & \xi \leq -\ell (q),\\
u(\xi;q) & \mbox{ for } & \xi \in [-\ell (q), 2\pi + \ell(q)], \\
p(q)\, (\xi - 2\pi ) & \mbox{ for } & \xi \geq 2\pi + \ell(q).
\end{array}
\right.
\end{equation}
The profile of $U_q$ is plotted in figure \ref{graphulin} for $q\approx 0.88$.
Let us check that $U_q$ defines a solution of (\ref{ad}) when $F(q) \geq 1$
(for notational simplicity we shall omit the $q$-dependency).
For $\xi \in I^{-}=(-\infty , \ell ]$ we have $U^{\prime\prime}(\xi)=0$,
$U(\xi + q )-U(\xi) \geq 0$ and $U(\xi )-U(\xi -q) \geq 0$, hence 
$U$ is a solution of (\ref{ad}) on $I^{-}$. The same properties hold true 
for $\xi \in I^{+}=[2\pi - \ell , +\infty )$. Moreover, $U$ is a solution of
(\ref{ad}) for $\xi \in I^{c}=[q- \ell , 2\pi + \ell -q]$ because
$U=u$ and $\tau_{\pm q} U=\tau_{\pm q} u$ on $ I^{c}$.
Since the condition $F \geq 1$ is equivalent to having 
$\mathbb{R}=I^{-} \cup I^{c} \cup I^{+}$, $U$ defines a solution of (\ref{ad}) 
on $\mathbb{R}$.

The solutions of (\ref{ad}) defined by (\ref{extens}) correspond to travelling wave
solutions of (\ref{nc}) given by
\begin{equation}
\label{twlin}
X_n (t)=a\, U_q(q\, n - a^{\frac{\alpha -1}{2}} t +\phi ),
\end{equation}
where $a>0$, $q\leq q_k$ and $\phi \in \mathbb{R}$ are 
parameters. Moreover,
by Galilean invariance of (\ref{nc}), we deduce another family of solutions
\begin{equation}
\label{compacton}
\tilde{X}_n (t)= X_n (t) + v\, t
\end{equation}
where we fix
\begin{equation}
\label{defvit}
v= a^{\frac{\alpha +1}{2}}\, p(q).
\end{equation}
These solutions correspond to single compactons. Indeed,
they consist of travelling waves, in the sense that bead velocities
$\dot{X}_n (t)$ and relative displacements $X_n (t) - X_{n-1}(t)$ are
functions of $n-c\, t$ with $c = q^{-1} a^{\frac{\alpha -1}{2}}$. Moreover,
each bead remains stationary
except in a finite time interval where it experiences a compression. For a bead with
index $n$, compression occurs when $q\, n - a^{\frac{\alpha -1}{2}} t + \phi \in (\ell (q), 2\pi - \ell(q))$,
i.e. when $U_q$ does not behave linearly.
Fixing $\phi = - \ell (q)$, the support of the moving compacton at time $t$
corresponds to $n   \in ( c\, t , c \, t +  N(q))$,
where $N(q)$ is defined by (\ref{defnq}). 

Beads outside the support are separated
by equal gaps of size $\Delta = a\, p(q)\, q$ and each bead position is shifted by $2\pi a p(q)$
after the passage of the compacton. The gap $\Delta$ depends 
on the parameters $a$ and $q$, or equivalently on the
compacton width $N(q)$ and velocity $c$. Fixing $c=1$ and letting
$q \rightarrow 0$, the gap becomes $O(q)$ and the compacton profile
converges towards the classical solitary wave in the compression region
(see section \ref{longw}).

\ve

Similarly to (\ref{extens}), one can obtain a
family of $2$-compacton solutions of (\ref{ad}) 
by gluing the solutions $U_q$ and $U_{q,\theta}^{(1)}(\xi )=p(q)\, \theta+ U_q (\xi  - 2\pi - \theta )$.
Let us define
\begin{equation}
\label{extens2}
U^{(2)}_{q,\theta}(\xi )=
\left\{
\begin{array}{ccc}
U_q( \xi ) & \mbox{ for } & \xi \leq 2\pi + \theta +\ell (q),\\
U_{q,\theta}^{(1)}(\xi ) & \mbox{ for } & \xi \geq 2\pi -\ell (q),
\end{array}
\right.
\end{equation}
where $\theta = q\, d$, $d\geq 0$ is a parameter and $q\in (0, q_k ]$, so that $q\leq 2 \ell (q)$.
The graph of $U^{(2)}_{q,\theta}$ is illustrated by figure \ref{graphulin}
(note that $U_q$ and $U_{q,\theta}^{(1)}$ coincide on $[2\pi -\ell (q) , 2\pi + \theta +\ell (q)]$).
Let us check that $U^{(2)}_{q,\theta}$ defines a solution of (\ref{ad})
(for simplicity we shall omit dependency in $q$ and $\theta$ in notations).
Firstly, $U^{(2)}$ is a solution of
(\ref{ad}) for $\xi \in J^{-}=(- \infty , 2\pi + \theta +\ell -q]$ because
$U^{(2)}=U$ and $\tau_{\pm q} U^{(2)}=\tau_{\pm q} U$ on $ J^{-}$.
Similarly, $U^{(2)}$ is a solution of (\ref{ad}) on $J^+ = [2\pi -\ell +q, +\infty )$
because
$U^{(2)}=U^{(1)}$ and $\tau_{\pm q} U^{(2)}=\tau_{\pm q} U^{(1)}$ on $ J^{+}$.
If $q \leq \ell + \theta/2$ we have $J^{-} \cup  J^{+}= \mathbb{R}$ and thus
$U^{(2)}$ defines a solution of (\ref{ad}) on $\mathbb{R}$.
Moreover, if $\ell + \theta/2 <q \leq 2 \ell$, then we have $(U^{(2)})^{\prime\prime}(\xi)=0$,
$U^{(2)}(\xi + q )-U^{(2)}(\xi) > 0$ and $U^{(2)}(\xi )-U^{(2)}(\xi -q) > 0$
for all $\xi \in J^c = (2\pi + \theta +\ell -q , 2\pi -\ell +q )$, hence 
$U^{(2)}$ is a solution of (\ref{ad}) on $J^{c}$, which yields a solution of (\ref{ad}) on
$\mathbb{R} =  J^{-} \cup J^c \cup  J^{+}$.

\ve

Each solution of (\ref{nc})
\begin{equation}
\label{2compacton}
\tilde{X}_n (t)= a\, U^{(2)}_{q,\theta}(q\, n - a^{\frac{\alpha -1}{2}} t +\phi ) + v\, t
\end{equation}
with $v$ defined by (\ref{defvit})
consists of  
two independent compactons separated by the distance $F(q)+d$, with stationary beads
outside the compactons.
Similarly, generalizing the above construction would allow to define an infinity
of solutions of (\ref{nc}), corresponding to an arbitrary number of independent compactons
with variable compacton spacing.

\begin{figure}[h]
\begin{center}
\includegraphics[scale=0.4]{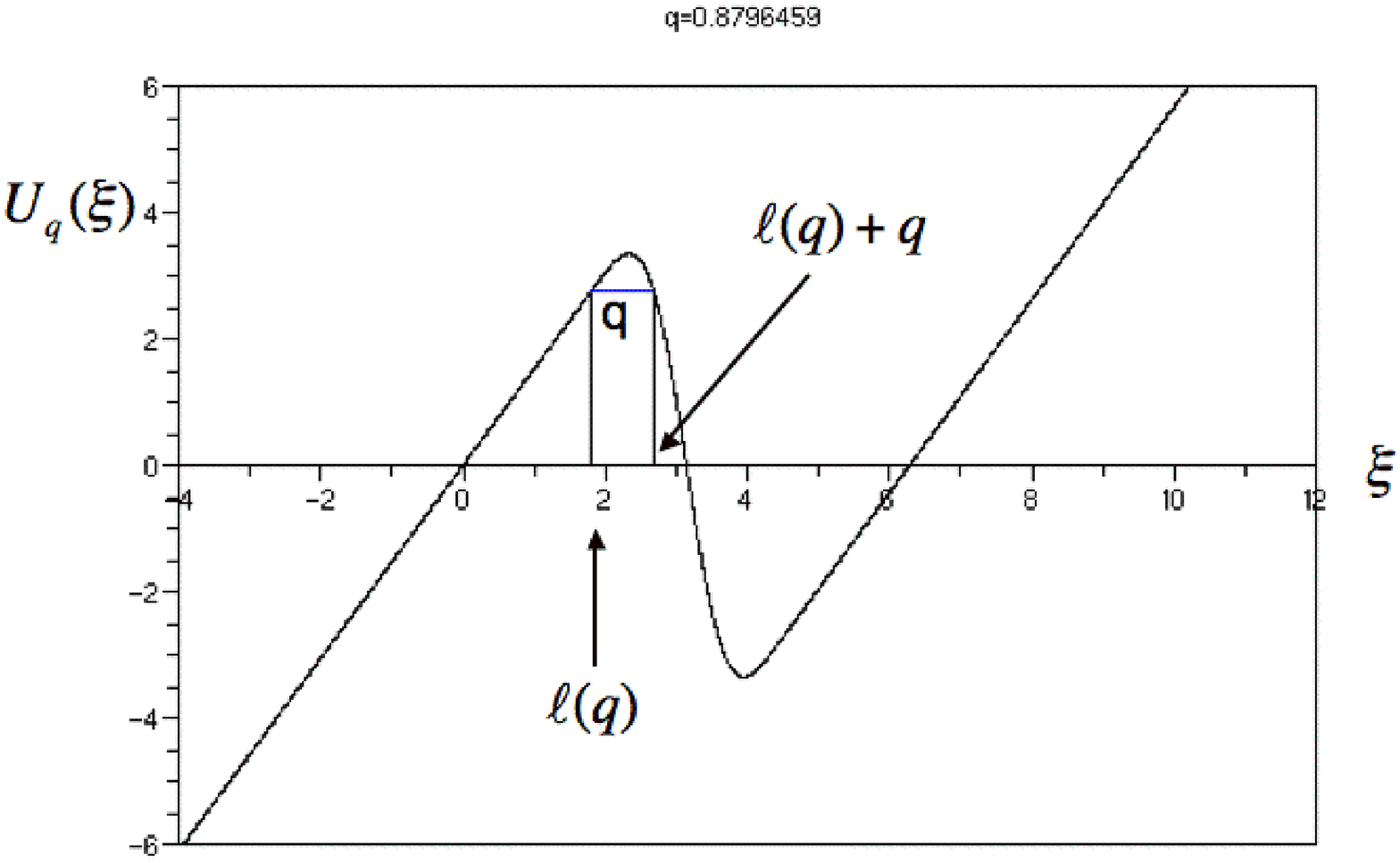}
\includegraphics[scale=0.4]{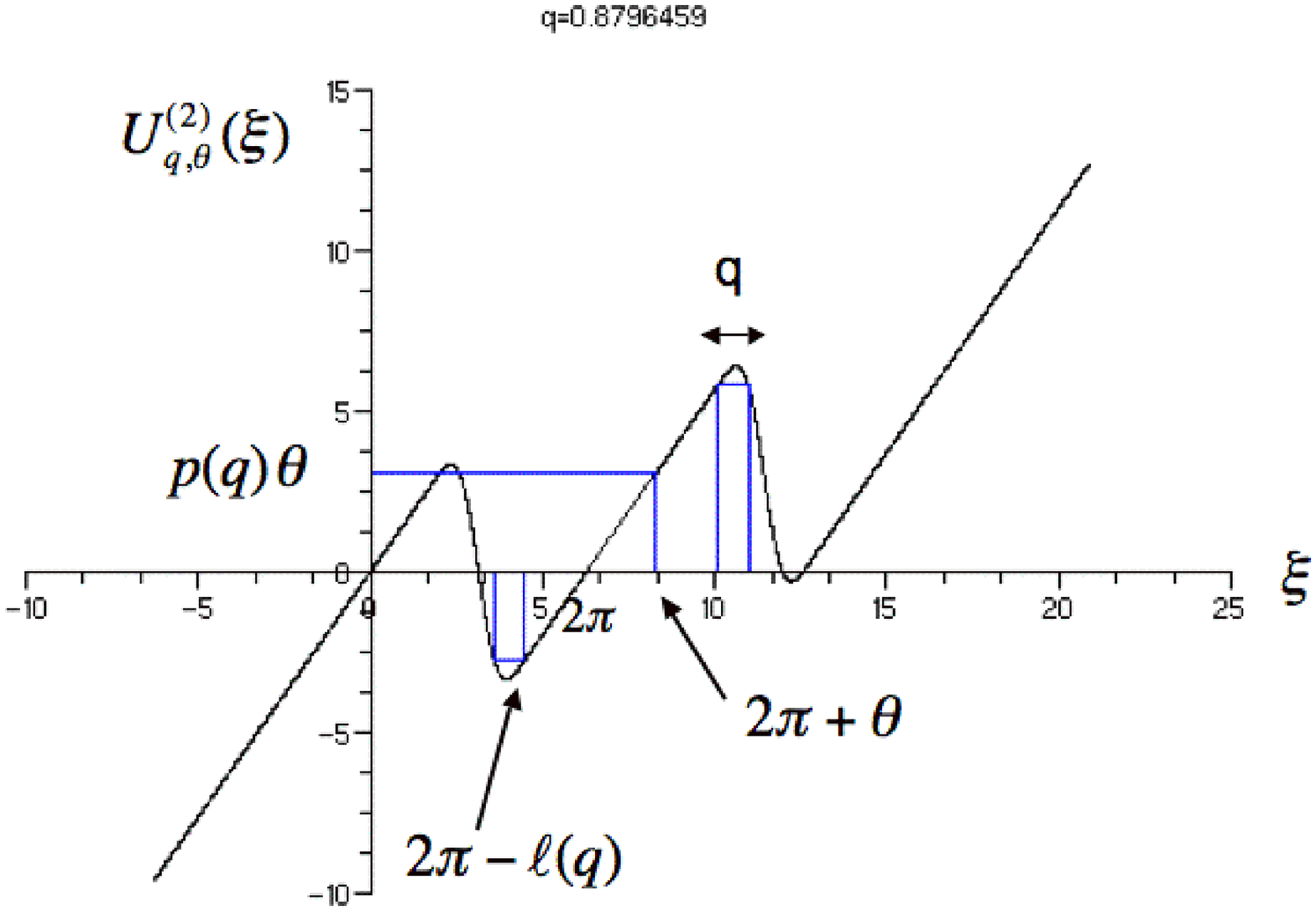}
\end{center}
\caption{\label{graphulin} 
Top panel~:
graph of a solution of (\ref{ad}) defined by (\ref{extens}), corresponding to
a single compacton (we have set $q\approx 0.88$).
Bottom panel~: graph of a solution of (\ref{ad}) defined by (\ref{extens2}),
corresponding to a $2$-compacton
(we have fixed the same value of $q$ and $\theta =2$).
}
\end{figure}

\vspace{1ex}

Note that the remark made here on the existence of compactons 
follows from the numerical results obtained on problem (\ref{ad})-(\ref{pbc})
for wavenumbers $q\leq q_k$. Obtaining an analytical proof of
the existence of compactons remains an open problem. 

\subsection{\label{errs}Error minimization}

In this section we evaluate and improve the numerical precision of 
the computations performed in section \ref{numc}.
Our numerical approach is in the same spirit as former computational methods
developed to analyze the propagation of discrete breathers in nonlinear lattices
\cite{aubryC} (see also \cite{num97}). 

\ve

We consider a chain of $100$ particles with periodic
boundary conditions
\begin{equation}
\label{pbcl}
x_{n+N}(t)=x_n (t), \ \ \ N=100,
\end{equation}
so that the lattice
period is an integer multiple of the wavelength $\lambda = 2\pi / q = 100/m$ for our discrete set of values of $q$.
We fix $a={\| u(.;q)\|}_{\infty}^{-1}$ in (\ref{twper}),
where $u(.;q)$ is the solution of (\ref{ad})-(\ref{pbc}) that was
computed numerically for different values of $q$.
We use the resulting profile as an initial condition in system (\ref{nc}),
which determines a periodic travelling wave solution $X_n = (x_n,\dot{x}_{n})$.
Numerical integrations are performed using the standard ODE solver of the
software package Scilab, 
where we have fixed the relative and absolute error tolerances to $10^{-12}$ and $10^{-14}$ respectively.
For each solution $u(.;q)$, we compute the relative residual error
$$
E(q)=\frac{{\| {\{ X_{n+1}( \mathcal{T}) - X_{n}(0) \}}_n  \|}_{\infty}}{{\| {\{ X_{n}(0) \}}_n  \|}_{\infty}}
$$
with $\mathcal{T}=q\, a^{\frac{1-\alpha }{2}}$, so that
an exact travelling wave solution with velocity $\mathcal{T}^{-1}$ would cancel $E(q)$.
The inverse wave velocity $\mathcal{T}(q)$ is consequently given by
\begin{equation}
\label{invv}
\mathcal{T}(q) = q\, {\| u(.;q)\|}_{\infty}^{\frac{\alpha -1}{2}}.
\end{equation}
Figure \ref{errors} displays the graph of $E(q)$ (dotted line).
With our numerical solution,
this error remains less than $h^2 \approx 10^{-5}$ for $q\geq 6\pi /25$ and grows
when $q$ is further decreased, reaching $3.10^{-4}$ at $q=\pi /50$.
We attribute these larger errors to the sharpness of the travelling wave in the compression
region when $q$ is sufficiently decreased. 
One way to improve the results would consist in discretizing
(\ref{ad})-(\ref{pbc}) with a nonuniform mesh, thinner in the 
compression region. However, in what follows we use a different approach 
which decreases the residual error by several orders of magnitude on the
whole range of wavenumbers $q$.

\ve

We use the Gauss-Newton method \cite{dennis} in order to 
determine a refined initial condition ${\{ X_{n}(0) \}}_{n=1,\ldots ,N}$
minimizing ${\| {\{ X_{n+1}(\mathcal{T}) - X_{n}(0) \}}_n \|}_{2}^2$
(the Newton method is not directly applicable due to the
noninvertibility of the Jacobian matrix at the exact solution).
The improved initial condition obtained in this way for a given wavenumber $q$
will be denoted by $X_{n}(0)=X_{n }^{(q)}$. To compute these solutions,
we initialize the Gauss-Newton iteration 
using the numerical solutions $u(.;q)$
of (\ref{ad})-(\ref{pbc}) obtained previously, since they already 
provide initial conditions quite close to the optima.
To reduce the computational cost, we modify the 
Gauss-Newton method by actualizing the Jacobian matrix
only at some steps of the iteration, when a re-evaluation is
required to decrease the residual error.
The residual error $E(q)$ obtained with this numerical method
drops to $3.10^{-10}$ or less, as shown in figure \ref{errors} (full line).
At the end of the Gauss-Newton iteration ($k$th iteration), the relative difference
between the last two iterates
$r(q)=\frac{{\| {\{ X_{n}^{(k)}(0) - X_{n}^{(k-1)}(0) \}}_n  \|}_{\infty}}{{\| {\{ X_{n}^{(k)}(0) \}}_n  \|}_{\infty}}$
drops below $4.10^{-7}$ (see figure \ref{errors}, dash-dot line).

\begin{figure}[!h]
\psfrag{q}[0.9]{ $q$}
\psfrag{E}[1][Bl]{Errors}
\begin{center}
\includegraphics[scale=0.4]{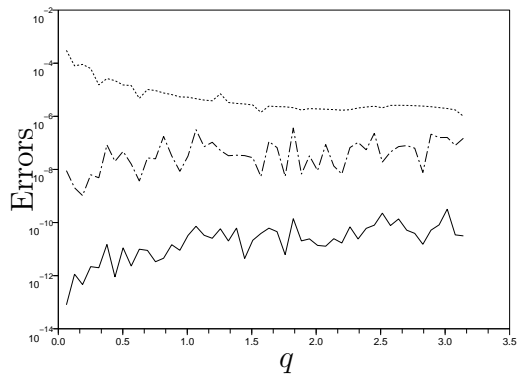}
\end{center}
\caption{\label{errors} 
Graphs of the relative errors $E(q)$ (semi-logarithmic scale),
for the initial condition computed by discretizing (\ref{ad})-(\ref{pbc}) (dotted line) 
and the refined initial condition computed with the Gauss-Newton method
(full line). The dash-dot line gives the relative difference $r(q)$
(in supremum norm) between last two iterates at the end of the 
Gauss-Newton iteration.
}
\end{figure}

\ve

Above a critical value $q_c \approx 0.9$, the initial conditions computed as
indicated above yield travelling waves that 
remain practically unchanged over very long times
when propagating along the lattice (see figure \ref{stabletw} for an example). 
The situation changes drastically below $q_c$ because the travelling waves
become unstable, yielding an amplification at exponential rate
of the initially small errors made on the initial condition. 
This situation is described in figure \ref{unstabletw} for $q=7\pi / 25$.
The travelling wave is destroyed by a instability at $t \approx 360 \approx 303\, \mathcal{T}$,
which rapidly drives the system into a disordered regime.
These instabilities will be described in more detail in the next section.

\begin{figure}[!h]
\psfrag{n}[0.9]{ $n$}
\psfrag{x}[1][Bl]{ $x_n (t)$}
\psfrag{t}[0.9]{ $t$}
\begin{center}
\includegraphics[scale=0.5]{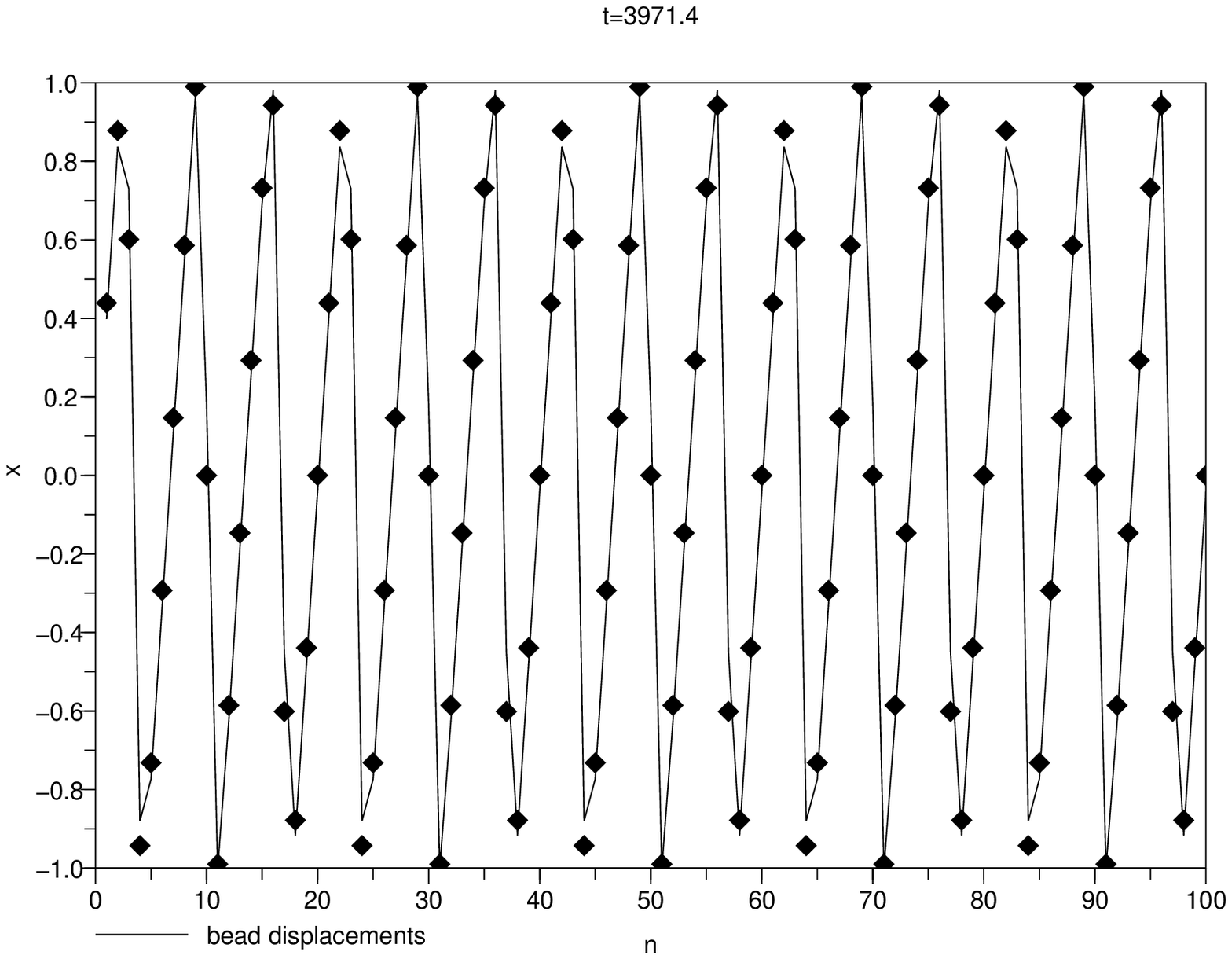}
\includegraphics[scale=0.5]{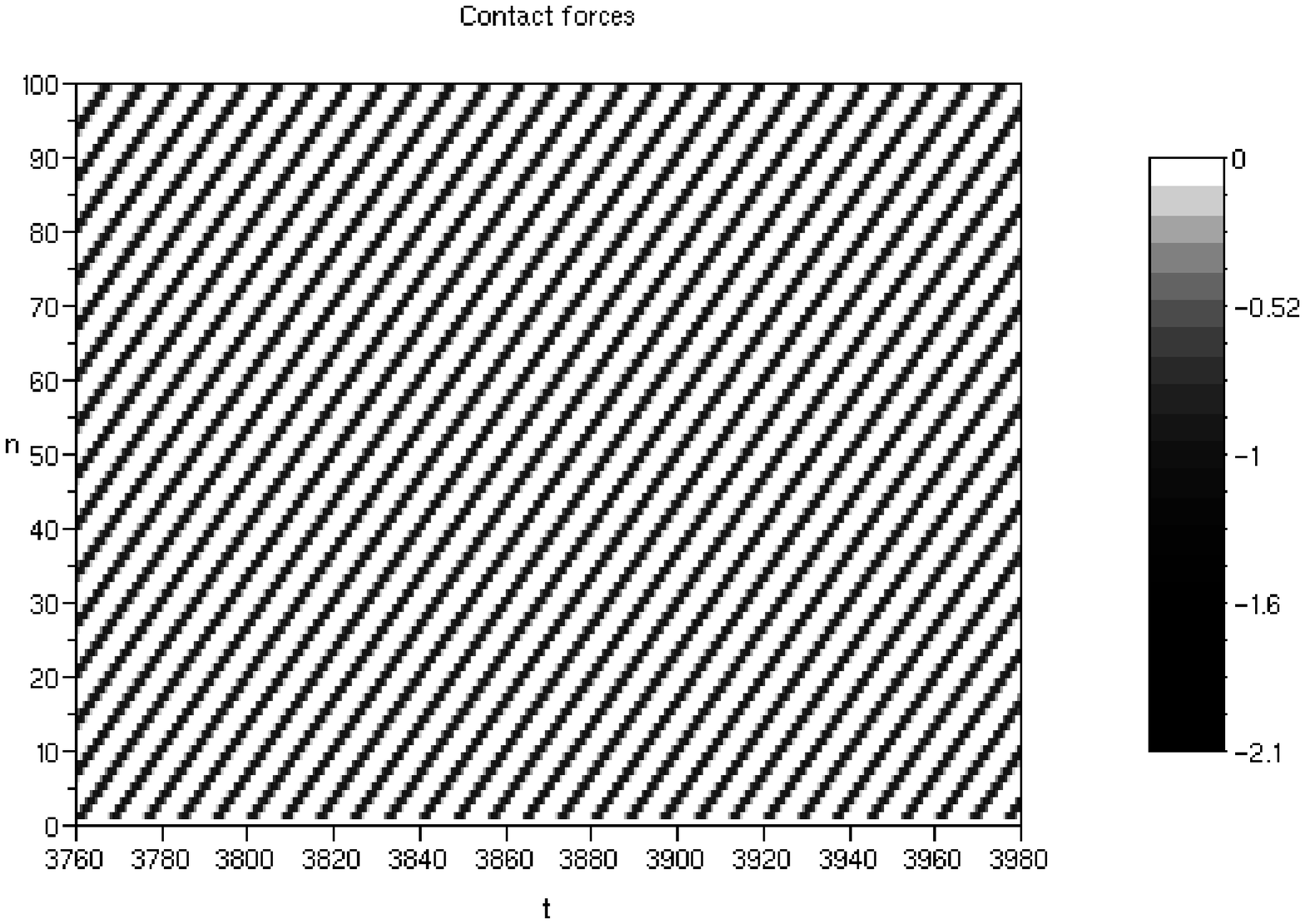}
\end{center}
\caption{\label{stabletw} 
Solution of (\ref{nc}) computed over long times, for
the initial condition determined by Gauss-Newton minimization when
$q=3\pi / 10 \approx 0.94$. The inverse wave velocity is $\mathcal{T} \approx 1.20$.
Upper plot~: bead displacements generated by the travelling wave at $t\approx 3971$ (black curve), compared with
the initial condition at $t=0$ (dots). Lower plot~: spatiotemporal evolution of the interaction forces 
$V^\prime (x_{n+1}(t)-x_n(t))$ in grey levels, for $t\in [3760,3980]$.
}
\end{figure}

\begin{figure}[!h]
\psfrag{n}[0.9]{ $n$}
\psfrag{x}[1][Bl]{ $x_n (0)$}
\psfrag{t}[0.9]{ $t$}
\begin{center}
\includegraphics[scale=0.5]{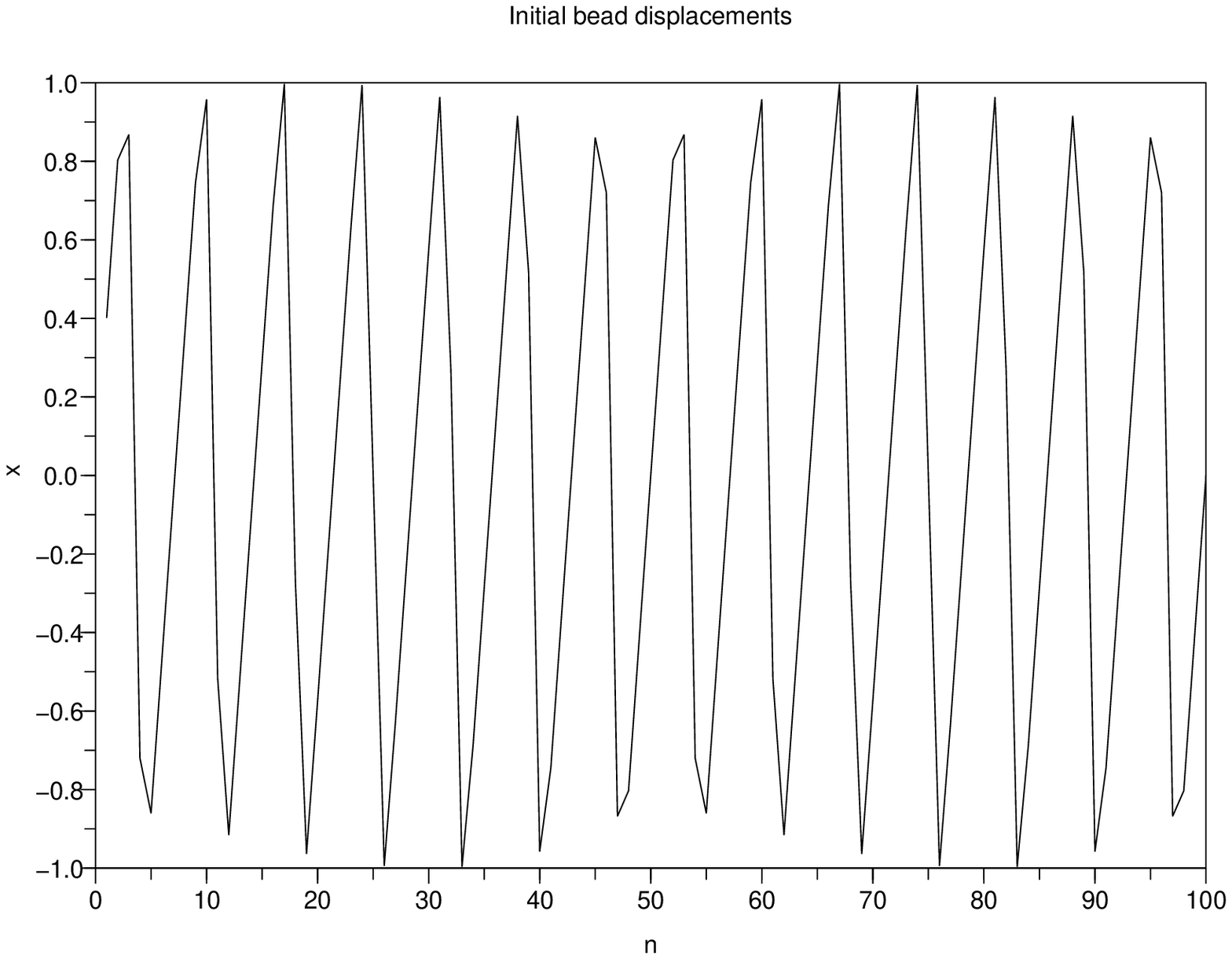}
\includegraphics[scale=0.5]{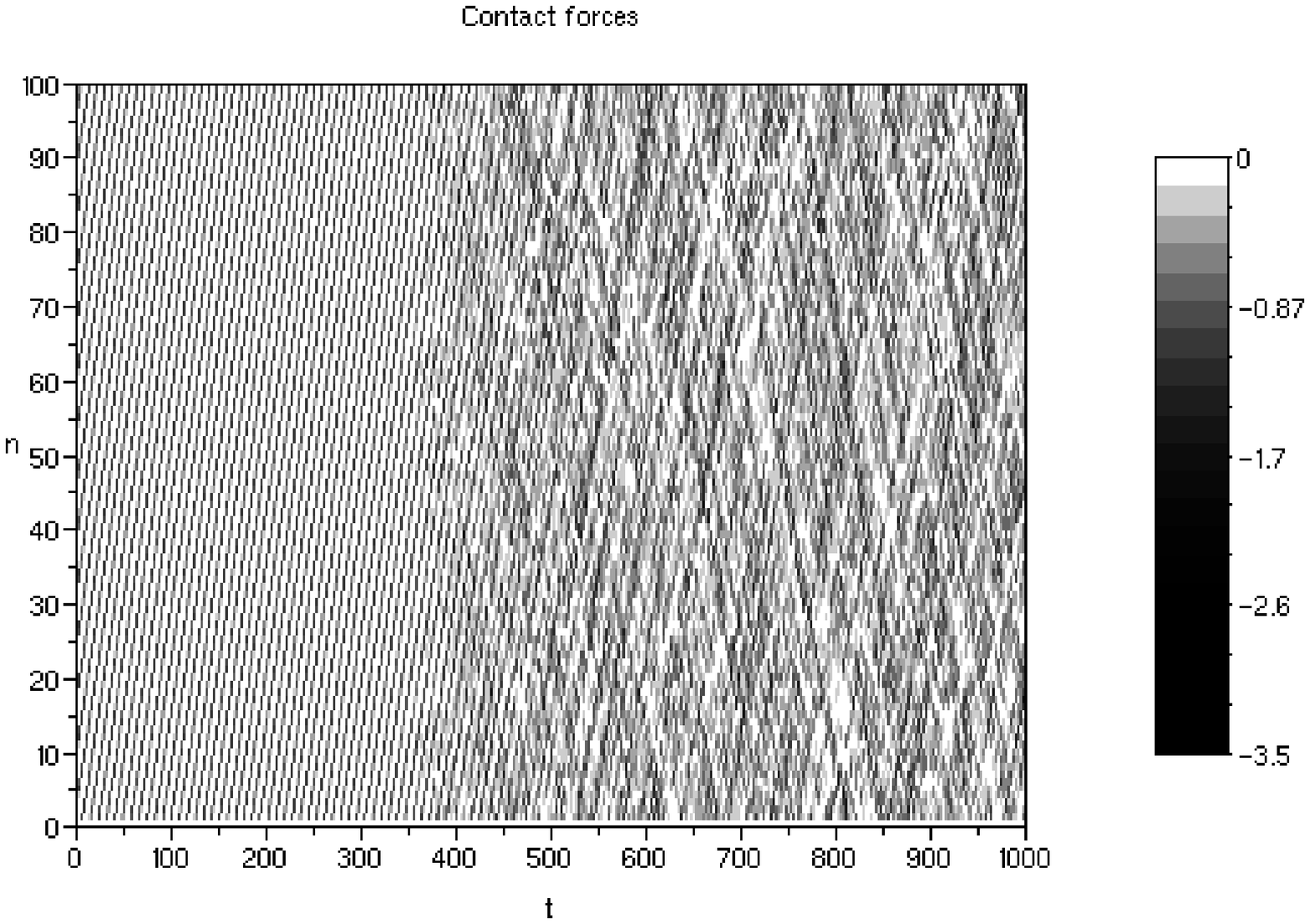}
\end{center}
\caption{\label{unstabletw}
Upper plot : initial bead displacements corresponding to
the numerical travelling wave solution with $q=7\pi / 25 \approx 0.88$. The inverse wave velocity is $\mathcal{T} \approx 1.19$.
Lower plot : spatiotemporal evolution of the interaction forces in grey levels.}
\end{figure}

\subsection{\label{dyn}Wave instabilities}

In this section we consider the dynamical equation (\ref{nc}) of the granular chain with 
$N=100$ particles and periodic boundary conditions, and
numerically study the stability of the travelling wave solutions of section \ref{errs}.
We recall that these travelling waves take the form (\ref{twper}),
where $u(.;q)$ is the solution of (\ref{ad})-(\ref{pbc}) computed numerically for a given value of $q$,
and where we fix $a={\| u(.;q)\|}_{\infty}^{-1}$ to renormalize the solution.
We rewrite equation (\ref{nc}) in the compact form $\dot{X}=f(X)$ with
$X(t)=(X_1 (t), \ldots , X_N (t) )^t  \in \mathbb{R}^{2N}$ and
$X_n = (x_n,\dot{x}_{n})$, and denote by $X^{(q)}(t)$
the solutions corresponding to travelling waves.

\ve

Since the solution $X^{(q)}(t)$ is time-periodic with period $\tau (q)= 2\pi a^{\frac{1-\alpha }{2}}$, its stability
can be analyzed using Floquet theory. Let us 
denote by $\mathcal{R}(t;t_0)$ the resolvent matrix of the linearized equation
$\dot{X}=Df(X^{(q)}(t))\, X$ and $\Phi_q = \mathcal{R}( \tau(q);0)$ the associated monodromy matrix.
The solution $X^{(q)}(t)$ is called spectrally stable if all eigenvalues of
${\Phi}_q$ lie on the unit circle, and $X^{(q)}(t)$ is unstable if there exists an eigenvalue of modulus strictly larger than $1$
(see e.g. \cite{chicone,abram}). In other words, the spectral radius $\rho (\Phi_q  )$ determines if $X^{(q)}(t)$ is unstable
($\rho (\Phi_q  ) >1$) or spectrally stable ($\rho (\Phi_q  ) =1$).

\ve

Since $X^{(q)}(t)$ is a periodic travelling wave and periodic boundary conditions are used, 
one can equivalently determine wave stability using a monodromy matrix modulo shifts. This
approach reduces the length of numerical integration significantly at low wavenumbers, since
the linearized equations are integrated over an interval of length $\mathcal{T}(q)=\frac{q}{2\pi } \tau (q)$. Moreover, it
describes more conveniently the growth
of perturbations in a frame moving with the waves in the linear approximation. 

To be more precise,
let us denote by $\mathcal{S}$ the spatial shift
$\mathcal{S}\, X = (X_2 , \ldots , X_N ,X_1 )^t $ associated to periodic boundary conditions.
The initial condition $X^{(q)}(0)$ corresponding to a travelling wave solution is a
fixed point of the map $\mathcal{N}_q$ defined by
$$\mathcal{N}_q ( X(0) )=\mathcal{S}\, X( \mathcal{T}(q)) ,$$
where the inverse wave velocity $\mathcal{T}(q)$ is given by (\ref{invv}).
Let us introduce the monodromy matrix modulo shift 
$\mathcal{F}_q = D\mathcal{N}_q ( X^{(q)}(0)  )$, which takes the form
$$
\mathcal{F}_q = \mathcal{S}\, \mathcal{R}(\mathcal{T}(q) ; 0).
$$ 
Recalling that $q=2m\pi / N$ ($1\leq m \leq N/2$), we have
$\tau (q)= \frac{N}{m}\, \mathcal{T}(q)$. Thanks to the invariance
$\mathcal{S}\, X^{(q)}( t+\mathcal{T}(q))=X^{(q)}(t)$, one has
$\mathcal{S}\, \mathcal{R}(t+\mathcal{T} ; t_0)=\mathcal{R}(t ; t_0 - \mathcal{T})\, \mathcal{S}$,
which implies
$$
\mathcal{F}_q^N = \Phi_q^m.
$$
Consequently one has $\rho (\Phi_q)={\rho ( \mathcal{F}_q)}^{N/m} $, i.e. 
$X^{(q)}(t)$ is unstable for 
$\rho ( \mathcal{F}_q) >1$ and spectrally stable if $\rho ( \mathcal{F}_q) =1$.

\ve

The spectral radius of $\mathcal{F}_q$ is plotted in figure \ref{floquet} as a function of $q$
(these results are obtained using the software package Scilab, for which matrix eigenvalue computations
are based on the Lapack routine DGEEV). These results show that
the travelling waves are unstable below the critical value $q_c \approx 0.9$.
The growth rate of the instability in the linear approximation
is maximal for $q = q_m \approx 0.56$, and decreases 
rapidly when $q \rightarrow 0$ and $q \rightarrow q_c$.
Figure \ref{eigen} shows the evolution of the eigenvalues of $\mathcal{F}_q$ when $q$ is varied 
in this parameter region. One can see that the number of unstable modes rapidly
increases below $q=q_c$.

According to figure \ref{number} we have $N(q_c)\approx 3$, i.e. instabilities show up 
when the average number $N(q)$ of adjacent interacting beads becomes $\geq 3$.
This case corresponds to the maximal number of adjacent interacting beads becoming $\geq 4$.
This phenomenon can be 
intuitively understood through the results of \cite{sv}, 
where the travelling wave stability was numerically established for a three-ball chain
(with fixed center of mass) using Poincar\'e sections. However, as it follows from our numerical results,
the interactions of the three-ball packets with additional beads generate instabilities
as soon as $q < q_c$.

The weakness of the linear instability for $q \approx 0$ is consistent with
the convergence of the travelling waves towards the classical Hertzian solitary wave 
(section \ref{longw}), given the fact that the Hertzian solitary wave appears
structurally robust in dynamical simulations. 

\ve

Now let us describe how these instabilities act on the wave profiles.
When $q$ is slightly below $q_c$, the early stage of the instability generates two large regions
of average compression separated by two large regions of average extension (see figure \ref{unstabletw2}).
This transitory state disappears rapidly and a disordered regime settles, as shown
previously in figure \ref{unstabletw}. When $q$ reaches $q_m$, the large-scale transitory state
is not visible any more and the system directly evolves from the regular travelling wave
towards the disordered regime. The instability of the travelling waves
is observed up to the long wave limit, where it
manifests differently from the instability at $q\approx q_c$. A slow dispersion of the compression pulse occurs
(see figure \ref{shockbis}, top right plot), and subsequent collisions of dispersive waves with the shock generate
a fast instability. A closer view of the transition from dispersive travelling waves
to spatiotemporal disorder is provided by the lower plot of figure \ref{shockbis}.
In these different examples, we interpret
the abrupt transition to a disordered regime as the result of
the large number of unstable modes.

It is interesting to note that the disordered regime that occurs after these instabilities
coexists with some partial order, because transitory large-scale organized structures
appear intermittently. This phenomenon is illustrated by figure \ref{structures} (lower plots). 
These structures seem to result from the interactions of travelling waves 
that coexist with the disordered dynamics. This is shown in the
upper plot of figure \ref{structures}, which provides the spatiotemporal evolution 
of bead displacements in grey levels. The region corresponding to $t\in [0,140]$ 
(nearly uniform at the scale of the figure) corresponds to the propagation of
an almost unperturbed periodic travelling wave with wavenumber
$q=9\pi / 50 \approx q_m$, and spatiotemporal disorder occurs
for $t\geq 140$ after an instability. In this region, 
the presence of stripes reveals
counter-propagating travelling waves.

\begin{figure}[!h]
\psfrag{q}[0.9]{ $q$}
\psfrag{R}[1][Bl]{ $\rho  ( \mathcal{F}_q)$}
\begin{center}
\includegraphics[scale=0.35]{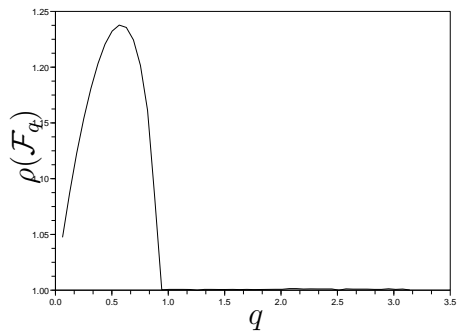}
\end{center}
\caption{\label{floquet} 
Spectral radius of the monodromy matrix modulo shifts $\mathcal{F}_q$.}
\end{figure}

\begin{figure}[!h]
\psfrag{q}[0.9]{ $q$}
\psfrag{R}[1][Bl]{ $\rho  ( \mathcal{F}_q)$}
\begin{center}
\includegraphics[scale=0.35]{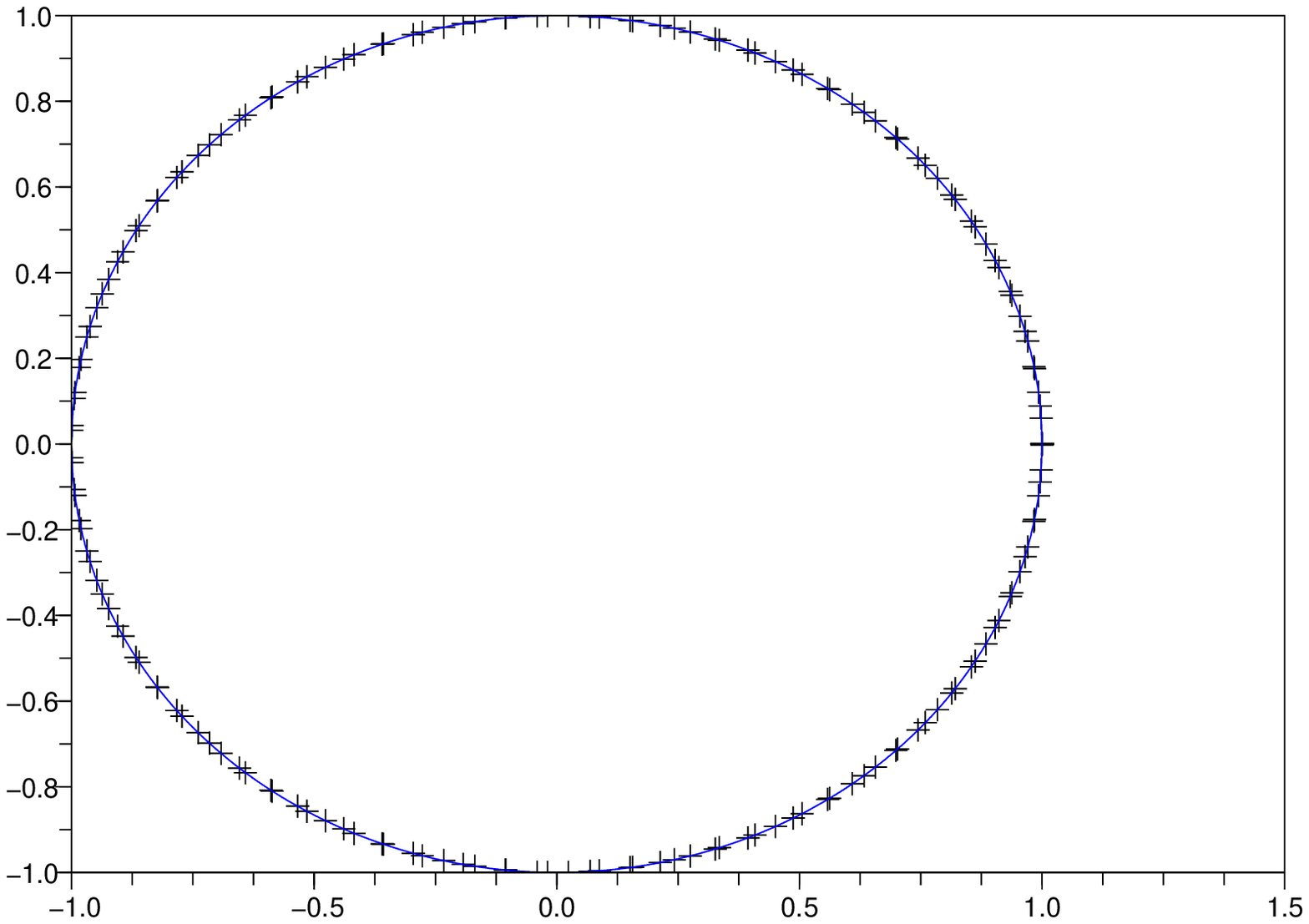}
\includegraphics[scale=0.35]{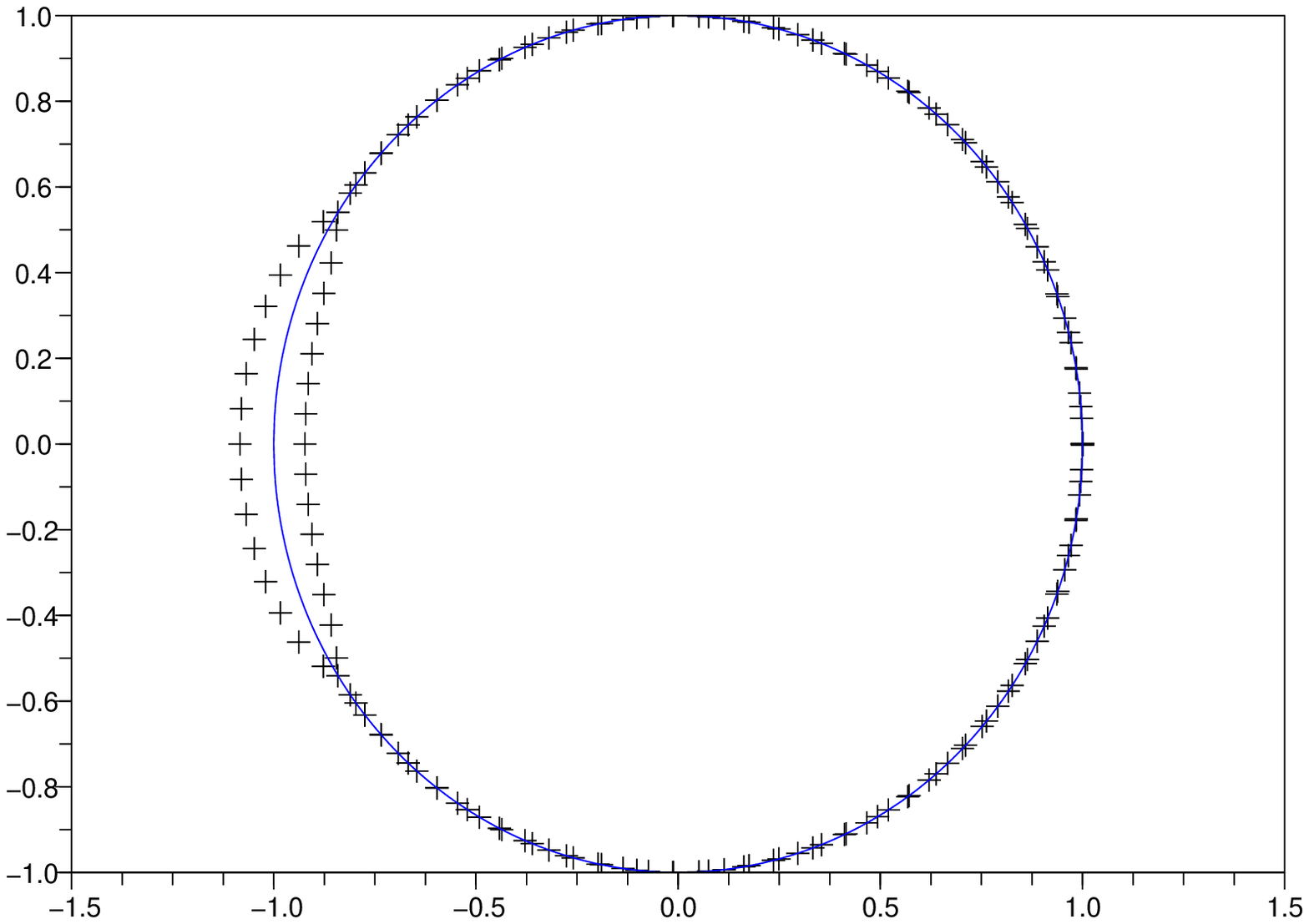}
\includegraphics[scale=0.35]{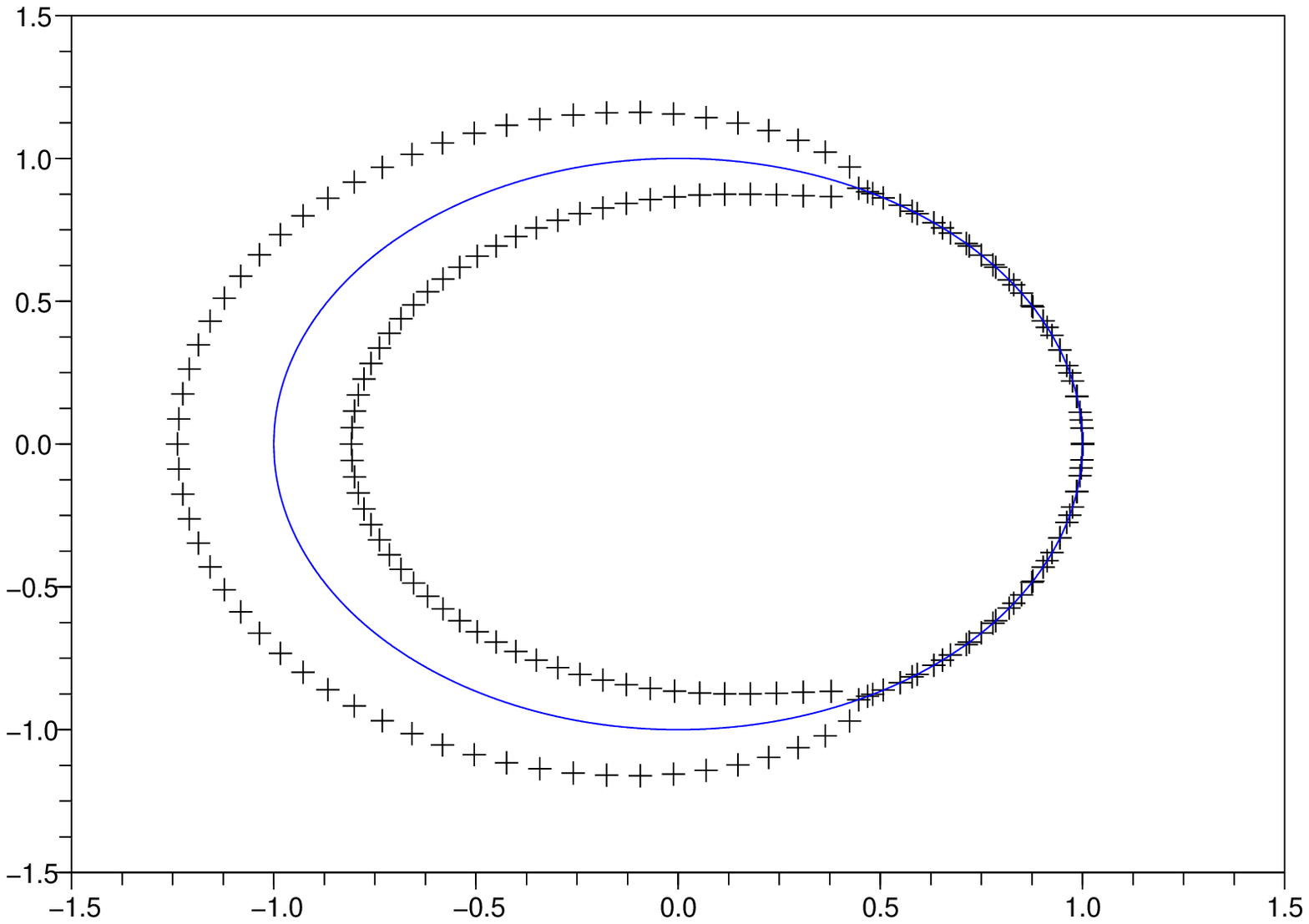}
\includegraphics[scale=0.35]{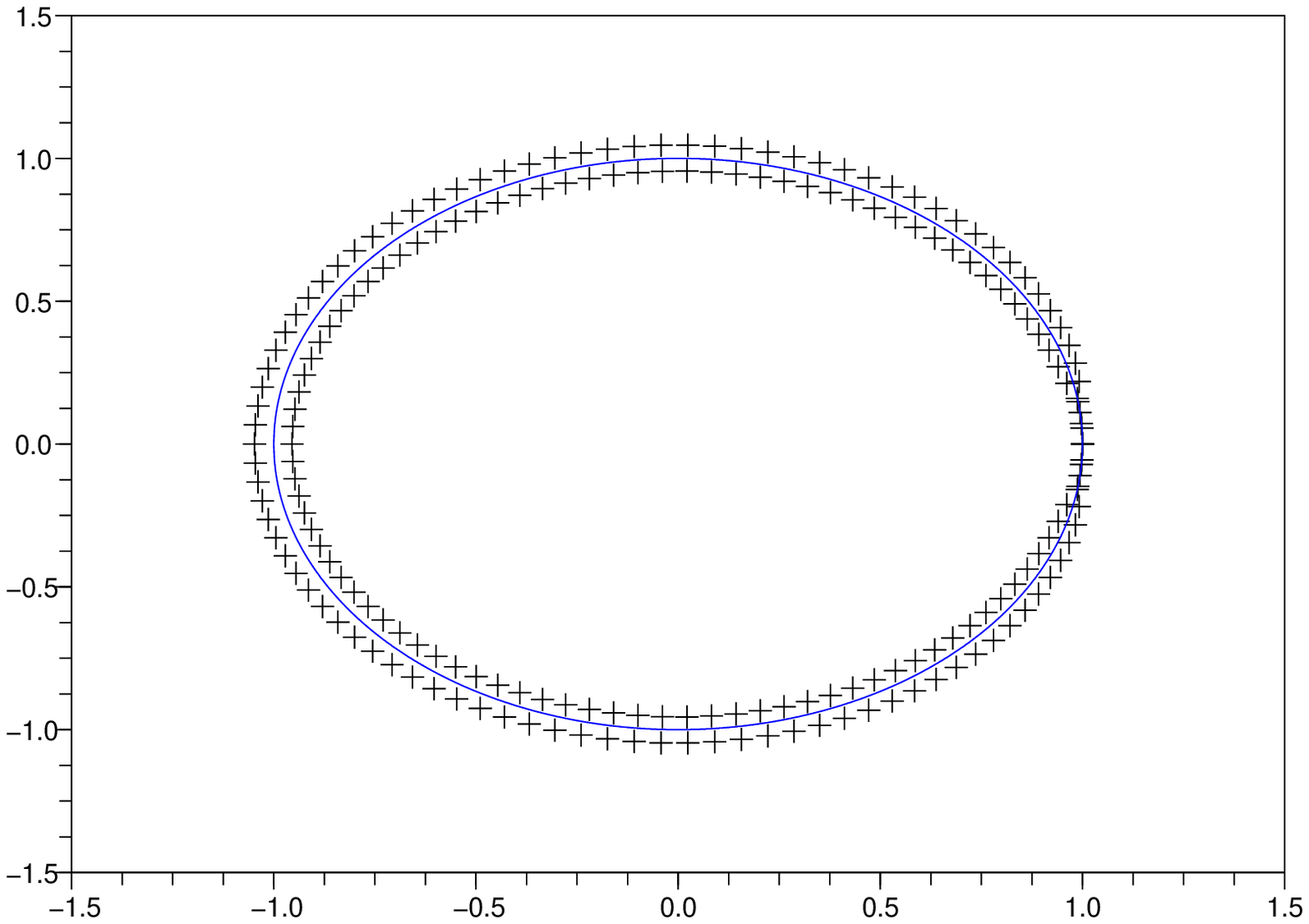}
\end{center}
\caption{\label{eigen} 
Eigenvalues of the monodromy matrix modulo shifts $\mathcal{F}_q$ (marks), plotted in the complex plane for
decreasing values of $q$ :
$q=\frac{3\pi}{10}$ (upper left plot), $q=\frac{7\pi}{25}$ (upper right plot),
$q=\frac{9\pi}{50}$ (lower left plot), $q=\frac{\pi}{50}$ (lower right plot).
The unit circle is also represented.}
\end{figure}

\begin{figure}[!h]
\psfrag{n}[0.9]{ $n$}
\psfrag{x}[1][Bl]{ $x_n (t)$}
\begin{center}
\includegraphics[scale=0.35]{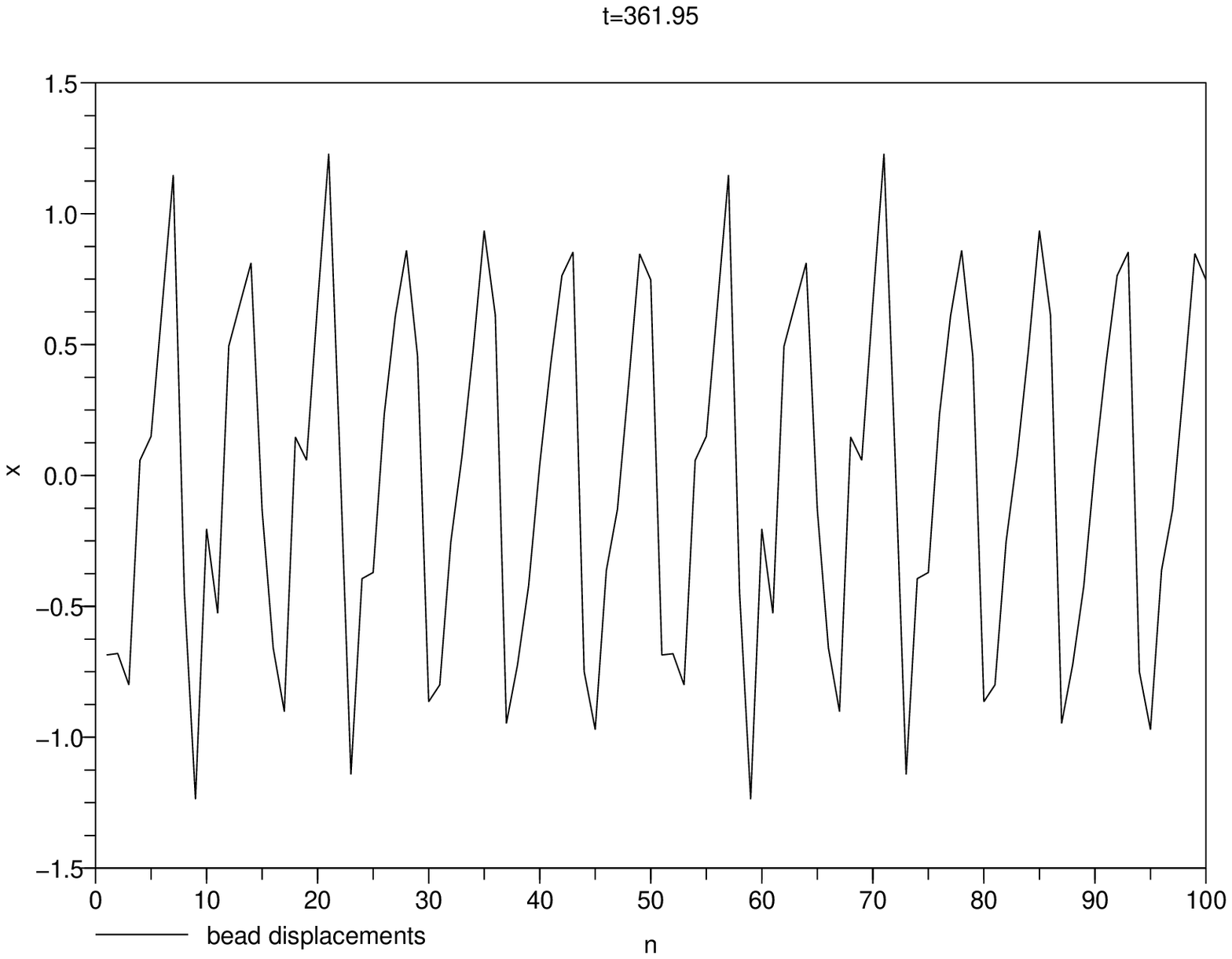}
\includegraphics[scale=0.35]{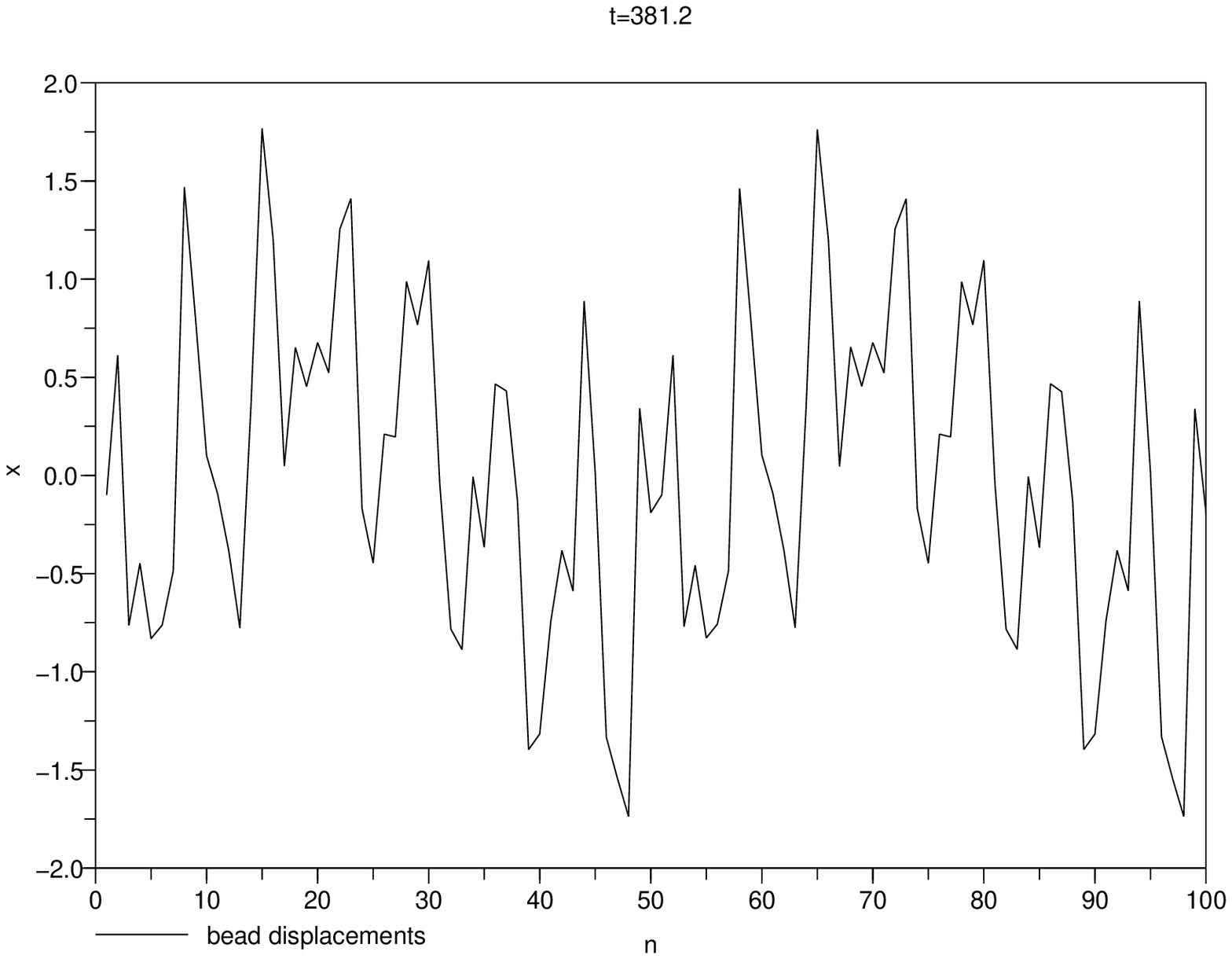}
\includegraphics[scale=0.35]{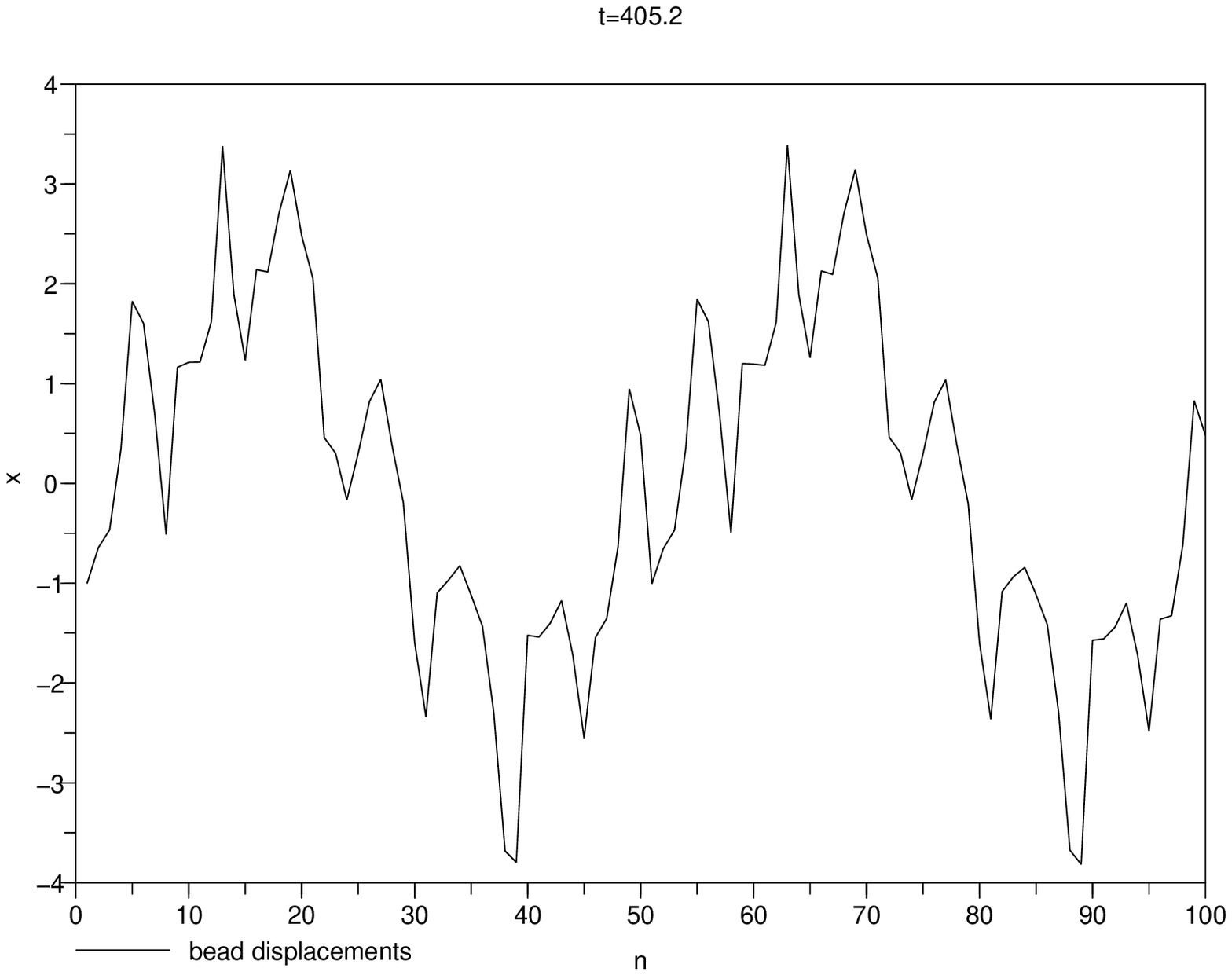}
\end{center}
\caption{\label{unstabletw2}
Bead displacements for the same travelling wave as in figure \ref{unstabletw}
($q=7\pi / 25 \approx 0.88$), showing the growth of an instability
at three different times $t_1 \approx 362$, $t_2 \approx 381$ and $t_3 \approx 405$.}
\end{figure}

\begin{figure}[!h]
\psfrag{n}[0.9]{ $n$}
\psfrag{x}[1][Bl]{ $x_n (t)$}
\psfrag{t}[0.9]{ $t$}
\begin{center}
\includegraphics[scale=0.35]{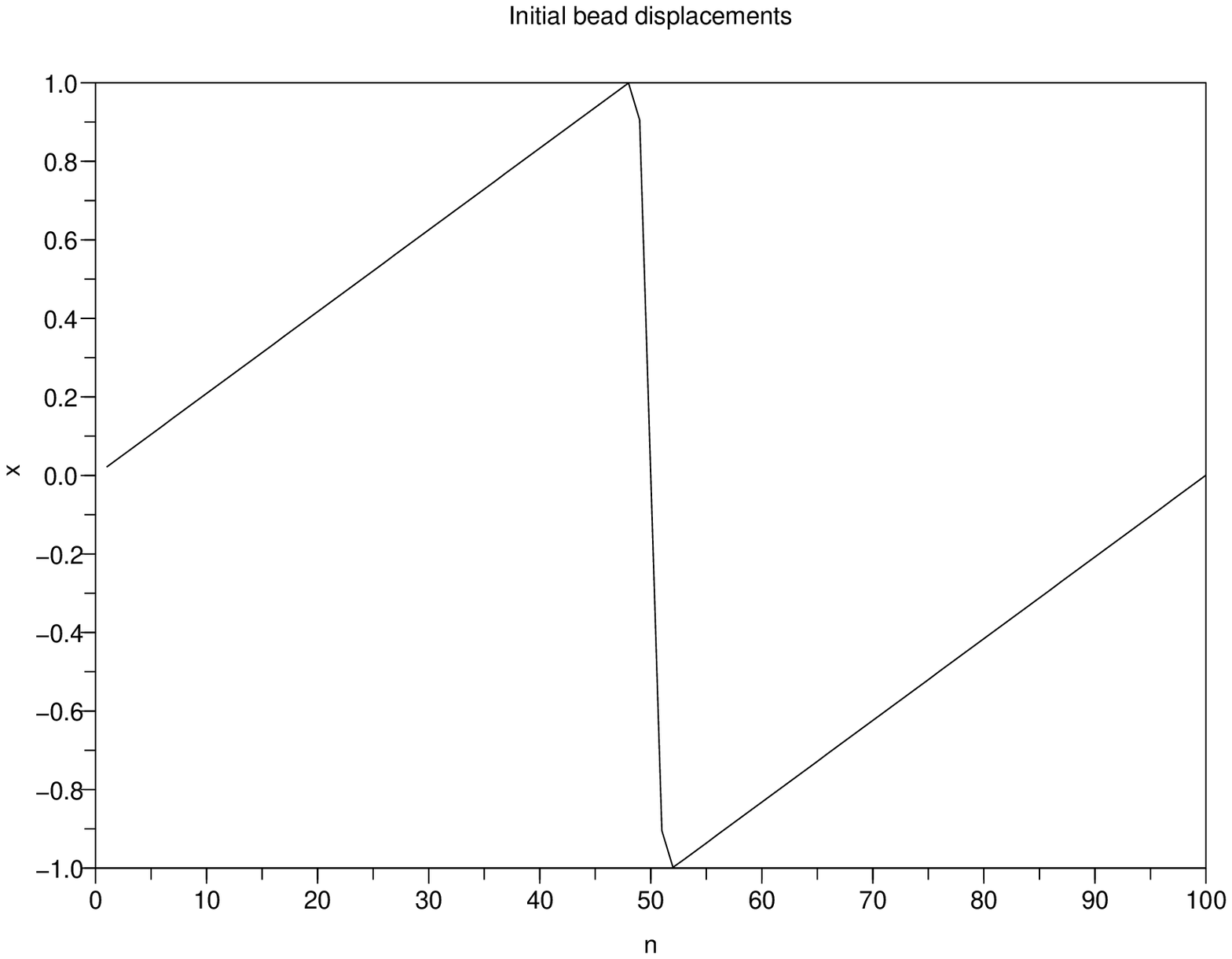}
\includegraphics[scale=0.35]{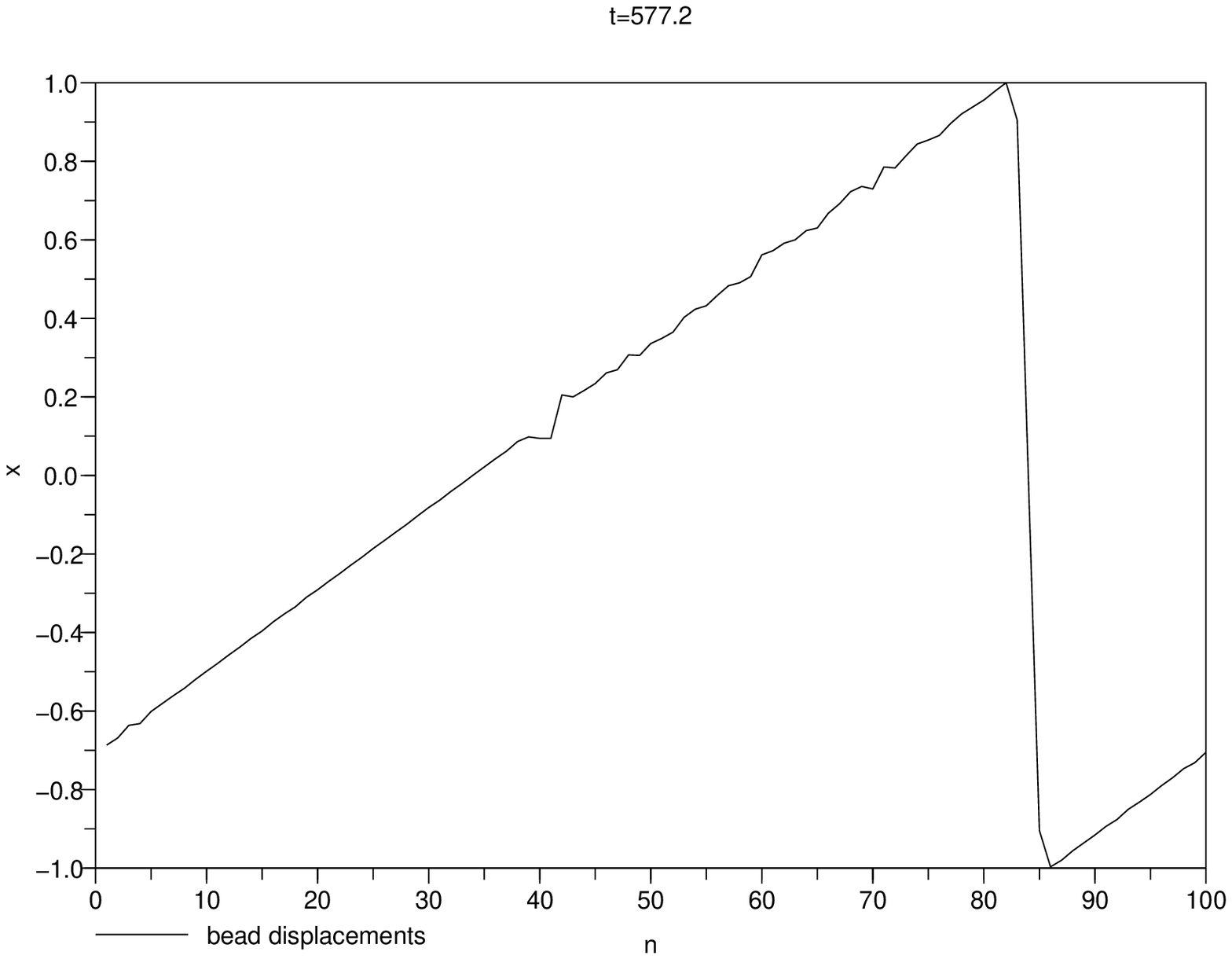}
\includegraphics[scale=0.4]{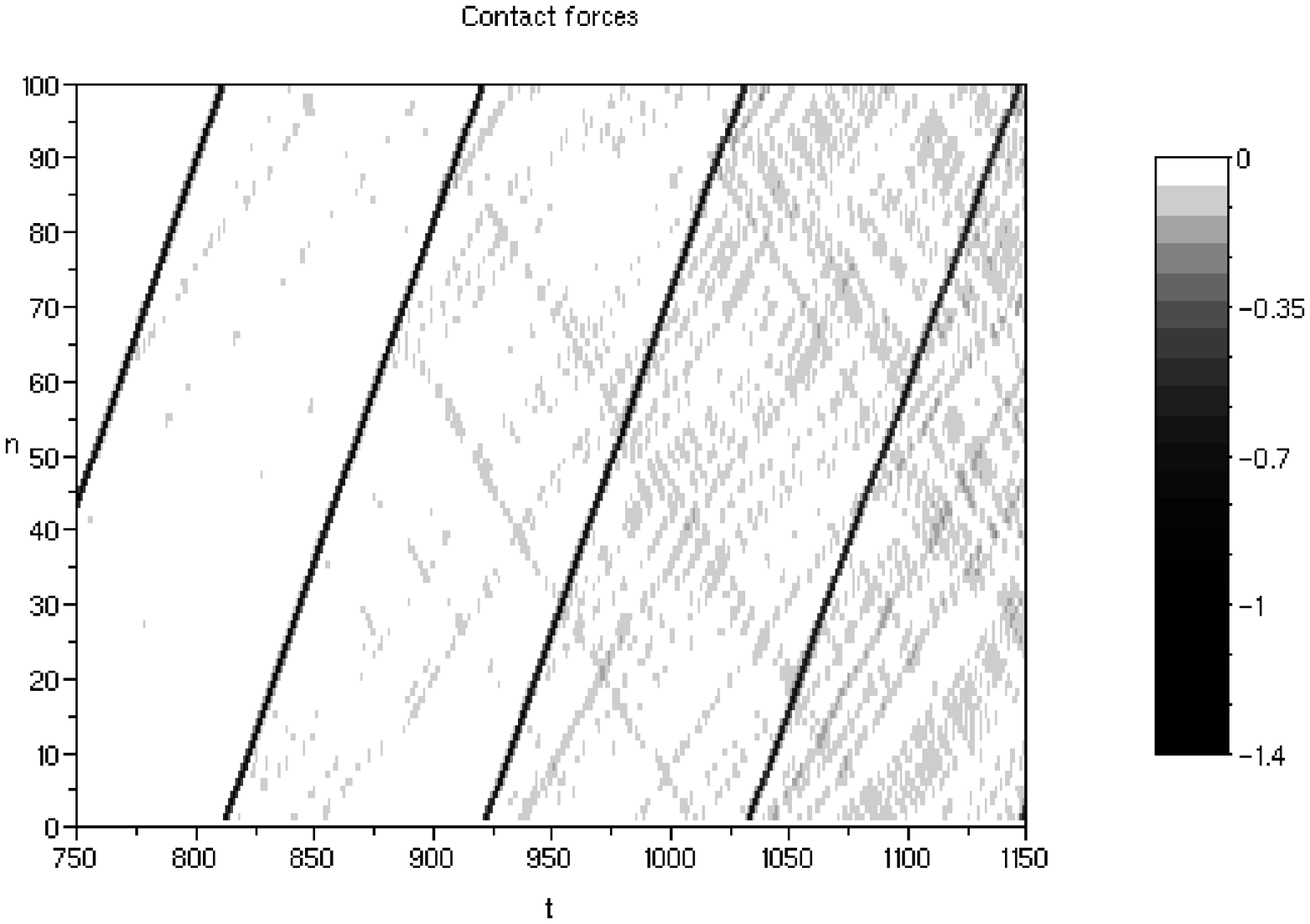}
\end{center}
\caption{\label{shockbis}
Top left plot : initial bead displacements at $t=0$ corresponding to
the travelling wave solution with $q=\pi / 50$. 
Top right plot :  bead displacements at time $t\approx577$
(the inverse wave velocity is $\mathcal{T} \approx 1.08$).
Lower plot : spatiotemporal evolution of the interaction forces in grey levels
for $t\in [750,1150]$, 
near the onset of an instability.
}
\end{figure}

\begin{figure}[!h]
\psfrag{n}[0.9]{ $n$}
\psfrag{x}[1][Bl]{ $x_n (t)$}
\psfrag{t}[0.9]{ $t$}
\begin{center}
\includegraphics[scale=0.4]{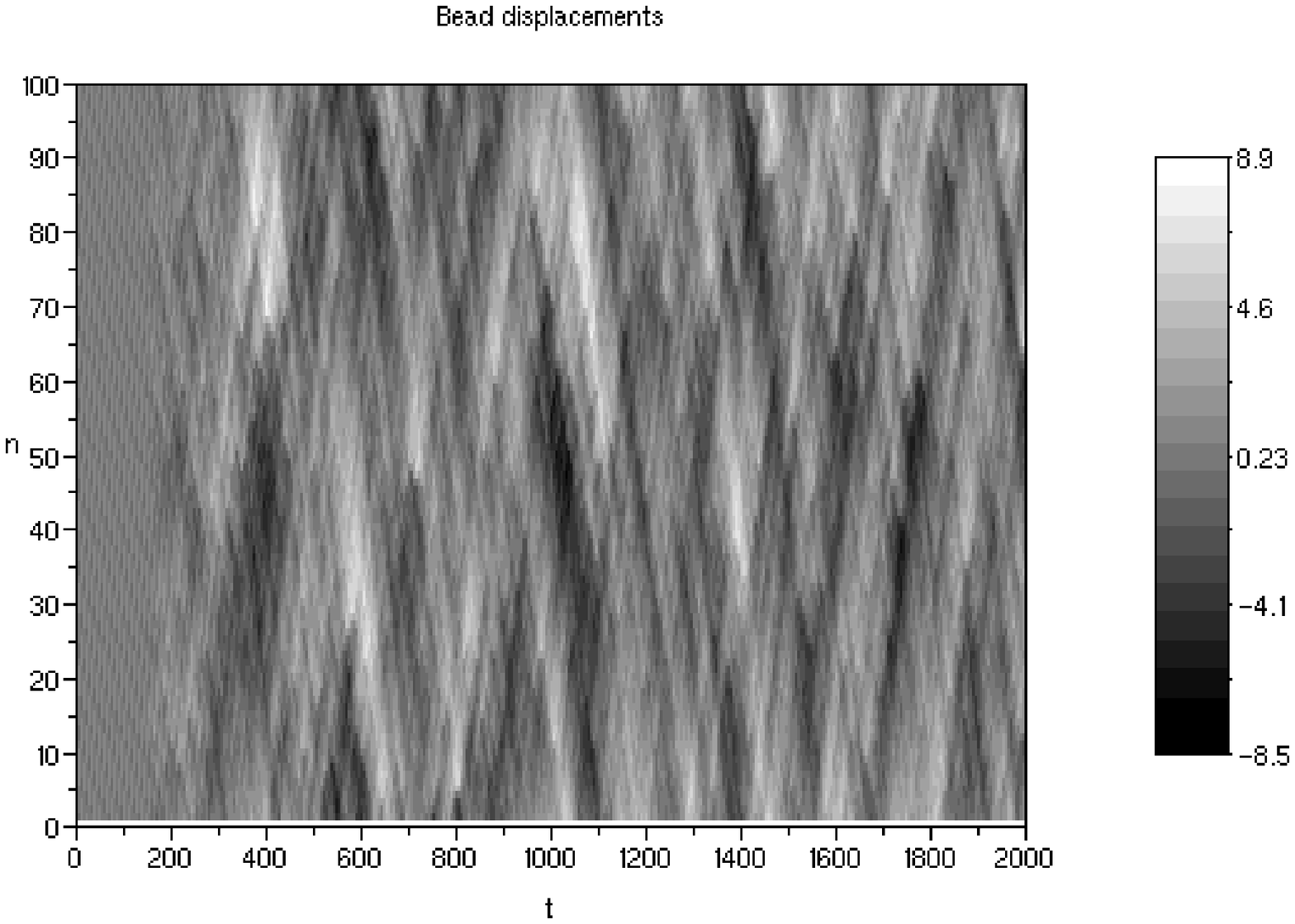}
\includegraphics[scale=0.35]{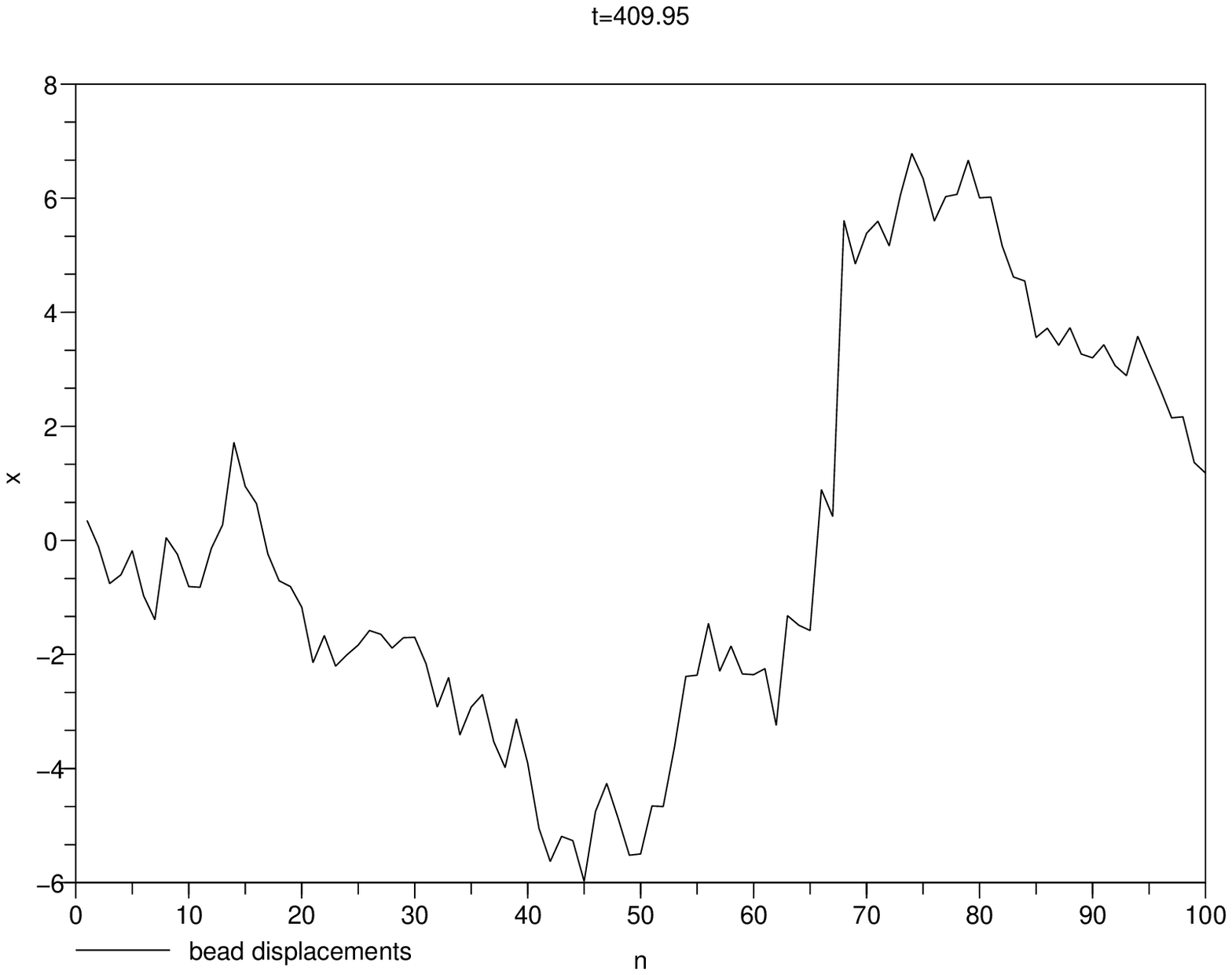}
\includegraphics[scale=0.35]{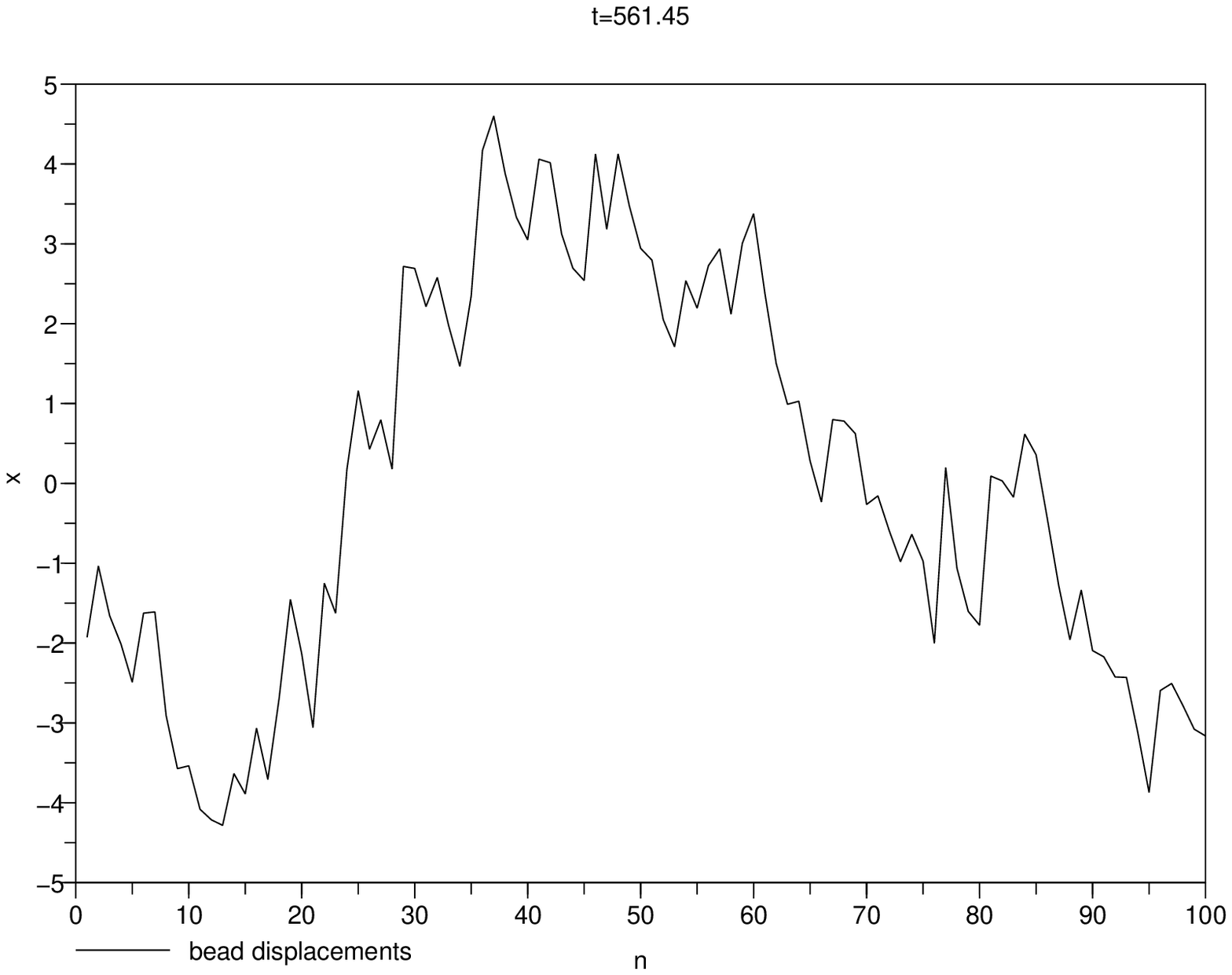}
\end{center}
\caption{\label{structures} 
Case of a highly unstable travelling wave with $q=9\pi / 50 \approx 0.56$ 
(the inverse wave velocity is $\mathcal{T} \approx 1.14$).
Upper plot~:  spatiotemporal evolution of the bead displacements in grey levels.
Lower plots~: bead displacements at $t\approx 410$ (left plot)
and $t\approx 561$ (right plot), revealing an intermittent large-scale organized structure.
}
\end{figure}

Above the critical value $q_c \approx 0.9$,
we have observed very slow modulational instabilities  
by integrating (\ref{nc}), starting from small random perturbations of travelling waves 
with specific wavenumbers. However, the above Floquet analysis is not sufficiently
precise to correctly account for these very small instabilities in the linear regime, 
because they may be overhelmed by numerical errors.

\section{\label{disc}Discussion}

\vspace{1ex}

Even though the Hertzian granular chain is commonly denoted as
a ``sonic vacuum", we have shown that long granular chains
sustain periodic travelling waves on a full range of wavenumbers. 
We have proved the existence of a family of periodic travelling waves 
with wavenumbers $q$ close to $\pi$ and
profiles close to binary oscillations.
Using numerical continuation in $q$, we have been able to follow
this branch of solutions up to the long wave limit $q \approx 0$,
where they become close to a solitary wave inside small compression regions.

\vspace{1ex}

The waves we have obtained display unusual properties, due to the
fully nonlinear and unilateral character of Hertzian interactions. 
Each bead periodically undergoes a compression phase followed by a free flight,
a transition associated with a limited smoothness of the wave profile
(i.e. the corresponding solutions of (\ref{pbnl}) are $C^3$ but not $C^4$
at the onset of free flight). Moreover, 
below a critical value $q=q_k \approx 1.8$, the waves can be considered as
a periodic train of independent compactons separated by beads in free flight.
These numerical findings imply the existence of an isolated compacton
in granular chains, when beads are separated by equal gaps outside the compression wave, the
gaps depending on the velocity and width of the compacton.

An interesting open problem is to prove analytically the existence of
the periodic travelling waves obtained numerically, far from 
the limit of binary oscillations. 
For $q \approx 0$, this problem is equivalent
to the existence of a family of compactons close to the Hertzian solitary wave,
in the limit when the gaps between beads become small. 

\vspace{1ex}

From a dynamical point of view, 
below a critical wavenumber $q_c \approx 0.9$,
we have observed fast instabilities of the periodic travelling waves
leading to a disordered regime. This threshold was attained 
when the average number of adjacent interacting beads becomes $\geq 3$, or
equivalently when the maximal number of adjacent interacting beads becomes $\geq 4$.
The travelling waves appeared far more robust for $q > q_c$, i.e. 
when the number of adjacent interacting beads was always $\leq 3$. This can be 
intuitively understood through the results of \cite{sv}, 
where the travelling wave stability was numerically established for a three-ball chain
with fixed center of mass. However,
as soon as $q < q_c$, we have shown numerically that
the interactions of the three-ball packets with additional beads generate instabilities.
These instabilities persist up to $q \approx 0$, where they display a slower 
growth rate in the linear approximation.

As illustrated by figure \ref{shockbis}, a cumulative instability effect is present due to periodic boundary conditions,
since all perturbations left behind the compression pulse interact subsequently with it.
If the existence of isolated compactons can be mathematically established, 
it would be interesting to analyze their spectral stability and determine if the above
exponential instabilities persist. 
More generally, the spectral stability analysis of periodic travelling waves would deserve
more investigations. Above the critical value $q_c \approx 0.9$,
it would be interesting to determine for which wavenumbers
the waves are stable or slowly unstable. 
This question requires more refined numerical methods to resolve
very slow instabilities present in this parameter regime. 
In addition, unusual perturbations of Floquet eigenvalues may arise from 
the limited smoothness of Hertzian interactions. 

\vspace{1ex}

Another question concerns the generation
of stable periodic travelling waves 
in driven granular chains, in relation with
possible experimental realizations. In principle,
the travelling waves we have analyzed could be
generated in finite systems, provided the motions of
the first and last beads are imposed (following an exact travelling wave profile)
and starting from an exact initial condition. 
However these conditions are extremely restrictive. In addition,
dissipative effects should be also considered for practical applications. 
In this context, it would be interesting to study how to 
generate periodic travelling waves from
simpler initial conditions and driving signals 
in dissipative granular systems. 

\appendix

\section{\label{comprec}Compression solitary waves}

In this appendix we recall some classical properties of solitary waves
in granular chains which are used in section \ref{longw} for the analysis of
the long wave regime.

\vspace{1ex}

Consider travelling wave solutions of (\ref{nc}) taking the form
$x_n (t)=y(s)$, where $s=n-c\, t$. The function $y$ satisfies
\begin{equation}
\label{adresca}
c^2\, y^{\prime\prime} (s ) =
V^\prime (y(s +1)-y(s ))-V^\prime (y(s )-y(s -1)), \ \ \
s \in \mathbb{R},
\end{equation} 
where we recall that $V^\prime (x)= -|x|^{\alpha}\, H(-x)$, $\alpha >1$
and $H$ denotes the Heaviside function.
Up to rescaling $y$, one can fix $c=\pm 1$ in (\ref{adresca}) without loss of generality.
In that case, the renormalized relative displacements
$r(s )= y(s +\frac{1}{2} )-y(s-\frac{1}{2})$ satisfy
\begin{equation}
\label{adr}
r^{\prime\prime} (s ) =
V^\prime (r(s +1))-2V^\prime (r(s ))+V^\prime (r(s -1)), \ \ \
s \in \mathbb{R}.
\end{equation} 
There exists a negative solution of (\ref{adr}) satisfying
$\lim\limits_{s\rightarrow \pm\infty}{r(s)}=0$ and $r^\prime (0)=0$, 
which is known to decay super-exponentially \cite{friesecke,mackay,ji,english,stef}.
This solution corresponds to an exact solitary wave solution of (\ref{nc})
close to Nesterenko's approximate solution, with velocity equal to unity.
Figure \ref{ondesol} shows the solitary wave profile computed numerically, using the same numerical scheme
as for periodic waves (except we change the boundary conditions and
use an explicit approximation of $r$ derived
by Ahnert and Pikovsky \cite{ap} to initialize the Broyden method).
Our results agree with the ones of references \cite{ap,english,stef}, in particular
the solution we obtain is even in $s$. 

To recover $y$ from $r$, we note that the Poisson equation
$$
-y^{\prime\prime}(s)=f(s), \ \ \ s\in \mathbb{R},
$$
admits for all odd functions $f$ decaying exponentially at infinity
a unique odd and bounded solution given by
$y(s)=\int_{\mathbb{R}}{G(s,t)\, f(t)\, dt}$, where
$$G(s,t)=\frac{1}{2}\, H(st)\, (|t+s|-|t-s|).$$
Moreover, one has
\begin{equation}
\label{limites}
\lim_{s\rightarrow \pm\infty}{y(s)}=\pm \int_{0}^{+\infty}{t\, f(t) \, dt},
\end{equation}
the convergence being exponential. 
Equation (\ref{adresca}) with $c=1$ can be rewritten
$$
y^{\prime\prime} (s ) =
V^\prime (r(s+\frac{1}{2}))-
V^\prime (r(s-\frac{1}{2})),
$$
where the right side is odd in $s$ due to the evenness of $r$.
Consequently, we obtain a solution
of (\ref{adresca}) with $c=1$, given by
\begin{equation}
\label{depsol}
y(s)=\int_{\mathbb{R}}{G(s,t)\, [V^\prime (r(t-\frac{1}{2}))-V^\prime (r(t+\frac{1}{2}))]\, dt},
\end{equation}
and satisfying
\begin{equation}
\label{bcresca}
\lim_{s\rightarrow - \infty}{y(s)}= k_\alpha, \ \ \
\lim_{s\rightarrow + \infty}{y(s)}=- k_\alpha ,
\end{equation} 
where we have
\begin{equation}
\label{kal}
k_\alpha = -\int_{0}^{+\infty}{V^\prime (r(s)) \, ds}
\end{equation}
by virtue of (\ref{limites})
(this simplification is obtained using the evenness of $r$ and
elementary changes of variables in the integral).  
Note that a simpler formula can be derived for the
numerical computation of $y$, using the fact that
$$
y(s)=y(s-N-1)+\sum_{k=0}^{N}{r(s-k-\frac{1}{2})}.
$$
Letting $N\rightarrow +\infty$ and using the evenness of $r$ yields 
$$
y(s)=k_\alpha+\sum_{k=0}^{+\infty}{r(k+\frac{1}{2}-s)} .
$$
Then setting $s=0$ gives
\begin{equation}
\label{kal2}
k_\alpha =- \sum_{k=0}^{+\infty}{r(k+\frac{1}{2})},
\end{equation}
and consequently
\begin{equation}
\label{depsol2}
y(s)=\sum_{k=0}^{+\infty}{r(k+\frac{1}{2}-s)-r(k+\frac{1}{2})} .
\end{equation}
Using formula (\ref{kal2}) we numerically obtain $k_{3/2}\approx 1.3567$.

\begin{figure}[!h]
\psfrag{x}[0.9]{ $s$}
\psfrag{r}[1][Bl]{ $r(s )$}
\psfrag{u}[1][Bl]{ $y(s )$}
\begin{center}
\includegraphics[scale=0.4]{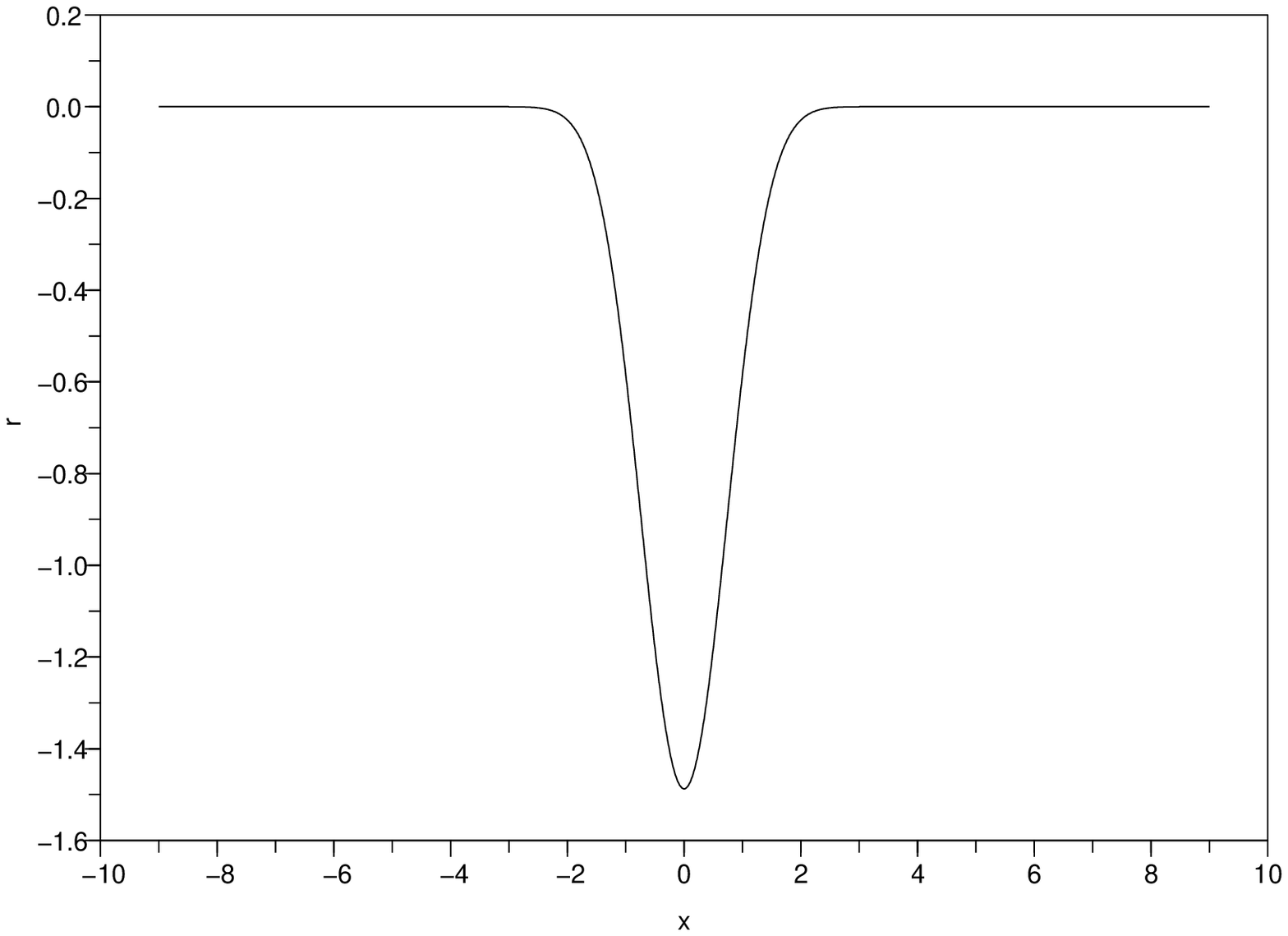}
\includegraphics[scale=0.4]{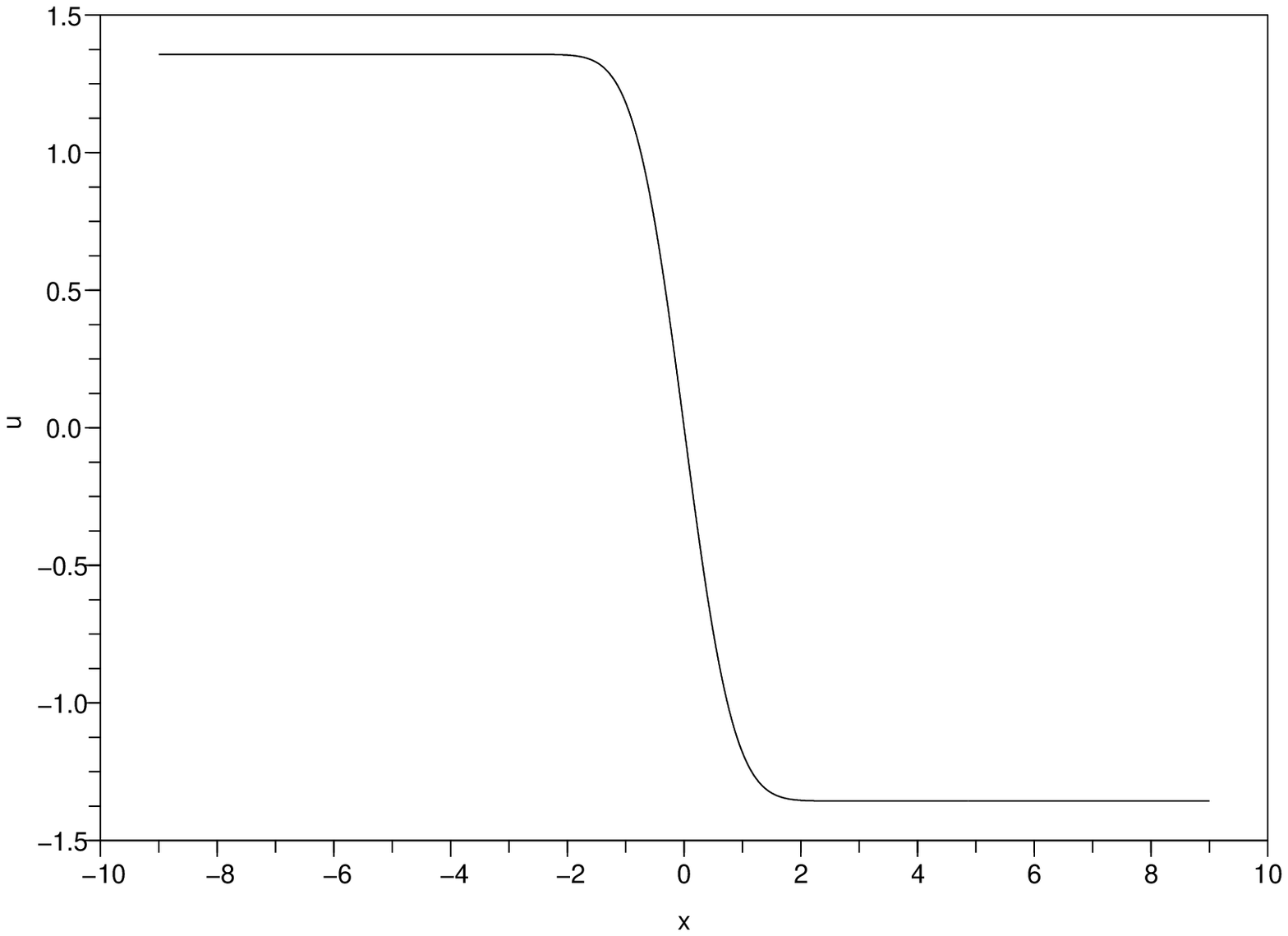}
\end{center}
\caption{\label{ondesol} 
Solitary wave solution of (\ref{adr}) 
computed numerically (upper plot) and corresponding bead displacement $y(s)$
solution of (\ref{adresca}) for $c=1$ (lower plot).
}
\end{figure}

\vspace{1ex}

\noindent
{\it Acknowledgements:} 
The author is grateful to anonymous referees for several suggestions
which essentially improved the paper, in particular for pointing out
the question of the existence of compactons. Helpful discussions with
J. Malick, B.~Brogliato, A.R. Champneys and P.G. Kevrekidis
are also acknowledged.

\end{document}